\documentclass[a4paper,UKenglish,cleveref, autoref, thm-restate]{lipics-v2019}
\bibliographystyle{plainurl}

\title{Bounding the Mim-Width of Hereditary Graph Classes}

\titlerunning{Bounding the Mim-Width of Hereditary Graph Classes}

\author{Nick Brettell}{School of Mathematics and Statistics, Victoria University of Wellington, New Zealand}{nick.brettell@vuw.ac.nz}{https://orcid.org/0000-0002-1136-418X}{}

\author{Jake Horsfield}{School of Computing, University of Leeds, Leeds, UK}{sc15jh@leeds.ac.uk}{https://orcid.org/0000-0002-4388-5123}{}

\author{Andrea Munaro}{School of Mathematics and Physics, Queen's University Belfast, UK}{a.munaro@qub.ac.uk}{https://orcid.org/0000-0003-1509-8832}{}

\author{Giacomo Paesani}{Department of Computer Science, Durham University, UK}{giacomo.paesani@durham.ac.uk}{https://orcid.org/0000-0002-2383-1339}{}

\author{Dani\"el Paulusma}{Department of Computer Science, Durham University, UK}{daniel.paulusma@durham.ac.uk}{https://orcid.org/0000-0001-5945-9287}{}

\authorrunning{N. Brettell, J. Horsfield, A. Munaro, G. Paesani, and D. Paulusma}

\Copyright{Nick Brettell, Jake Horsfield, Andrea Munaro, Giacomo Paesani, and Dani\"el Paulusma}

\ccsdesc[100]{Theory of computation → Graph algorithms analysis}

\keywords{Width parameter, mim-width, clique-width, hereditary graph class}

\funding{The research in this paper received support from the Leverhulme Trust (RPG-2016-258).}

\acknowledgements{}

\nolinenumbers 

\hideLIPIcs

\usepackage{amsmath,amssymb,amsthm,xspace,xcolor,tikz,graphicx}
\usetikzlibrary{fit,automata,arrows,positioning}

\newtheorem{open}{Open Problem}

\interfootnotelinepenalty=10000

\newcommand{\NP}{{\sf NP}}
\newcommand{\XP}{{\sf XP}}

\newcommand{\W}{{\sf W}}
\newcommand{\cutmim}{\mathrm{cutmim}}
\newcommand{\mimw}{\mathrm{mimw}}

\newcommand{\tw}{\mathrm{tw}}
\newcommand{\net}{{\sf net}}
\newcommand{\bull}{{\sf bull}}
\renewcommand{\diamond}{{\sf diamond}}
\newcommand{\gem}{{\sf gem}}
\newcommand{\paw}{{\sf paw}}
\newcommand{\hammer}{{\sf hammer}}
\renewcommand{\bowtie}{{\sf bowtie}}
\newcommand{\sun}{{\sf sun}}
\newcommand{\ssi}{\subseteq_i}
\newcommand{\si}{\supseteq_i}

\begin{document}

\maketitle

\begin{abstract} 
A large number of \NP-hard graph problems are solvable in \XP\ time when parameterized by some width parameter.  Hence, when solving problems on special graph classes, it is helpful to know if the graph class under consideration has bounded width.  In this paper we consider mim-width,  a particularly general width parameter that has a number of algorithmic applications whenever a decomposition is ``quickly computable'' for the graph class under consideration.

We start by extending the toolkit for proving (un)boundedness of mim-width of graph classes.  By combining our new techniques with known ones  we then initiate a systematic study into bounding mim-width from the perspective of hereditary graph classes, and make a comparison with clique-width, a more restrictive width parameter that has been well studied.

We prove that for a given graph $H$, the class of $H$-free graphs has bounded mim-width if and only if it has bounded clique-width. We show that the same is not true for $(H_1,H_2)$-free graphs.

We identify several general classes of $(H_1,H_2)$-free graphs having unbounded clique-width, but bounded mim-width; moreover, we show that a branch decomposition of constant mim-width can be found in polynomial time for these classes.  Hence, these results have algorithmic implications: when the input is restricted to such a class of $(H_1,H_2)$-free graphs, many problems become polynomial-time solvable, including classical problems such as $k$-{\sc Colouring} and {\sc Independent Set}, domination-type problems known as LC-VSVP problems, and distance versions of LC-VSVP problems, to name just a few.
We also prove a number of new results showing that, for certain $H_1$ and $H_2$, the class of $(H_1,H_2)$-free graphs has unbounded mim-width.

Boundedness of clique-width implies boundedness of mim-width. By combining our results with the known bounded cases for clique-width, we present summary theorems of the current state of the art for the boundedness of mim-width for $(H_1,H_2)$-free graphs. In particular, we classify the mim-width of $(H_1,H_2)$-free graphs for all pairs $(H_1,H_2)$ with $|V(H_1)| + |V(H_2)| \le 8$.  When $H_1$ and $H_2$ are connected graphs, we classify all pairs $(H_1,H_2)$ except for one remaining infinite family and a few isolated cases.
\end{abstract}

\section{Introduction}
\label{sec:intro}

Many computationally hard graph problems can be solved efficiently after placing appropriate restrictions on the input graph. Instead of trying to solve individual problems in an ad hoc way, one may aim to find the underlying reasons why some sets of problems behave better on certain graph classes than other sets of problems. The ultimate goal in this type of research is to obtain complexity dichotomies for large families of graph problems. Such dichotomies tell us for which graph classes a certain problem or set of problems can or cannot be solved efficiently (under standard complexity assumptions).

One reason that might explain the jump from computational hardness to tractability after restricting the input to some graph class~${\cal G}$ is that ${\cal G}$ has bounded ``width'', that is, every graph in ${\cal G}$ has width at most~$c$ for some constant $c$.
One can define the notion of ``width'' in many different ways (see the surveys~\cite{GHOS08,Gu17,KLM09,Va12}).
As such, the various width parameters differ in strength. 
To explain this, we say that a width parameter~$p$ {\it dominates} a width parameter~$q$ if there is 
a function~$f$ such that $p(G)\leq f(q(G))$ for all graphs~$G$.
If~$p$ dominates~$q$ but $q$ does not dominate $p$, then~$p$ is said to be {\it more powerful} than~$q$.
As a consequence, proving that a problem is polynomial-time solvable for graph classes for which $p$ is bounded yields more tractable graph
classes than doing this for graph classes for which $q$ is bounded.
If both $p$ and $q$ dominate each other, then~$p$ and~$q$ are {\it equivalent}.
For instance, the width parameters boolean-width, clique-width, module-width, NLC-width and rank-width are all equivalent~\cite{BTV11,Jo98,OS06,Ra08}, but
more powerful than the equivalent parameters branch-width and treewidth~\cite{CO00,RS91,Va12}.
In this paper we focus on an even more powerful width parameter called {\it mim-width} (maximum induced matching width).  Vatshelle~\cite{Va12} introduced mim-width, which we define in Section~\ref{s-mim}, and proved that mim-width is more powerful than boolean-width, and consequently, clique-width, module-width, NLC-width and rank-width. 

\subsection{Algorithmic Implications}\label{sec:algimplications}

One trade-off of a more powerful width parameter is the difficulty in obtaining a branch decomposition of bounded width.
In general, computing mim-width is \NP-hard; deciding if the mim-width is at most~$k$ is \W$[1]$-hard when parameterized by $k$; and there is no polynomial-time algorithm for approximating the mim-width of a graph to within a constant factor of the optimal, unless $\mathsf{NP} = \mathsf{ZPP}$~\cite{SV16}.
Hence, in contrast to algorithms for graphs of bounded treewidth or clique-width, algorithms for graphs of bounded mim-width require a branch decomposition of constant mim-width as part of the input.
On the other hand, there are many interesting graph classes for which mim-width is bounded and {\it quickly computable}, that is, the class admits a polynomial-time algorithm for obtaining a branch decomposition of constant mim-width. We give examples of such graph classes known in the literature in Section~\ref{s-mimspecial} before discussing the new graph classes we found in Section~\ref{s-new}. Below we briefly discuss known algorithms for problems on graph classes of bounded mim-width.

Belmonte and Vatshelle~\cite{BV13} and  Bui-Xuan, Telle and Vatshelle~\cite{BTV13} proved that a large set of problems, known as  Locally Checkable Vertex Subset and Vertex Partitioning (LC-VSVP) problems~\cite{PT97}, can be solved in polynomial time
for graph classes where mim-width is bounded and quickly computable.
Well-known examples of such problems include {\sc (Total) Dominating Set}, {\sc Independent Set} and $k$-{\sc Colouring} for every fixed positive
integer~$k$.\footnote{In contrast to clique-width~\cite{KR03}, {\sc Colouring} (where $k$ is part of the input) is \NP-complete for graphs of bounded mim-width, as it is \NP-complete for circular-arc graphs~\cite{GJMP80}, which have mim-width at most~$2$~\cite{BV13}.} 
Later, Fomin, Golovach and Raymond~\cite{FGR18} proved that the \XP\ algorithms for {\sc Independent Set} and {\sc Dominating Set} are in a sense best possible, showing that these two problems are \W$[1]$-hard when parameterized by mim-width.

On the positive side, \XP\ algorithms parameterized by mim-width are now also known for problems outside the LC-VSVP framework. In particular, Jaffke, Kwon, Str{\o}mme and Telle~\cite{JKST19} proved that the distance versions of LC-VSVP problems can be solved in polynomial time
for graph classes where mim-width is bounded and quickly computable.
Jaffke, Kwon and Telle~\cite{JKT,JKT20} proved similar results for {\sc Longest Induced Path}, {\sc Induced Disjoint Paths}, $H$-{\sc Induced Topological Minor} and {\sc Feedback Vertex Set}.
The latter result has recently been generalized to {\sc Subset Feedback Vertex Set} and {\sc Node Multiway Cut}, by Bergougnoux, Papadopoulos and Telle~\cite{BPT20}.

Bergougnoux and Kant\'e~\cite{BK19} gave a meta-algorithm for problems with a global constraint, providing unifying \XP\ algorithms in mim-width for several of the aforementioned problems, as well as {\sc Connected Dominating Set}, {\sc Node Weighted Steiner Tree}, and {\sc Maximum Induced Tree}.
Galby, Munaro and Ries~\cite{GMR20} proved that {\sc Semitotal Dominating Set} is  polynomial-time solvable for graph classes where mim-width is bounded and quickly computable.

\subsection{Mim-width of Special Graph Classes}\label{s-mimspecial}

Belmonte and Vatshelle~\cite{BV13} proved that the mim-width of the following graph classes is bounded and quickly computable:
permutation graphs,  convex graphs 
and their complements, 
interval graphs
and their complements, 
circular $k$-trapezoid graphs, circular permutation graphs, Dilworth-$k$ graphs, $k$-polygon graphs, circular-arc graphs and complements of $d$-degenerate graphs.

Some of the results of~\cite{BV13} have been extended.
Let $K_r \boxminus K_r$ be the graph obtained from $2K_r$ by adding a perfect matching, and let $K_r \boxminus rP_1$ be the graph obtained from $K_r\boxminus K_r$ by removing all the edges in one of the complete graphs (see Section~\ref{sec:basis} for undefined notation).
Kang et al.~\cite{KKST17} showed that for any integer $r \ge 2$,
there is a polynomial-time algorithm for computing a branch decomposition of mim-width at most $r-1$ when the input is restricted to
$(K_r \boxminus rP_1)$-free chordal graphs, which generalize interval graphs, or $(K_r \boxminus K_r)$-free co-comparability graphs, which generalize permutation graphs. Hence,  in particular, all these classes have bounded mim-width.

Kang et al.~\cite{KKST17} also proved that the classes of chordal graphs, circle graphs and co-comparability graphs have unbounded mim-width; for the latter two classes, this was shown independently by Mengel~\cite{Me18}.
Vatshelle~\cite{Va12} and Brault-Baron et al.~\cite{BCM15} showed the same for grids and chordal bipartite graphs, respectively, whereas Mengel~\cite{Me18} proved that strongly chordal split graphs have unbounded mim-width.

Brettell et al.~\cite{BHP} showed that the mim-width of $(K_r,sP_1+P_5)$-free graphs is bounded and quickly computable for every $r\geq 1$ and $s\geq 0$. In particular, this yielded an alternative proof for showing that {\sc List $k$-Colouring} is polynomially solvable for $(sP_1+P_5)$-free graphs for all $k\geq 1$ and $s\geq 0$~\cite{CGKP15}. Let $K_{1,s}^1$ be the graph obtained from the $(s+1)$-vertex star $K_{1,s}$ after subdividing each edge once; note that $sP_1+P_5$ is an induced subgraph of $K_{1,s+2}^1$. In~\cite{BMP20}, the result of~\cite{BHP} on the mim-width of  $(K_r,sP_1+P_5)$-free graphs was generalized to $(K_r,K_{1,s}^1,P_t)$-free graphs. As a consequence, for all $k\geq 3$, $s\geq 1$ and $t\geq 1$, {\sc List $k$-Colouring} is polynomial-time solvable even for 
$(K_{1,s}^1,P_t)$-free graphs; previously this was shown for $k=3$ by Chudnovsky et al.~\cite{CSZ20}.

Brettell et al.~\cite{BMP20a} considered the following generalisation of convex graphs. A bipartite graph $G=(A,B,E)$ is ${\cal H}$-convex, for some family of graphs ${\cal H}$, if there exists a graph $H\in {\cal H}$ with $V(H)=A$ such that the set of neighbours in $A$ of each $b\in B$ induces a connected subgraph of $H$ (when ${\mathcal H}$ is the set of paths, we obtain exactly convex graphs). They showed that the class of ${\mathcal H}$-convex graphs has bounded and quickly computable mim-width if ${\mathcal H}$ is the set of cycles, or ${\mathcal H}$ is the set of trees with bounded maximum degree and bounded number of vertices of degree at least~$3$.

\subsection{Our Focus} 

We continue the study on boundedness of mim-width and aim to identify more graph classes of bounded or unbounded mim-width. Our motivation is both algorithmic and structural. As discussed above, there are clear algorithmic benefits if a graph class has bounded mim-width.
From a structural point of view, we aim to initiate a {\it systematic} study of the boundedness of mim-width, comparable to a similar, 
long-standing study of the boundedness of clique-width 
(see~\cite{DJP19,Gu17,KLM09} for some surveys on clique-width).

The framework of hereditary graph classes is highly suitable for such a study. 
A graph class ${\cal G}$ is {\it hereditary} if it is closed under vertex deletion. A class ${\cal G}$ is hereditary if and only if there exists a 
(unique) set of graphs ${\cal F}$ of 
(minimal) 
forbidden induced subgraphs for ${\cal G}$. That is, a graph $G$ belongs to ${\cal G}$ if and only if $G$ does not contain any graph from ${\cal F}$ as an induced subgraph. We also say that $G$ is 
{\it ${\cal F}$-free}. 
Note that ${\cal F}$ may have infinite size. For example, if ${\cal G}$ is the class of bipartite graphs, then ${\cal F}$ is the set of all odd cycles. 

As a natural starting point we consider the case where $|{\cal F}|=1$, say ${\cal F}=\{H\}$. It is not difficult to verify that a class of $H$-free graphs has bounded mim-width if and only if it has bounded clique-width if and only if $H$ is an induced subgraph of the $4$-vertex path $P_4$; see Section~\ref{sec:basis2} for details. On the other hand, there exist hereditary graph classes, such as interval graphs and permutation graphs, that have bounded mim-width, even mim-width $1$ \cite{Va12}, but unbounded clique-width \cite{GR00}. However, these graph classes have an {\it infinite} set of forbidden induced subgraphs. Hence, questions we aim to address in this paper are: Does there exist a hereditary graph class characterized by a finite set ${\cal F}$ that has bounded mim-width but unbounded clique-width? Can we use the same techniques as when dealing with clique-width? In particular we focus on the case 
where $|{\cal F}|=2$. Such classes are called {\it bigenic}.

\subsection{Our Results and Methodology}\label{s-new}

In order to work with width parameters it is useful to have a set of graph operations that {\it preserve} boundedness or unboundedness of the width parameter. That is, if we apply such a width-preserving operation, or only apply it a constant number of times, the width of the graph does not change by too much. In this way one might be able to modify an arbitrary graph from a given ``unknown'' class ${\cal G}_1$ into a graph from a class ${\cal G}_2$ known to have bounded or unbounded width. This would then imply that ${\cal G}_1$ also has bounded or unbounded width, respectively. 
Two useful operations preserving clique-width are vertex deletion~\cite{LR04} and subgraph complementation~\cite{KLM09}. The latter operation replaces every edge in some subgraph of the graph by a non-edge, and vice versa.
As we will see in \cref{s-soa}, subgraph complementation does not preserve boundedness or unboundedness of mim-width\footnote{The situation is different for mim-width 1; Vatshelle~\cite{Va12} showed that if $\mimw(G) = 1$ then $\mimw(\overline{G}) = 1$.}.

To work around this limitation, we collect and generalize known mim-width preserving graph operations from the literature in Section~\ref{s-mim} (some of these operations only show that the mim-width cannot decrease after applying them).
In the same section we also state some known useful results on mim-width and prove that elementary graph classes, such as walls and net-walls, have unbounded mim-width.   

In Sections~\ref{sec:bounded} and~\ref{sec:unbounded} we use the results from Section~\ref{s-mim}. In Section~\ref{sec:bounded} we present new bigenic classes of bounded mim-width. These 
graph 
classes are all known to  have unbounded clique-width. Hence, our results show that 
the dichotomy for boundedness of mim-width no longer coincides with the one for clique-width when $|{\cal F}|=2$ instead of $|{\cal F}|=1$. Moreover, for each of these classes, a branch decomposition of constant mim-width is easily computable for any graph in the class.  This immediately implies that there are polynomial-time algorithms for many problems when restricted to these classes, as described in \cref{sec:algimplications}. 
In Section~\ref{sec:unbounded} we present new bigenic classes of unbounded mim-width; these graph classes are known to have unbounded clique-width.

In Section~\ref{s-soa} we give a state-of-the-art summary of our new results combined with known results. The known results include the bigenic graph classes of bounded clique-width (as bounded clique-width implies bounded mim-width). In the same section we compare our results for the mim-width of bigenic graph classes with the ones for clique-width. We also state a number of open problems.

\section{Preliminaries}
\label{sec:basis}

We consider only finite graphs $G=(V,E)$ with no loops and no multiple edges. 
For a vertex $v \in V$, the \textit{neighbourhood} $N(v)$ is the set of vertices adjacent to $v$ 
in $G$.
The \textit{degree} $d(v)$ of a vertex $v \in V$ is the size $|N(v)|$ of its neighbourhood.
A graph is \textit{subcubic} if every vertex has degree at most~$3$.
For disjoint $S,T \subseteq V$, we say that
 $S$ is \textit{complete to} $T$ if every vertex of $S$ is adjacent to every vertex of $T$, and $S$ is \textit{anticomplete to} $T$ if there are no edges between $S$ and $T$.
The \textit{distance} from a vertex $u$ to a vertex $v$ in $G$ is the length of a shortest path between $u$ and $v$. 
A set $S\subseteq V$ {\it induces} the subgraph 
$G[S]=(S,\{uv\; :\; u,v\in S, uv\in E\})$.
If $G'$ is an induced subgraph of $G$ we write $G'\ssi G$. 
The \textit{complement} of $G$ is the graph $\overline{G}$ with vertex set $V(G)$, such that $uv \in E(\overline{G})$ if and only if $uv \notin E(G)$.

Given a graph $G$ and a degree-$k$ vertex $v$ of $G$ with $N(v) = \{u_1,\dots,u_k\}$, the \textit{clique implant on $v$} is the operation of deleting $v$, adding $k$ new vertices $v_1, \dots, v_k$ forming a clique, and adding edges $v_iu_i$ for each $i \in \{1,\dots,k\}$. 
The \textit{$k$-subdivision} of an edge $uv$ in a graph replaces $uv$ by $k$ new vertices $w_{1}, \dots, w_{k}$ with edges $uw_{1}, w_{k}v$ and $w_{i}w_{i+1}$ for each $i \in \{1, \dots, k-1\}$, i.e. the edge is replaced by a path of length $k + 1$. 
The \textit{disjoint union} $G+H$ of graphs $G$ and $H$ has vertex set $V(G) \cup V(H)$ and edge set $E(G) \cup E(H)$.
We denote the disjoint union of $k$ copies of $G$ by $kG$. 
For a graph $H$, a graph $G$ is {\it $H$-free} if $G$ has no induced subgraph isomorphic to $H$. 
For a set of graphs $\{H_1,\ldots,H_k\}$, a graph $G$ is {\it $(H_1,\ldots,H_k)$-free} if $G$ is $H_i$-free for every $i\in \{1,\ldots,k\}$. 

An \textit{independent set} of a graph is a set of pairwise non-adjacent vertices. 
A \textit{clique} of a graph is a set of pairwise adjacent vertices.
A \textit{matching} of a graph is a set of pairwise non-adjacent edges.
A matching $M$ of a graph $G$ is \textit{induced} if there are no edges of $G$ between vertices incident to distinct edges of $M$.

The path, cycle and complete graph on $n$ vertices are denoted by $P_n$, $C_n$ and $K_{n}$, respectively.
The graph $K_3$ is also called the \textit{triangle}. A graph is \textit{$r$-partite}, for $r \geq 2$, if its vertex set admits a partition into $r$ classes such that every edge has its endpoints in different classes. An $r$-partite graph in which every two vertices from different partition classes are adjacent is a \textit{complete $r$-partite graph} and a $2$-partite graph is also called \textit{bipartite}. A graph is \textit{co-bipartite} if it is the complement of a bipartite graph.
A \textit{split graph} is a graph $G$ that admits a {\it split partition} $(C,I)$, that is, $V(G)$ can be partitioned into a clique $C$ and an independent set $I$. Equivalently, a graph is split if and only if it is $(2P_2,C_4,C_5)$-free.
The {\it subdivided claw}~$S_{h,i,j}$, for $1\leq h\leq i\leq j$ is the tree with one vertex~$x$ of degree~$3$ and exactly three leaves, which are of distance~$h$,~$i$ and~$j$ from~$x$, respectively.
Note that $S_{1,1,1}=K_{1,3}$. For $t\geq 3$, $\sun_t$ denotes the graph on $2t$ vertices obtained from a complete graph on $t$ vertices $u_1,\ldots,u_t$ by adding $t$ vertices $v_1,\ldots,v_t$ such that $v_i$ is adjacent to $u_i$ and $u_{i+1}$ for each $i \in \{1,\dotsc,t-1\}$ and $v_t$ is adjacent to $u_1$ and $u_t$. See Figure~\ref{f-sun} for a picture of $\sun_5$.

\begin{figure}[h]
\begin{center}
\begin{minipage}{0.33\textwidth}
\centering
\begin{tikzpicture}[scale=0.35]
\draw (0,2)--(1.9,0.6)--(1.16,-1.6)--(-1.16,-1.6)--(-1.9,0.6)--(0,2)--(-1.16,-1.6)--(1.9,0.6)--(-1.9,0.6)--(1.16,-1.6)--(0,2)
--(2.14,2.91)--(1.9,0.6)--(3.43,-1.14)--(1.16,-1.6)--(0,-3.6) (0,2)--(-2.14,2.91)--(-1.9,0.6)--(-3.43,-1.14)--(-1.16,-1.6)--(0,-3.6);
\draw[fill=white] (-1.9,0.6) circle [radius=5pt] (1.9,0.6) circle [radius=5pt] (0,2) circle [radius=5pt] (-1.16,-1.6) circle [radius=5pt] (1.16,-1.6) circle [radius=5pt] (3.43,-1.14) circle [radius=5pt] (-3.43,-1.14) circle [radius=5pt]
(0,-3.6) circle [radius=5pt] (-2.14,2.91) circle [radius=5pt] (2.14,2.91) circle [radius=5pt]; 
\end{tikzpicture} 
\end{minipage} 
\caption{The graph $\sun_5$.}\label{f-sun}
\end{center} 
\end{figure}

\section{Mim-Width: Definition and Basic Results}\label{s-mim}
\label{sec:basis2}

A \textit{branch decomposition} for a graph $G$ is a pair $(T, \delta)$, where $T$ is a subcubic tree and $\delta$ is a bijection from~$V(G)$ to the leaves of $T$. Each edge $e \in E(T)$ naturally partitions the leaves of $T$ into two classes, depending on which component they belong to when $e$ is removed. 
In this way, each edge $e \in E(T)$ corresponds to a partition $L_e$ and $\overline{L_e}$ of the set of leaves of $T$, depending on which component of $T-e$ the leaves of $T$ belong to.
Consequently, each edge $e$ induces a partition $(A_e, \overline{A_e})$ of $V(G)$, where $\delta(A_e) = L_e$ and $\delta(\overline{A_e}) = \overline{L_e}$.
For two disjoint sets $X$ and $Y$,
let $G[X, Y]$ denote the bipartite subgraph of $G$ induced by the edges with one endpoint in $X$ and the other in $Y$. 
For each edge $e \in E(T)$ and corresponding partition $(A_{e}, \overline{A_{e}})$ of $V(G)$, we denote by $\cutmim_{G}(A_{e}, \overline{A_{e}})$ the size of a maximum induced matching in $G[A_{e}, \overline{A_{e}}]$. The \emph{mim-width} of the branch decomposition $(T, \delta)$ is the quantity $\mimw_{G}(T, \delta) = \max_{e \in E(T)}\cutmim_{G}(A_{e}, \overline{A_{e}})$. The \emph{mim-width} of the graph $G$, denoted $\mimw(G)$, is the minimum value of $\mimw_{G}(T, \delta)$ over all possible branch decompositions $(T, \delta)$ for $G$.
See Figure~\ref{l-exx} for an example.

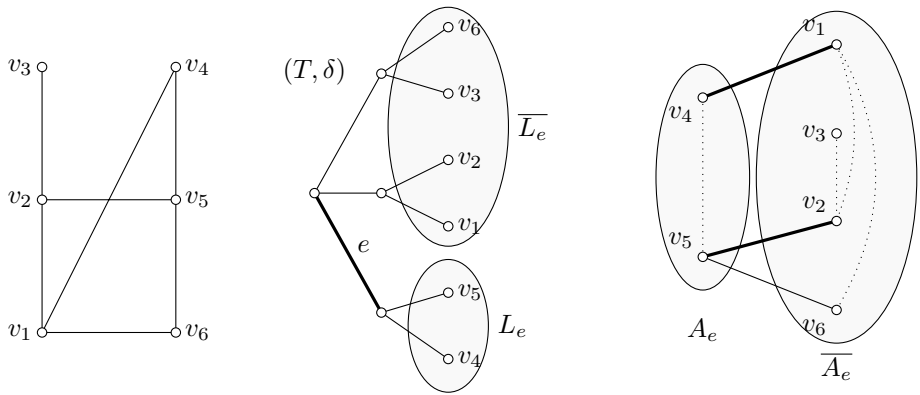
\begin{figure}[h]
\begin{minipage}[c]{0.2\textwidth}
\begin{tikzpicture}[scale=0.88] \draw(-1,-2)--(1,2)--(1,-2)--(-1,-2)--(-1,2)(1,0)--(-1,0);
\draw[fill=white] (1,2) circle [radius=2pt] (1,0) circle [radius=2pt] (1,-2) circle [radius=2pt]
(-1,2) circle [radius=2pt] (-1,0) circle [radius=2pt] (-1,-2) circle [radius=2pt];
\node[right] at (1,-2) {$v_6$}; \node[right] at (1,0) {$v_5$}; \node[right] at (1,2) {$v_4$};
\node[left] at (-1,2) {$v_3$}; \node[left] at (-1,0) {$v_2$}; \node[left] at (-1,-2) {$v_1$};
\end{tikzpicture}
\end{minipage}
\qquad
\begin{minipage}[c]{0.3\textwidth}
\begin{tikzpicture}[scale=0.88] \draw[color=black, fill=gray!5] (1,-2) ellipse (0.6cm and 1cm);
\draw[color=black, fill=gray!5] (1,1) ellipse (0.9cm and 1.8cm); \draw[very thick] (-1,0)--(0,-1.8);
\draw(1,2.5)--(0,1.8)--(-1,0)--(0,0)--(1,0.5)(1,-2.5)--(0,-1.8)--(1,-1.5) (1,-0.5)--(0,0) (1,1.5)--(0,1.8);
\draw[fill=white] (1,2.5) circle [radius=2pt] (1,1.5) circle [radius=2pt] (1,0.5) circle [radius=2pt]
(1,-2.5) circle [radius=2pt] (1,-1.5) circle [radius=2pt] (1,-0.5) circle [radius=2pt]
(0,1.8) circle [radius=2pt](0,-1.8) circle [radius=2pt](0,0) circle [radius=2pt] (-1,0) circle [radius=2pt];
\node[right] at (1,2.5) {$v_6$}; \node[right] at (1,1.5) {$v_3$}; \node[right] at (1,0.5) {$v_2$};
\node[right] at (1,-0.5) {$v_1$}; \node[right] at (1,-1.5) {$v_5$}; \node[right] at (1,-2.5) {$v_4$};
\node[right] at (-0.5,-0.8) {$e$}; \node[right] at (1.6,-2) {$L_e$};
\node[right] at (1.9,1) {$\overline{L_e}$}; \node[above] at (-1,1.5) {$(T,\delta)$};
\end{tikzpicture}
\end{minipage}
\qquad
\begin{minipage}[c]{0.07\textwidth}
\begin{tikzpicture}[scale=0.88] \draw[color=black, fill=gray!5] (-1,0) ellipse (0.7cm and 1.7cm);
\draw[color=black, fill=gray!5] (1,0) ellipse (1.2cm and 2.5cm);
\draw[very thick] (1,2)--(-1,1.2)(1,-0.66)--(-1,-1.2); \draw (1,-2)--(-1,-1.2);
\draw[dotted] (-1,1.2)--(-1,-1.2) (1,0.66)--(1,-0.66)to[out=70,in=290](1,2)to[out=300,in=60](1,-2);
\draw[fill=white] (1,2) circle [radius=2pt] (1,0.66) circle [radius=2pt] (1,-0.66) circle [radius=2pt]
(1,-2) circle [radius=2pt] (-1,1.2) circle [radius=2pt] (-1,-1.2) circle [radius=2pt];
\node[below left] at (-1,1.2) {$v_4$}; \node[above left] at (-1,-1.2) {$v_5$}; \node[above left] at (1,2) {$v_1$};
\node[left] at (1,0.66) {$v_3$}; \node[above left] at (1,-0.66) {$v_2$}; \node[below left] at (1,-2) {$v_6$};
\node[below] at (-1,-2) {$A_e$}; \node[below] at (1,-2.5) {$\overline{A_e}$};
\end{tikzpicture}
\end{minipage}
\caption{An example of a graph $G$ with a branch decomposition $(T,\delta)$. It can be easily seen that $\mimw_{G}(T, \delta)\leq 2$. The partition $(A_{e}, \overline{A_{e}})$ of $V(G)$ in the rightmost figure witnesses that $\mimw_{G}(T, \delta)\geq 2$. Hence, $\mimw_{G}(T, \delta)=2$. It can be checked that the branch decomposition $(T',\delta')$ obtained from $(T,\delta)$ by swapping 
$v_2$ and $v_5$ and swapping $v_3$ and $v_4$ shows that $\mimw(G)=1$.}\label{l-exx}
\end{figure}

\subsection{Mim-Width Preserving Operations}
The following three lemmas, the first of which is due to Vatshelle, show that vertex deletion, edge subdivision and clique implantation do not change the mim-width of a graph by too much. 

\begin{lemma}[\cite{Va12}] 
  \label{vertexdeletion}
Let $G$ be a graph and $v \in V(G)$. Then $\mimw(G) - 1 \leq \mimw(G-v) \leq \mimw(G)$.
\end{lemma}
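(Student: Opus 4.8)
\textbf{Proof plan for Lemma~\ref{vertexdeletion}.}

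The plan is to prove the two inequalities separately, using the same branch-decomposition surgery in both directions. For the upper bound $\mimw(G-v) \le \mimw(G)$, I would take an optimal branch decomposition $(T,\delta)$ for $G$ and simply delete the leaf $\delta(v)$ from $T$, suppressing the resulting degree-$2$ vertex to obtain a subcubic tree $T'$ together with the restricted bijection $\delta'$ onto $V(G-v)$. The key observation is that every edge $e' \in E(T')$ corresponds (possibly after this suppression) to an edge $e \in E(T)$ whose partition $(A_e,\overline{A_e})$ of $V(G)$ restricts to the partition $(A_{e'},\overline{A_{e'}})$ of $V(G-v)$; since $G-v$ is an induced subgraph of $G$, any induced matching in $(G-v)[A_{e'},\overline{A_{e'}}]$ is also an induced matching in $G[A_e,\overline{A_e}]$, so $\cutmim_{G-v}(A_{e'},\overline{A_{e'}}) \le \cutmim_G(A_e,\overline{A_e}) \le \mimw(G)$. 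Taking the maximum over $e'$ gives $\mimw_{G-v}(T',\delta') \le \mimw(G)$, hence $\mimw(G-v) \le \mimw(G)$.

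For the lower bound $\mimw(G) - 1 \le \mimw(G-v)$, I would run the construction in reverse: start from an optimal branch decomposition $(T',\delta')$ for $G-v$, pick any leaf of $T'$, subdivide its incident edge, and attach a new leaf assigned to $v$, obtaining a branch decomposition $(T,\delta)$ for $G$. Now I need to bound $\cutmim_G(A_e,\overline{A_e})$ for each $e \in E(T)$. Each such cut of $V(G)$ either arises from a cut of $V(G-v)$ in $(T',\delta')$ with $v$ added to one side, or is one of the two new edges near $v$ (for which one side is just $\{v\}$ and the cut is trivially handled). In the first case, fix a maximum induced matching $M$ in $G[A_e,\overline{A_e}]$; at most one edge of $M$ is incident with $v$, so deleting that edge (if it exists) leaves an induced matching of $(G-v)[A_{e'},\overline{A_{e'}}]$ of size at least $|M| - 1$, giving $\cutmim_G(A_e,\overline{A_e}) \le \cutmim_{G-v}(A_{e'},\overline{A_{e'}}) + 1 \le \mimw(G-v) + 1$. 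Since this holds for every edge of $T$, $\mimw(G) \le \mimw_G(T,\delta) \le \mimw(G-v) + 1$.

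The main thing to be careful about — rather than a genuine obstacle — is the bookkeeping for small trees and for the edges of $T$ incident to the suppressed or newly inserted degree-$2$ vertex: one must check that suppression genuinely yields a valid subcubic tree with a bijection onto the correct vertex set, that the cut associated to a suppressed path of length two in $T'$ is exactly the cut of the surviving edge, and that the two edges adjacent to the new leaf for $v$ induce cuts with a singleton side (so their $\cutmim$ value is at most $1 \le \mimw(G-v)+1$, using that $\mimw$ is nonnegative). Once these degenerate cases are dispatched, both inequalities follow immediately from the observation that passing to an induced subgraph can only shrink an induced matching, and that a single vertex lies in at most one edge of any matching.
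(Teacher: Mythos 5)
Your proof is correct. The paper does not actually prove \cref{vertexdeletion} --- it is imported from Vatshelle's thesis \cite{Va12} --- so there is no in-paper argument to compare against; but your two-directional branch-decomposition surgery (delete the leaf $\delta(v)$ and suppress the degree-$2$ node for the upper bound; graft a new leaf for $v$ and discard the at most one matching edge meeting $v$ for the lower bound) is the standard argument and is exactly parallel to the proofs the paper does give for the analogous \cref{subdivision,implant}. Your attention to the degenerate cases (small trees, the merged edge after suppression inducing the same cut on $V(G-v)$, and the singleton-side cuts at the new leaf) covers the only points where such an argument could go wrong.
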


\begin{lemma}\label{subdivision}
Let $G$ be a graph and let $G'$ be the graph obtained by $1$-subdividing an edge of $G$. Then $\mimw(G)\leq \mimw(G') \leq \mimw(G) + 1$. 
\end{lemma}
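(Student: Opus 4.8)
The plan is to prove the two inequalities separately, using branch decompositions. Write $uv$ for the subdivided edge, let $w$ be the new vertex, so $G'$ has vertex set $V(G)\cup\{w\}$ with $uv$ replaced by $uw$ and $wv$.

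For the lower bound $\mimw(G)\le\mimw(G')$: start from an optimal branch decomposition $(T',\delta')$ of $G'$ and produce a branch decomposition $(T,\delta)$ of $G$ by contracting the leaf of $T'$ assigned to $w$ (i.e.\ delete that leaf and suppress the resulting degree-two node). For each edge $e$ of $T$ there is a corresponding edge $e'$ of $T'$ inducing essentially the same bipartition, except that $w$ lands on one side; I would argue $\cutmim_G(A_e,\overline{A_e})\le\cutmim_{G'}(A_{e'},\overline{A_{e'}})$ by showing that any induced matching in $G[A_e,\overline{A_e}]$ is an induced matching in $G'[A_{e'},\overline{A_{e'}}]$. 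This is immediate for every edge of the matching other than a possible edge $uv$ crossing the cut; and if $uv$ is in the matching with $u\in A_e$, $v\in\overline{A_e}$, then replacing $uv$ by $uw$ (when $w$ is on $v$'s side) or $wv$ (when $w$ is on $u$'s side) gives an induced matching of the same size in $G'[A_{e'},\overline{A_{e'}}]$ — one must check the "induced" condition survives, which it does because $w$ has only the two neighbours $u,v$ and $v$ (resp.\ $u$) is not matched elsewhere. So $\mimw(G)\le\mimw_G(T,\delta)\le\mimw_{G'}(T',\delta')=\mimw(G')$.

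For the upper bound $\mimw(G')\le\mimw(G)+1$: take an optimal branch decomposition $(T,\delta)$ of $G$ and insert a new leaf for $w$ by subdividing the edge of $T$ incident to the leaf $\delta(v)$ and attaching $w$ there (any placement works, but this one is convenient). Edges of $T'$ not created by the insertion correspond to edges of $T$; on such an edge the bipartition of $V(G')$ differs from that of $V(G)$ only by the location of $w$, and any induced matching in the $G'$-cut, after deleting at most one edge incident to $w$, becomes an induced matching in the corresponding $G$-cut, which accounts for the additive $+1$. The newly created edge of $T'$ that separates $\{w\}$ (or $\{v,w\}$, depending on exactly where we splice) from the rest is the only genuinely new cut; its $\cutmim$ value is at most $1$ since $w$ has degree $2$ and one easily bounds the induced matching in that tiny bipartite graph. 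Hence $\mimw(G')\le\mimw(G)+1$.

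The main obstacle — really the only place that needs care — is verifying that the "induced" condition of an induced matching is preserved under the swap $uv\leftrightarrow uw$ or $uv\leftrightarrow wv$ in the lower-bound direction, and symmetrically when deleting an edge incident to $w$ in the upper-bound direction. The point to nail down is that $w$'s only neighbours are $u$ and $v$, so introducing $w$ into a matching edge cannot create a chord to any other matched vertex, and removing $w$ cannot destroy the induced property of the remaining edges; all other edges of the matching and all non-adjacencies between their endpoints are literally unchanged between $G$ and $G'$. Once this local check is done, both inequalities follow by the decomposition-surgery arguments sketched above, and in fact the lower bound $\mimw(G)\le\mimw(G')$ also follows directly from Lemma~\ref{vertexdeletion} applied twice together with the observation that $G$ is obtained from $G'$ by deleting $w$ and adding back the edge $uv$ — but since edge addition is not covered by Lemma~\ref{vertexdeletion}, the explicit decomposition argument above is the clean route.
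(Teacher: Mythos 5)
Your lower-bound direction is correct and is essentially the paper's argument: delete the leaf for $w$, and if $uv$ is in the matching replace it by $uw$ or $wv$ according to which side $w$ fell on; the local check that inducedness survives is exactly as you describe, since $w$ has only the two neighbours $u$ and $v$.

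The upper-bound direction has a genuine gap. You claim that for an edge $e$ of $T'$ inherited from $T$, any induced matching $M'$ in $G'[A_e,\overline{A_e}]$ becomes an induced matching in the corresponding cut of $G$ ``after deleting at most one edge incident to $w$,'' and you justify this by asserting that all non-adjacencies between matched endpoints are unchanged between $G'$ and $G$. That assertion is false for the pair $(u,v)$: $uv\notin E(G')$ but $uv\in E(G)$, so passing from $G'$ back to $G$ \emph{creates} an adjacency. Concretely, with your placement of $w$ next to $\delta(v)$, take a cut with $u\in A_e$ and $v,w\in\overline{A_e}$, and an induced matching $M'$ containing $uz$ and $vz'$ (with $z\in\overline{A_e}$, $z'\in A_e$, and $zz'\notin E(G')$) but no edge incident to $w$. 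This is a perfectly good induced matching in $G'[A_e,\overline{A_e}]$ precisely because $uv\notin E(G')$; but in $G$ the edge $uv$ crosses the cut and joins two matched vertices, so $M'$ is not induced there, and your recipe deletes nothing. The fix is what the paper does: besides removing the (at most one) matching edge incident to $w$, you must also handle the case where both $u$ and $v$ are matched by removing the matching edge incident to $u$; one checks these two removals never both occur (after the $w$-edge is removed, one of $u,v$ becomes unmatched), so the loss is still only one edge and the bound $\mimw(G')\le\mimw(G)+1$ stands. The treatment of the newly created edges of $T'$ (cut value at most $1$) is fine.
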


\begin{proof} 
Let $uv$ be the subdivided edge of $G$, and let $w \in V(G')\setminus V(G)$ such that $\{uw, wv\} \subseteq E(G')$. We first prove that $\mimw(G)\leq \mimw(G')$. Given a branch decomposition $(T',\delta')$ for $G'$, we construct a branch decomposition $(T,\delta)$ for $G$ such that $\mimw_{G}(T, \delta) \leq \mimw_{G'}(T', \delta')$.  Since $V(G') = V(G) \cup \{w\}$, we simply let $T$ be the tree obtained from $T'$ by deleting the leaf $\delta'(w)$, and let $\delta$ be the restriction of $\delta'$ to $V(G)$. Clearly, $(T,\delta)$ is a branch decomposition for $G$. 

We claim that $\mimw_{G}(T, \delta) \leq \mimw_{G'}(T', \delta')$. Suppose, to the contrary, that there exists $e \in E(T)$ such that $\cutmim_{G}(A_{e}, \overline{A_{e}}) > \mimw_{G'}(T', \delta')$, and let $M$ be a maximum induced matching in $G[A_{e}, \overline{A_{e}}]$. By construction, $e$ is also an edge of $T'$ and the partition $(B_{e}, \overline{B_{e}})$ of $V(G')$ corresponding to $e$ is either $(A_{e} \cup \{w\}, \overline{A_{e}})$ or $(A_{e}, \overline{A_{e}} \cup \{w\})$. If $uv \notin M$, then $M$ is also an induced matching in $G'[B_{e}, \overline{B_{e}}]$. On the other hand, if $uv \in M$, then either $M \setminus \{uv\} \cup \{uw\}$ or $M \setminus \{uv\} \cup \{wv\}$ is an induced matching in $G'[B_{e}, \overline{B_{e}}]$. In all cases, we find an induced matching in $G'[B_{e}, \overline{B_{e}}]$ of size $|M| = \cutmim_{G}(A_{e}, \overline{A_{e}}) > \mimw_{G'}(T', \delta')$, a contradiction.

We now prove that $\mimw(G') \leq \mimw(G) + 1$. Given a branch decomposition $(T,\delta)$ for $G$, we construct a branch decomposition $(T',\delta')$ for $G'$ such that $\mimw_{G'}(T', \delta') \leq \mimw_{G}(T, \delta) + 1$. Let $T'$ be the subcubic tree obtained by attaching two pendant vertices $x_{1}$ and $x_{2}$ to the leaf $\delta(u)$ of $T$, and let $\delta'(x) = \delta(x)$, for each $x \in V(G)\setminus\{u\}$, and $\delta'(u) = x_{1}$ and $\delta'(w) = x_{2}$. Clearly, $(T',\delta')$ is a branch decomposition for $G'$. 

We claim that $\mimw_{G'}(T', \delta') \leq \mimw_{G}(T, \delta) + 1$. Suppose, to the contrary, that there exists $e \in E(T')$ such that $\cutmim_{G'}(A_{e}, \overline{A_{e}}) > \mimw_{G}(T, \delta) + 1$. Clearly, $e \in E(T)$, for otherwise $\cutmim_{G'}(A_{e}, \overline{A_{e}}) \leq 1$. As $e$ is an edge of $T$, $u$ and $w$ belong to the same partition class of $V(G')$ and the partition $(B_{e}, \overline{B_{e}})$ of $V(G)$ corresponding to $e$ is obtained from $(A_{e}, \overline{A_{e}})$ by removing $w$. Let $M'$ be a maximum induced matching in $G'[A_{e}, \overline{A_{e}}]$. If $w$ is matched in $M'$, then it must be $wv \in M'$ and we remove this edge. If both $u$ and $v$ are matched in $M'$, we remove the matching edge incident to $u$. In all the other cases, we keep the matching edges. In this way we obtain an induced matching in $G[B_{e}, \overline{B_{e}}]$ of size at least $|M'| - 1 = \cutmim_{G'}(A_{e}, \overline{A_{e}}) - 1 > \mimw_{G}(T, \delta)$, a contradiction.
\end{proof}

\begin{lemma}\label{implant} Let $G$ be a graph and let $G'$ be the graph obtained from $G$ by a clique implant on $v \in V(G)$. Then $\mimw(G) \leq \mimw(G') \leq \mimw(G) + d(v)$.    
\end{lemma}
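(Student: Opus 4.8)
The plan is to establish the two inequalities separately, each time turning a branch decomposition of one graph into a branch decomposition of the other, exactly as in Lemmas~\ref{vertexdeletion} and~\ref{subdivision}. Write $k=d(v)$ and $N(v)=\{u_1,\dots,u_k\}$, so $V(G')=(V(G)\setminus\{v\})\cup\{v_1,\dots,v_k\}$ with $\{v_1,\dots,v_k\}$ a clique, $v_iu_i\in E(G')$, and no other new edges; in particular the clique implant changes no edge among $V(G)\setminus\{v\}$, and $G$ is obtained from $G'$ by contracting the clique $\{v_1,\dots,v_k\}$ back to $v$. The case $k\le 1$ is trivial ($G'\cong G$), so assume $k\ge 2$.

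For $\mimw(G')\le\mimw(G)+k$, I would start from a branch decomposition $(T,\delta)$ for $G$ and form $(T',\delta')$ by replacing the leaf $\delta(v)$ with an arbitrary subcubic tree having $k$ leaves, one for each $v_i$, with $\delta'$ agreeing with $\delta$ on $V(G)\setminus\{v\}$. For an edge $e$ inside the new gadget the corresponding cut of $G'$ has one side contained in the clique $\{v_1,\dots,v_k\}$, so no induced matching across it exceeds size $1\le k$. For an edge $e$ coming from $T$, all of $v_1,\dots,v_k$ lie on the side that $v$ occupied in the cut of $G$; deleting $v_1,\dots,v_k$ from $G'[A_e,\overline{A_e}]$ leaves an induced subgraph of $G[A_e,\overline{A_e}]$, and each $v_i$ has at most one neighbour on the opposite side, namely $u_i$. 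Hence an induced matching of $G'[A_e,\overline{A_e}]$ loses at most $k$ edges when those meeting $\{v_1,\dots,v_k\}$ are removed, and the remainder has size at most $\cutmim_G(A_e,\overline{A_e})$; so $\mimw_{G'}(T',\delta')\le\mimw_G(T,\delta)+k$.

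For $\mimw(G)\le\mimw(G')$, I would go the other way: from a branch decomposition $(T',\delta')$ for $G'$, delete the leaves $\delta'(v_2),\dots,\delta'(v_k)$, suppress the resulting degree-$2$ vertices, and let $\delta'(v_1)$ now represent $v$; call the result $(T,\delta)$. Each edge $e$ of $T$ comes from an edge $e'$ of $T'$ whose cut agrees with that of $e$ on $V(G)\setminus\{v\}$ and puts $v$ on the side of $v_1$, while $v_2,\dots,v_k$ may fall on either side. Given a maximum induced matching $M$ in $G[A_e,\overline{A_e}]$, I would lift it edgewise: edges avoiding $v$ are kept verbatim (they are crossing edges of $G'$ too), while an edge $vu_i\in M$ is replaced by the crossing edge $v_iu_i$ if $v_i$ lies on the same side as $v_1$, and by the crossing clique edge $v_1v_i$ otherwise. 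This opposite-side case is the crux: there $v_iu_i$ no longer crosses the cut of $e'$, so one must fall back on $v_1v_i$ and check the result is still an induced matching. The argument that makes this work is that $v$ is adjacent in $G$ to every $u_j$, so if $v$ is matched in $M$ then no $u_j$ except $v$'s own partner is matched; therefore the only outside-clique neighbours of $v_1$ and $v_i$ (namely $u_1$ and $u_i$) are not reused, and no other $v_j$ occurs in $G[A_e,\overline{A_e}]$, so the clique edges cause no conflict. Thus we obtain an induced matching of $G'[A'_{e'},\overline{A'_{e'}}]$ of size $|M|$, giving $\cutmim_G(A_e,\overline{A_e})\le\cutmim_{G'}(A'_{e'},\overline{A'_{e'}})$ and hence $\mimw_G(T,\delta)\le\mimw_{G'}(T',\delta')$.

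Combining the two constructions yields $\mimw(G)\le\mimw(G')\le\mimw(G)+d(v)$. The routine parts are checking that $(T',\delta')$ and $(T,\delta)$ are genuine branch decompositions and that the lifted edge sets really are induced matchings; the one delicate point, and the expected main obstacle, is the opposite-side case in the lower bound, where a clique edge $v_1v_i$ has to be substituted for the missing crossing edge $v_iu_i$.
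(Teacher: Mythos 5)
Your constructions and case split coincide with the paper's proof: for the upper bound you replace the leaf $\delta(v)$ by a subcubic gadget carrying $v_1,\dots,v_k$ (the paper uses the caterpillar obtained by $(k-1)$-subdividing the tree edge at $\delta(v)$), and for the lower bound you delete the leaves of $v_2,\dots,v_k$, let $\delta'(v_1)$ stand for $v$, and lift an edge $vu_i$ to $v_iu_i$ or to $v_1v_i$ according to which side $v_i$ lands on --- exactly the paper's two subcases. Two of your supporting claims are false as stated, though the conclusions you draw from them survive. First, a cut inside the gadget can have $\cutmim$ equal to $|S|$, not $1$: if $v_1,v_2\in S$ then $\{v_1u_1,v_2u_2\}$ is an induced matching of $G'[S,\overline{S}]$, because $v_1v_2$ joins two vertices on the same side and hence is not an edge of that bipartite cut graph; the bound you actually need is only $|S|\le k\le \mimw(G)+k$, so nothing breaks. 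Second, it is not true that when $v$ is matched in $M$ no $u_j$ other than $v$'s partner is matched: a $u_j$ lying on the \emph{same} side of the cut as $v$ may be matched, since $vu_j$ is then not a crossing edge and so does not violate inducedness of $M$. The substitution $v_1v_i$ is nevertheless safe, because such a $u_j$ (in particular $u_1$) ends up on the same side as $v_1$, so $v_1u_j$ is again not a crossing edge; only a matched $u_j\in\overline{A_e}$ could create a crossing conflict with $v_1$, and those $u_j$ are exactly the ones that inducedness of $M$ genuinely forbids. With those two justifications repaired, the argument is the paper's.
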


\begin{proof} 
We first prove that $\mimw(G) \leq \mimw(G')$. Suppose that $v$ is a degree-$k$ vertex of $G$ with $N(v) = \{u_1,\dots,u_k\}$ and let $\{v_1, \dots, v_k\}$ be the clique implanted on $v$. Given a branch decomposition $(T',\delta')$ for $G'$, we construct a branch decomposition $(T,\delta)$ for $G$ such that $\mimw_{G}(T, \delta) \leq \mimw_{G'}(T', \delta')$. Since $V(G') = V(G) \setminus \{v\} \cup \{v_{1}, \dots, v_{k}\}$, we build a tree $T$ as follows. We delete the leaves $\delta'(v_{2}), \dots, \delta'(v_{k})$ from $T'$ and let $\delta(x) = \delta'(x)$ if $x \in V(G) \setminus \{v\}$ and $\delta(v) = \delta'(v_{1})$. Clearly, $(T,\delta)$ is a branch decomposition for $G$.

We claim that $\mimw_{G}(T, \delta) \leq \mimw_{G'}(T', \delta')$. Suppose, to the contrary, that there exists $e \in E(T)$ such that $\cutmim_{G}(A_{e}, \overline{A_{e}}) > \mimw_{G'}(T', \delta')$ and let $M$ be a maximum induced matching in $G[A_{e}, \overline{A_{e}}]$. Suppose, without loss of generality, that $v \in A_{e}$. By construction, $e$ is also an edge of $T'$ and the partition $(B_{e}, \overline{B_{e}})$ of $V(G')$ corresponding to $e$ is of the form $((A_{e} \setminus \{v\}) \cup \{v_{1}\} \cup X, \overline{A_{e}} \cup Y)$, where $X \subseteq \{v_{2}, \dots, v_{k}\}$ and $Y = \{v_{2}, \dots, v_{k}\} \setminus X$. If $v$ is not matched in $M$, then $M$ is also an induced matching in $G'[B_{e}, \overline{B_{e}}]$ of size $|M| = \cutmim_{G}(A_{e}, \overline{A_{e}}) > \mimw_{G'}(T', \delta')$, a contradiction. Therefore, suppose that $v$ is matched in $M$. We have that $vu_{i} \in M$, for some $i \in \{1, \dots, k\}$. If $i = 1$, then $M$ is an induced matching in $G'[B_{e}, \overline{B_{e}}]$. Otherwise, $i > 1$ and we proceed as follows. If $v_{i}$ belongs to the partition class of $v_{1}$, we replace $M$ with $M \setminus \{vu_{i}\} \cup \{v_{i}u_{i}\}$. If $v_{i}$ does not belong to the partition class of $v_{1}$, we replace $M$ with $M \setminus \{vu_{i}\} \cup \{v_{1}v_{i}\}$. It is easy to see that in all cases we find an induced matching in $G'[B_{e}, \overline{B_{e}}]$ of size $|M| > \mimw_{G'}(T', \delta')$, a contradiction.

We now prove that $\mimw(G') \leq \mimw(G) + d(v)$. Suppose that $v$ is a degree-$k$ vertex of $G$, and let $\{v_1, \dots, v_k\}$ be the clique implanted on $v$. Given a branch decomposition $(T,\delta)$ for $G$, we construct a branch decomposition $(T',\delta')$ for $G'$ such that $\mimw_{G'}(T', \delta') \leq \mimw_{G}(T, \delta) + k$. We $(k-1)$-subdivide the edge of $T$ incident to $\delta(v)$ with new vertices $x_{1}, \dots, x_{k - 1}$, attach a pendant vertex $y_{i}$ to each $x_{i}$, let $\delta'(v_{k}) = \delta(v)$ and $\delta'(v_{i}) = y_{i}$, for each $i \in \{1, \dots, k - 1\}$, and finally let $\delta'(u) = \delta(u)$ for each $u \in V(G')\setminus \{v_{1}, \dots, v_{k}\}$. Clearly, $(T',\delta')$ is a branch decomposition for $G'$. 

We claim that $\mimw_{G'}(T', \delta') \leq \mimw_{G}(T, \delta) + k$. Suppose, to the contrary, that there exists $e \in E(T')$ such that $\cutmim_{G'}(A_{e}, \overline{A_{e}}) > \mimw_{G}(T, \delta) + k$. We have that $e \in E(T)$, for otherwise $\cutmim_{G'}(A_{e}, \overline{A_{e}}) \leq k$. But since $e$ is an edge of $T$, the vertices $v_{1}, \dots, v_{k}$ all belong to the same partition class of $V(G')$, say $A_{e}$, and the partition $(B_{e}, \overline{B_{e}})$ of $V(G)$ corresponding to $e$ is obtained from $(A_{e}, \overline{A_{e}})$ by removing $\{v_{1}, \dots, v_{k}\}$ and adding $v$ to $A_{e}$. Let $M'$ be a maximum induced matching in $G'[A_{e}, \overline{A_{e}}]$. By possibly removing the at most $k$ edges in $M'$ incident to vertices in $\{v_{1}, \dots, v_{k}\}$, we obtain an induced matching in $G[B_{e}, \overline{B_{e}}]$ of size at least $|M'| - k = \cutmim_{G'}(A_{e}, \overline{A_{e}}) - k > \mimw_{G}(T, \delta)$, a contradiction.
\end{proof}

Mengel~\cite{Me18} showed that adding edges inside the partition classes of a bipartite graph does not decrease mim-width by much. This result can be generalized to $k$-partite graphs in the following way.

\begin{lemma}\label{kpartite} Let $G$ be a $k$-partite graph with partition classes $V_{1}, \dots, V_{k}$, and let $G'$ be a graph obtained from $G$ by adding edges where for each added edge, there exists some $i$ such that both endpoints are in $V_i$. Then $\mimw(G') \geq \frac{1}{k} \cdot \mimw(G)$. 
\end{lemma}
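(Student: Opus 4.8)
The plan is to reduce the $k$-partite case to Mengel's bipartite result by a \emph{vertex-splitting} construction, in the same spirit as the proofs of Lemmas~\ref{subdivision} and~\ref{implant}: rather than working with an optimal branch decomposition of $G'$ directly, I would start from an optimal branch decomposition of $G$ and build a comparable one for $G'$, but here it seems cleaner to argue the other direction. Concretely, given a branch decomposition $(T',\delta')$ for $G'$, I want to produce a branch decomposition $(T,\delta)$ for $G$ with $\mimw_G(T,\delta) \le k\cdot\mimw_{G'}(T',\delta')$; taking the minimum over all $(T',\delta')$ then yields $\mimw(G)\le k\cdot\mimw(G')$, which is the claimed inequality. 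Since $G$ and $G'$ have the same vertex set, I can simply take $(T,\delta) = (T',\delta')$, so the whole content is the cut-by-cut inequality $\cutmim_G(A_e,\overline{A_e}) \le k\cdot\cutmim_{G'}(A_e,\overline{A_e})$ for every edge $e$ of $T'$.

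The core combinatorial step is therefore: fix a partition $(A,\overline A)$ of $V(G)=V(G')$, let $M$ be a maximum induced matching in $G[A,\overline A]$, and show that $G'[A,\overline A]$ contains an induced matching of size at least $|M|/k$. The key observation is that $E(G[A,\overline A]) \subseteq E(G'[A,\overline A])$: adding edges within a single partition class $V_i$ never creates or destroys an edge \emph{between} $A$ and $\overline A$ that wasn't already determined by $G$, because an edge with one endpoint in $A$ and one in $\overline A$ has its two endpoints in distinct colour classes $V_i\ne V_j$ (they can't be in the same class and be split by $(A,\overline A)$ and still be a new edge within a class), so $G'[A,\overline A] = G[A,\overline A]$ as graphs on the same bipartition... wait — that would give equality, not a factor $k$ loss. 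The subtlety is that $M$, while induced in $G[A,\overline A]$, need \emph{not} be induced in $G'[A,\overline A]$, because two matching edges $a_1b_1$ and $a_2b_2$ of $M$ with $a_1,a_2\in A$, $b_1,b_2\in \overline A$ might have, say, $a_1a_2$ becoming an edge of $G'$ (if $a_1,a_2$ lie in the same class $V_i$) — and although $a_1a_2$ lies within $A$, not across the cut, an induced matching in the bipartite graph $G'[A,\overline A]$ only forbids edges \emph{of that bipartite graph} between matched vertices, so same-side edges are irrelevant. So in fact $M$ \emph{is} still induced in $G'[A,\overline A]$. Hence I expect $\cutmim$ to be unchanged and the factor $k$ to be an artefact of a more naive argument — let me instead present the argument that actually needs it.

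Re-examining: the loss of a factor $k$ is genuinely needed because Mengel's lemma as invoked is for the bipartite ($k=2$) case, and the natural reduction from $k$-partite to bipartite loses a factor. The plan is: partition the colour classes of $G$ into two groups, say $W = V_1\cup\dots\cup V_{\lfloor k/2\rfloor}$ and $\overline W = V_{\lfloor k/2\rfloor+1}\cup\dots\cup V_k$; the added edges of $G'$ all lie within $W$ or within $\overline W$ (since each lies within a single $V_i$), so $(W,\overline W)$ exhibits $G'$ as a graph obtained from the bipartite graph $G[W,\overline W]$ by adding edges inside its two parts — but this only controls the edges across $(W,\overline W)$, not the parts $(A_e,\overline{A_e})$ of the branch decomposition, so a direct appeal doesn't finish it. The honest route, and what I would carry out, is an induction on $k$: given a $k$-partite $G'$, merge two classes to view matters relative to a $(k-1)$-partition, apply Mengel's result to peel off one class at a time, losing a factor $2$ per step is too much — so instead, for a fixed cut $(A,\overline A)$, take a maximum induced matching $M'$ of $G[A,\overline A]$ of size $\mathrm{cutmim}_G(A,\overline A)$, and partition $M'$ by the colour class of its $A$-endpoint: some class $V_i$ contains the $A$-endpoints of at least $|M'|/k$ edges of $M'$; restrict attention to those edges and argue that deleting the other classes' interference, together with Mengel applied to the bipartite graph on $V_i$ versus the rest, yields an induced matching of the required size in $G'[A,\overline A]$.

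The step I expect to be the main obstacle is making this last reduction airtight: after identifying the $\ge|M'|/k$ matching edges whose $A$-sides share a colour class $V_i$, their $A$-endpoints form an independent set in $G$ but may become a clique in $G'$, and I must verify that this does not spoil the matching being induced \emph{in the bipartite graph} $G'[A,\overline A]$ (intra-part edges are harmless there) while simultaneously checking the $\overline A$-endpoints still cause no trouble; the cleanest formulation is probably to observe that $G'[A,\overline A]$ and $G[A,\overline A]$ have exactly the same edge set (new edges are intra-class, hence intra-$A$ or intra-$\overline A$), conclude $\mathrm{cutmim}_{G'}(A,\overline A) = \mathrm{cutmim}_G(A,\overline A)$ for every cut, and hence actually prove the stronger $\mimw(G') = \mimw(G)$ — and then the stated bound $\mimw(G')\ge\frac1k\mimw(G)$ follows a fortiori. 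I would double-check the edge-set claim against the precise definition of $G[A,\overline A]$ before committing to this stronger statement; if the paper really only wants the factor-$k$ bound, it may be because in the intended application the added edges are not constrained to lie within a single $V_i$ after some later graph operation, in which case the inductive factor-$2$-per-merge argument via Lemma~\ref{kpartite}'s $k=2$ base case (Mengel) is the safe fallback.
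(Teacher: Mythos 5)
There is a genuine gap, and it sits exactly at the step you flagged as the main obstacle. Your final justification rests on the claim that the added intra-class edges cannot cross the cut, i.e.\ that $G'[A_e,\overline{A_e}]$ and $G[A_e,\overline{A_e}]$ have the same edge set. This is false: the colour classes $V_1,\dots,V_k$ and the cut $(A_e,\overline{A_e})$ coming from the branch decomposition are unrelated partitions of $V(G)$, so a new edge $uv$ with $u,v\in V_i$ can perfectly well have $u\in A_e$ and $v\in\overline{A_e}$, in which case it is an edge of $G'[A_e,\overline{A_e}]$ that is absent from $G[A_e,\overline{A_e}]$. Such an edge can destroy inducedness of $M$. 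A minimal example: let $G$ have edges $uv'$ and $u'v$ with $u,v\in V_1$ and $u',v'\in V_2$, add the edge $uv$, and take $A_e=\{u,u'\}$; then $\cutmim_G(A_e,\overline{A_e})=2$ but $\cutmim_{G'}(A_e,\overline{A_e})=1$. So the ``stronger statement'' $\mimw(G')=\mimw(G)$ you end up asserting does not follow, and the factor $k$ is not an artefact --- it is precisely the price of repairing this.

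The correct repair is the idea you sketch midway and then abandon: take $(T,\delta)$ a branch decomposition of $G'$ (hence of $G$), let $M$ be a maximum induced matching in $G[A_e,\overline{A_e}]$, partition $M$ into $M_1,\dots,M_k$ according to the colour class of the $A_e$-endpoint, and keep a largest part $M'$, of size at least $|M|/k$. The verification you never supply is that $M'$ is induced in $G'[A_e,\overline{A_e}]$: for $u_1v_1,u_2v_2\in M'$ with $u_1,u_2\in A_e\cap V_i$, the only edges that matter are the cross-cut pairs $u_1v_2$ and $u_2v_1$ (edges inside $A_e$ or inside $\overline{A_e}$ are irrelevant to inducedness in the bipartite graph $G'[A_e,\overline{A_e}]$); since $v_1,v_2$ are $G$-neighbours of vertices of $V_i$, they lie outside $V_i$, so $u_1v_2$ and $u_2v_1$ join different colour classes and hence cannot be among the added edges. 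This is the paper's entire proof; no appeal to Mengel's bipartite lemma, no induction on $k$, and no merging of colour classes is needed.
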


\begin{proof} 
Let $(T,\delta)$ be a branch decomposition for $G'$. Since $G$ and $G'$ have the same vertex set, $(T,\delta)$ is a branch decomposition for $G$ as well. It is enough to show that $\mimw_{G}(T, \delta) \leq k \cdot \mimw_{G'}(T, \delta)$. Therefore, let $e \in E(T)$ be such that $\mimw_{G}(T, \delta) = \cutmim_{G}(A_{e}, \overline{A_{e}})$, and let $M$ be a maximum induced matching in $G[A_{e}, \overline{A_{e}}]$. For each $i$, consider the set $M_{i} = \{uv \in M: u \in A_{e} \cap V_{i}\}$. These $k$ sets partition $M$. Let $M'$ be a partition class of size at least $|M|/k$. Clearly, $M'$ is an induced matching in $G'[A_{e}, \overline{A_{e}}]$ and so $k \cdot \mimw_{G'}(T, \delta) \geq k\cdot|M'| \geq |M| = \mimw_{G}(T, \delta)$.
\end{proof}

The next lemma shows that to bound the mim-width of a class of graphs, we may restrict our attention to $2$-connected graphs in the class.
We note that this property is not specific to mim-width: it has also been observed, in \cite{GHOS08}, for rank-width, and this argument also applies for any appropriate width parameter defined using branch decompositions.
A \emph{block} is a maximal connected subgraph with no cut-vertex.

\begin{lemma}\label{l-2con}
    Let $G$ be a graph.  Then $\mimw(G) = \max \{\mimw(H) : H\textrm{ is a block of }G\}$.
    Moreover, 
    given branch decompositions $(T_H,\delta_H)$ of each block $H$ of $G$, with $\mimw_H(T_H,\delta_H) \le k$, we can compute a branch decomposition of $G$ with mim-width at most $k$ in polynomial time.
\end{lemma}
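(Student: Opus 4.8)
The plan is to proceed by induction on the number of blocks of $G$, with the single-block case being trivial. The first step is to reduce to the case where $G$ has exactly one cut-vertex: if $v$ is a cut-vertex of $G$, then $G$ has a decomposition $G = G_1 \cup G_2$ where $G_1$ and $G_2$ share only the vertex $v$, and each block of $G$ is a block of exactly one of $G_1, G_2$. So it suffices to show that, given branch decompositions $(T_1,\delta_1)$ of $G_1$ and $(T_2,\delta_2)$ of $G_2$ of mim-width at most $k$ each, we can glue them into a branch decomposition of $G$ of mim-width at most $k$. Then a standard induction (peeling off one block of $G$ along a cut-vertex, so that $G_1$ is a single block and $G_2$ has one fewer block) gives the full statement, including the polynomial-time claim since the gluing operation is clearly polynomial and is applied at most $|V(G)|$ times.

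The gluing construction is as follows. Let $v$ be the shared cut-vertex. In $T_1$, the leaf $\delta_1(v)$ has a unique neighbour; subdivide the edge incident to $\delta_1(v)$ with a new vertex $a_1$, and similarly subdivide the edge incident to $\delta_2(v)$ in $T_2$ with a new vertex $a_2$. Now delete the leaf $\delta_2(v)$ from $T_2$ (suppressing the resulting degree-$2$ vertex $a_2$, or equivalently just identify $a_2$ with the neighbour of $\delta_2(v)$), and identify $a_1$ with the vertex of $T_2$ that is now ``free''; more cleanly: form $T$ from the disjoint union of $T_1$ and $T_2 - \delta_2(v)$ by adding an edge joining $a_1$ to the neighbour of $\delta_2(v)$ in $T_2$, and then suppress any degree-$2$ vertices. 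The map $\delta$ sends each vertex of $G$ other than $v$ to its image under $\delta_1$ or $\delta_2$, and sends $v$ to $\delta_1(v)$. This $T$ is subcubic and $\delta$ is a bijection onto its leaves, so $(T,\delta)$ is a branch decomposition of $G$.

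The heart of the argument is bounding $\cutmim_G(A_e,\overline{A_e}) \le k$ for every edge $e \in E(T)$. The edges of $T$ split into three types: those inherited from $T_1$, those inherited from $T_2$, and the new ``bridge'' edge(s) created in the gluing. For an edge $e$ inherited from $T_1$, the induced partition $(A_e,\overline{A_e})$ of $V(G)$ restricts to the partition of $V(G_1)$ induced by the corresponding edge of $T_1$, with all of $V(G_2)\setminus\{v\}$ lumped onto whichever side contains $v$. The key observation is that since $v$ is a cut-vertex, there are no edges of $G$ between $V(G_1)\setminus\{v\}$ and $V(G_2)\setminus\{v\}$, and every edge from $V(G_2)\setminus\{v\}$ to the rest of $G$ goes through $v$; hence $G[A_e,\overline{A_e}]$ is, up to the location of $v$, exactly $G_1[A_e\cap V(G_1), \overline{A_e}\cap V(G_1)]$ together with isolated vertices coming from $V(G_2)\setminus\{v\}$ on one side, which contribute nothing to an induced matching. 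So $\cutmim_G(A_e,\overline{A_e}) = \cutmim_{G_1}(\cdot,\cdot) \le k$. The symmetric statement holds for edges inherited from $T_2$. For the bridge edge, one side of the partition is exactly $V(G_1)$ (or $V(G_1)\setminus\{v\}$ depending on where $v$ ended up) and the other is $V(G_2)\setminus\{v\}$; again by the cut-vertex property the only edges across the cut are incident to $v$, so a maximum induced matching in $G[A_e,\overline{A_e}]$ has at most one edge, which is at most $k$ (note $k \ge 1$ since any block with an edge has mim-width at least $1$, and if $G$ is edgeless the claim is immediate).

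The main obstacle I anticipate is purely bookkeeping: keeping track of exactly which side of each cut the vertex $v$ lands on after the gluing and suppression of degree-$2$ vertices, and checking that the ``isolated vertex'' claim is watertight — i.e., that when $V(G_2)\setminus\{v\}$ is dumped entirely on one side of a $T_1$-cut, those vertices genuinely have no neighbours on the other side of the cut. This follows because every neighbour of a vertex in $V(G_2)\setminus\{v\}$ lies in $V(G_2)$, and $V(G_2) \setminus \{v\}$ is entirely on one side, while $v$ could be on either side but is a single vertex; if $v$ is on the opposite side then the only matching edges using $V(G_2)$ are incident to $v$, so at most one, and the rest of the matching lives in $G_1$ — a slightly more careful case analysis than ``they're isolated'' but still giving the bound $k$ (using $k\ge 1$). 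Once this is pinned down the induction and the polynomial-time bound are routine.
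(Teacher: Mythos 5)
Your construction is correct and essentially the paper's: both reduce to gluing two pieces at a cut-vertex $v$, splice the two branch decompositions together near the leaves mapped to $v$ (the paper identifies those two leaves and hangs a fresh leaf for $v$ off the identified vertex, while you keep $v$ at its $T_1$-leaf and graft $T_2-\delta_2(v)$ onto a subdivision vertex adjacent to $\delta_1(v)$ --- a cosmetic difference), and then observe that every cut of the glued tree either restricts to a cut of one piece, with the other piece contributing nothing since it attaches only through $v$, or has all crossing edges incident to $v$. The only omission is the direction $\mimw(G) \ge \max\{\mimw(H) : H \text{ a block of } G\}$, which the stated equality requires; this is immediate from \cref{vertexdeletion} since each block is an induced subgraph of $G$.
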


\begin{proof}
  By \cref{vertexdeletion}, $\mimw(G) \ge \max \{\mimw(H) : H\textrm{ is a block of }G\}$.
  We describe how to compute a branch decomposition $(T,\delta)$ of $G$ such that $\mimw_G(T,\delta) \le \max \{\mimw_H(T_H,\delta_H) : H\textrm{ is a block of }G\}$, in polynomial time.
  It suffices to describe a polynomial-time procedure when $G$ consists of two blocks $H_1$ and $H_2$ joined at a vertex $v$ (we can repeat this procedure $O(n)$ times, thereby constructing a branch decomposition for $G$ block-by-block).
  To construct $T$, join $T_{H_1}$ and $T_{H_2}$ by identifying the leaf $t_1 \in T_{H_1}$ and the leaf $t_2 \in T_{H_2}$ such that $\delta_{H_1}(v) = t_1$ and $\delta_{H_2}(v) = t_2$, and then create a new leaf $t$ incident to the identified vertex.
  Let $\delta$ inherit the mappings from $\delta_{H_1}$ and $\delta_{H_2}$, and set $\delta(v) = t$.
  If $e \in E(T)$ is incident to $t$, then $\cutmim_G(A_e,\overline{A_e}) \le 1$, since one of $A_e$ and $\overline{A_e}$ has size one.
  For any other edge of $T$, either $A_e$ or $\overline{A_e}$ contains $V(H_1)$ or $V(H_2)$.
  The result follows.
\end{proof}

The following lemma is due to Galby and Munaro, who used it to prove that {\sc Dominating Set} admits a PTAS for a subclass of VPG graphs when the representation is given.

\begin{lemma}[\cite{GM20}]\label{makeclique} Let $G$ be a graph and let $S \subseteq V$. Let $G'=(V',E')$ denote the graph with $V'=V$ and $E' = E \cup \{uv : u,v \in S\}$. Then $\mimw(G') \leq \mimw(G) + 1$.    
\end{lemma}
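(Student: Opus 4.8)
The plan is to start from a branch decomposition $(T,\delta)$ of $G$ witnessing $\mimw(G)$ and reuse it for $G'$. Since $V(G')=V(G)=V$, the pair $(T,\delta)$ is also a branch decomposition of $G'$, and every edge $e\in E(T)$ induces the same partition $(A_e,\overline{A_e})$ of $V$ in both graphs. Hence it suffices to prove that $\cutmim_{G'}(A_e,\overline{A_e}) \le \cutmim_{G}(A_e,\overline{A_e}) + 1$ for every $e\in E(T)$: then
$\mimw(G') \le \mimw_{G'}(T,\delta) = \max_{e\in E(T)} \cutmim_{G'}(A_e,\overline{A_e}) \le \mimw_{G}(T,\delta) + 1 = \mimw(G)+1$.

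Fix $e\in E(T)$ and let $M$ be a maximum induced matching in $G'[A_e,\overline{A_e}]$. The key observation is that $M$ contains \emph{at most one} edge with both endpoints in $S$. Indeed, in $G'$ every vertex of $S\cap A_e$ is adjacent to every vertex of $S\cap\overline{A_e}$, so the bipartite graph $G'[S\cap A_e, S\cap\overline{A_e}]$ is complete and therefore has no induced matching of size two; two distinct edges of $M$ lying entirely inside $S$ would contradict that $M$ is induced. Delete such an edge from $M$ if it exists, obtaining a matching $M'\subseteq M$ with $|M'|\ge |M|-1$. Every edge of $M'$ now lies in $E(G)$, because the only edges of $G'$ that are not edges of $G$ have both endpoints in $S$, and those have been removed. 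Moreover $M'$ is a sub-collection of the induced matching $M$ of $G'[A_e,\overline{A_e}]$, and $G$ is a subgraph of $G'$, so there is no edge of $G$ (let alone of $G'$) joining endpoints of distinct edges of $M'$; thus $M'$ is an induced matching in $G[A_e,\overline{A_e}]$. Consequently $\cutmim_{G}(A_e,\overline{A_e}) \ge |M'| \ge |M|-1 = \cutmim_{G'}(A_e,\overline{A_e}) - 1$, which is exactly the inequality we wanted.

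The only real content here is the remark that a complete bipartite graph has no induced matching of size two, applied to the bipartition $(S\cap A_e,\, S\cap\overline{A_e})$; the rest is bookkeeping about which matching edges survive the passage from $G'$ back to $G$. I therefore do not anticipate a genuine obstacle, but one must be careful to use the ``a sub-matching of an induced matching in a supergraph is an induced matching in the subgraph'' principle in the correct direction, and to note explicitly that the $\le 1$ deleted edge is precisely what is needed to land inside $E(G)$.
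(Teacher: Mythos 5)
Your proof is correct: the key observation that a complete bipartite graph admits no induced matching of size two, applied to $(S\cap A_e, S\cap\overline{A_e})$, is exactly the right idea, and the bookkeeping (removing at most one matching edge, checking the survivors lie in $E(G)$ and remain induced there) is handled properly. The paper itself gives no proof of this lemma, simply citing~\cite{GM20}; your argument is the standard one used there, so there is nothing to add.
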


The final structural lemma is used to prove that $(sP_1+P_5,K_t)$-free graphs have bounded mim-width for every $s\geq 0$ and $t \ge 1$. It shows how we can bound the mim-width of a graph in terms of the mim-width of the graphs induced by blocks of a partition of the vertex set and the mim-width between any two of the parts. We include it here as it might be useful for bounding the mim-width of other graph classes.

\begin{lemma}[\cite{BHP}]
\label{mimmultijoin}
Let $G$ be a graph and $(X_1,\dotsc,X_p)$ be a partition of $V(G)$ such that $\cutmim_G(X_i,X_j) \le c$ for all distinct $i,j \in \{1,\dotsc,p\}$, and $p \ge 2$.
Then \[\mimw(G) \le \max\left\{c\left\lfloor\left(\frac{p}{2}\right)^2\right\rfloor,\max_{i \in \{1,\dotsc,p\}}\{\mimw(G[X_i])\} + c(p-1)\right\}.\]
Moreover, if $(T_i,\delta_i)$ is a branch decomposition of $G[X_i]$ for each $i$, then we can construct, in $O(1)$ time, a branch decomposition $(T,\delta)$ of $G$ with \[\mimw_G(T,\delta) \le \max\left\{c\left\lfloor\left(\frac{p}{2}\right)^2\right\rfloor,\max_{i \in \{1,\dotsc,p\}}\{\mimw_G(T_i,\delta_i)\} + c(p-1)\right\}.\]
\end{lemma}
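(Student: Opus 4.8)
The plan is to establish the bound on $\mimw(G)$ and the constructive ``moreover'' part together, by gluing the given branch decompositions $(T_1,\delta_1),\dots,(T_p,\delta_p)$ of $G[X_1],\dots,G[X_p]$ into a single branch decomposition $(T,\delta)$ of $G$. I would build $T$ around a \emph{spine} path $c_1-c_2-\cdots-c_p$: for each $i$, subdivide an arbitrary edge of $T_i$ with a new node $m_i$ (if $|X_i|=1$, instead let $m_i$ be a fresh leaf to which the unique vertex of $X_i$ is mapped), and add the edge $c_im_i$; then let $\delta$ restrict to $\delta_i$ on each $X_i$. It is routine to check that $(T,\delta)$ is a branch decomposition of $G$, subcubic after suppressing the at most two degree-$2$ ends of the spine, and that it is obtained from the given trees by adding only $\Oh(p)$ vertices and edges.

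The heart of the matter is to bound $\cutmim_G(A_e,\overline{A_e})$ for every $e\in E(T)$, and I would split the edges of $T$ into two kinds. A \emph{spine edge} is some $c_ic_{i+1}$ or some attachment edge $c_im_i$; its cut has the form $\bigl(\bigcup_{i\in I}X_i,\bigcup_{i\notin I}X_i\bigr)$ for some $I$ with $1\le|I|\le p-1$. An \emph{internal edge} lies inside the subdivided copy of some $T_i$; for it, one side of the cut is a set $A'\subseteq X_i$ that is a side of a cut of $(T_i,\delta_i)$, and the other side is $(X_i\setminus A')\cup\bigcup_{j\ne i}X_j$. The key step is then a packing argument: given a maximum induced matching $M$ across such a cut, partition its edges by which unordered pair of parts their endpoints fall in, with a separate block for the edges having both endpoints inside $X_i$ in the internal case. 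A block whose edges all run between distinct parts $X_a,X_b$ is a subcollection of $M$, hence an induced matching of $G$ with all edges between $X_a$ and $X_b$, so it has at most $\cutmim_G(X_a,X_b)\le c$ edges; the ``internal'' block is an induced matching of $G[X_i]$ across a cut of $(T_i,\delta_i)$, so it has at most $\mimw_{G[X_i]}(T_i,\delta_i)$ edges. Summing over blocks, a spine cut with $|I|=q$ gives $|M|\le c\,q(p-q)\le c\lfloor(p/2)^2\rfloor$, while an internal cut inside $T_i$ gives $|M|\le\mimw_{G[X_i]}(T_i,\delta_i)+c(p-1)$. Maximising over $e\in E(T)$ gives the ``moreover'' bound, and taking each $(T_i,\delta_i)$ optimal gives the bound on $\mimw(G)$.

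I expect the main obstacle to be bookkeeping rather than anything conceptually hard. Two points will need care: first, confirming that $(T,\delta)$ is a bona fide subcubic branch decomposition, which is mostly a degree count but also requires handling the degenerate case $|X_i|=1$ where $T_i$ has no edge to subdivide; and second, the two claims behind the packing argument — that the per-pair blocks really partition $M$ and each remains an induced matching between the relevant pair of parts, and that for an internal edge $e$ of $T_i$ the ``$X_i$-side'' of the cut of $e$ in $T$ is genuinely a cut side of $(T_i,\delta_i)$. The latter rests on the small observation that subdividing one edge $xy$ of $T_i$ preserves all cuts induced by the other edges of $T_i$, since $x$ and $y$ stay on the same side of each of them. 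Everything else, including the estimate $q(p-q)\le\lfloor p^2/4\rfloor$, is routine.
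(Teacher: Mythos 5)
Your proposal is correct, and it is the natural argument for this statement: glue the $(T_i,\delta_i)$ along a spine path, observe that every cut of the resulting decomposition is either a union-of-parts cut or an $X_i$-internal cut augmented by the remaining parts, and bound the induced matching by partitioning its edges according to the pair of parts they straddle (each block having size at most $c$ by hypothesis, and the internal block at most $\mimw_{G[X_i]}(T_i,\delta_i)$), giving $c\,q(p-q)\le c\lfloor (p/2)^2\rfloor$ in the first case and $\mimw_{G[X_i]}(T_i,\delta_i)+c(p-1)$ in the second. Note that this paper does not prove the lemma itself but imports it from~\cite{BHP}; your reconstruction matches the argument given there, including the two bookkeeping points you flag (subdivision preserving the cuts of $T_i$, and the blocks inheriting the induced-matching property because all relevant edges of $G[X_a,X_b]$ already lie in $G[A_e,\overline{A_e}]$).
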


\subsection{Mim-width of Some Basic Classes}
Recall that Vatshelle~\cite{Va12} showed that the class of grids has unbounded mim-width.
We next prove that the same holds for the class of walls, which we define momentarily.
Thus, we obtain a class of graphs with maximum degree~$3$ having unbounded mim-width, and we will use this result in order to prove \cref{split}.
Note that it also gives us a dichotomy, as graphs with maximum degree~$2$ have bounded clique-width and hence bounded mim-width.

A \emph{wall of height $h$ and width $r$} (an \emph{$(h \times r)$-wall} for short) is the graph obtained from the grid of height~$h$ and width~$2r$ as follows.
Let $C_1, \dots, C_{2r}$ be the set of vertices in each of the $2r$ columns of the grid, in their natural left-to-right order.
For each column $C_{j}$, let $e_{1}^{j}, e_{2}^{j}, \dots, e_{h-1}^{j}$ be the edges between two vertices of $C_j$, in their natural top-to-bottom order. If $j$ is odd, we delete all edges $e_{i}^{j}$ with $i$ even. If $j$ is even, we delete all edges $e_{i}^{j}$ with $i$ odd. We then remove all vertices of the resulting graph whose degree is~$1$. This final graph is an \textit{elementary $(h \times r)$-wall} and any subdivision of the elementary $(h \times r)$-wall is an \textit{$(h \times r)$-wall}.  For an example, see \cref{figwall}.

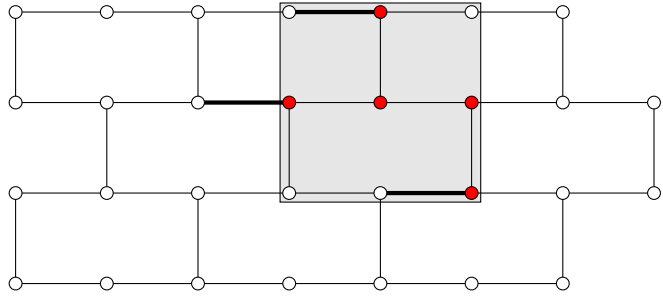
\begin{figure}
\begin{center}
\begin{tikzpicture}[scale=1.2]
\draw [fill=gray!20!white] (-0.1,-0.1) rectangle (2.1,2.1);
\draw (-3,2)--(3,2) (-3,1)--(4,1) (-3,0)--(4,0) (-3,-1)--(3,-1)
(-3,-1)--(-3,0) (-1,-1)--(-1,0) (1,-1)--(1,0) (3,-1)--(3,0)
(-2,0)--(-2,1) (0,0)--(0,1) (2,0)--(2,1) (4,0)--(4,1)
(-3,1)--(-3,2) (-1,1)--(-1,2) (1,1)--(1,2) (3,1)--(3,2);
\draw[ultra thick] (-1,1)--(0,1) (0,2)--(1,2)  (1,0)--(2,0);
\draw[fill=red]
(1,2) circle [radius=2pt] (0,1) circle [radius=2pt] (1,1) circle [radius=2pt] (2,1) circle [radius=2pt] (2,0) circle [radius=2pt];
\draw[fill=white] (-3,2) circle [radius=2pt] (-2,2) circle [radius=2pt] (-1,2) circle [radius=2pt] (2,2) circle [radius=2pt] (3,2) circle [radius=2pt] 
(0,2) circle [radius=2pt] (-3,1) circle [radius=2pt] (-2,1) circle [radius=2pt] (-1,1) circle [radius=2pt] (3,1) circle [radius=2pt] 
(4,1) circle [radius=2pt] (-3,0) circle [radius=2pt] (-2,0) circle [radius=2pt] (-1,0) circle [radius=2pt] (0,0) circle [radius=2pt] 
(1,0) circle [radius=2pt] (3,0) circle [radius=2pt] (4,0) circle [radius=2pt] (-3,-1) circle [radius=2pt](-2,-1) circle [radius=2pt] 
(-1,-1) circle [radius=2pt] (0,-1) circle [radius=2pt] (1,-1) circle [radius=2pt] (2,-1) circle [radius=2pt] (3,-1) circle [radius=2pt];
\end{tikzpicture}
\caption{An elementary $(4 \times 4)$-wall.  We illustrate an example of the case where $h \geq \sqrt[4]{n(W)/3}$ and $r < 2n$ in the proof of \cref{walls}:
$Q$ consists of the red vertices, $B$ is the the grey box, and the thick edges are a matching in $W[A_e,\overline{A_e}]$.}\label{figwall}
\end{center}
\end{figure}

\begin{theorem} \label{walls}
Let $W$ be an elementary $(n \times n)$-wall with $n \geq 7$. Then $\mimw(W) \geq \frac{\sqrt{n}}{50}$. In particular, the class of walls has unbounded mim-width.
\end{theorem}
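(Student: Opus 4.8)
The plan is to establish the bound $\mimw(W) \geq \sqrt{n}/50$ for an elementary $(n\times n)$-wall $W$ by a case analysis on the structure of a branch decomposition $(T,\delta)$, showing that some edge $e \in E(T)$ forces a large induced matching in the corresponding cut $W[A_e,\overline{A_e}]$. First I would fix an arbitrary branch decomposition $(T,\delta)$ of $W$ and, by a standard balanced-separator argument for subcubic trees (the kind used for grid/rank-width lower bounds), locate an edge $e \in E(T)$ whose corresponding partition $(A_e,\overline{A_e})$ is "balanced" in the sense that both $A_e$ and $\overline{A_e}$ contain a constant fraction, say at least $n(W)/3$, of the vertices of $W$, where $n(W) = \Theta(n^2)$ is the number of vertices. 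The goal is then to exhibit an induced matching of size $\Omega(\sqrt{n})$ in $W[A_e,\overline{A_e}]$ for this particular $e$.

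To find such an induced matching I would exploit the grid-like geometry of the wall. Since both sides of the partition are large, one of two situations arises, mirroring the caption of \cref{figwall}: either the "taller" structure dominates, i.e.\ many rows each contain vertices of both $A_e$ and $\overline{A_e}$ (the case $h \geq \sqrt[4]{n(W)/3}$ in the figure), or the partition is concentrated so that a large axis-aligned sub-box $B$ of the wall lies essentially on one side. In the first case, each such "mixed" row contributes at least one edge of $W$ crossing the cut, and because rows of the wall are far apart (and the wall has maximum degree $3$, so columns linking consecutive rows are sparse), one can greedily select crossing edges from every $k$-th mixed row, for a suitable constant spacing $k$, so that the chosen edges form an induced matching in $W[A_e,\overline{A_e}]$; the number of rows being $n$, this yields $\Omega(n / k) = \Omega(\sqrt n)$ matching edges after discarding rows to guarantee independence. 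In the second case one argues symmetrically using columns, or uses the fact that the large monochromatic box $B$ must have long monochromatic boundary whose "other side" is on the opposite part, again extracting $\Omega(\sqrt n)$ well-separated crossing edges. The constant $50$ is then just bookkeeping to absorb the spacing factor, the $1/3$ from the balanced separator, and the loss incurred by the $2r=2n$ relationship between grid width and wall width.

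The final sentence, that the class of walls has unbounded mim-width, is then immediate: for any constant $c$, choosing $n \geq \max\{7, (50c)^2 + 1\}$ gives an elementary $(n\times n)$-wall with mim-width greater than $c$, and all these are genuinely graphs of the class of walls (indeed of maximum degree $3$), so no constant bounds the mim-width over the class.

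\textbf{Main obstacle.} The hard part will be the case analysis guaranteeing an \emph{induced} matching rather than just a matching in the cut: two crossing edges from nearby rows or columns of the wall may have an edge of $W$ joining their endpoints, so I must argue carefully about distances in the wall and about how the bipartition can interact with the brick structure, choosing the spacing constant large enough that selected crossing edges are pairwise at distance at least $2$ in $W$. Getting the balanced-separator step to yield a clean enough dichotomy — so that in each branch the geometry really does force $\Omega(\sqrt n)$ separated crossing edges — is where the real work lies; the rest is routine estimation.
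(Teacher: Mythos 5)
Your opening move coincides with the paper's: fix $(T,\delta)$, and use the standard balanced-cut fact (here \cite[Lemma 2.3]{KKST17}) to find an edge $e$ with $|A_e|,|\overline{A_e}|\geq n(W)/3$. After that, however, the proposal has two genuine problems. First, the difficulty you single out as the ``main obstacle'' --- upgrading a matching in the cut to an \emph{induced} matching --- is exactly the step the paper dispatches in one line and that your sketch leaves unresolved: since $W$ is subcubic, a lemma of Kanj, Pelsmajer, Schaefer and Xia \cite[Lemma 4.10]{KPSX11} says that in a graph all of whose subgraphs have average degree at most $d$, every matching $M$ contains an induced matching of size at least $|M|/(2d-1)$. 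With this, it suffices to exhibit an ordinary matching of size $\sqrt{n}/10$ in $W[A_e,\overline{A_e}]$, and all of your proposed ``spacing'' bookkeeping disappears. Without it, your plan to take crossing edges from every $k$-th mixed row handles only horizontal crossing edges in well-separated rows; the symmetric column case and the interaction between horizontal and vertical crossing edges are not covered, and this is precisely where you admit the real work lies.

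Second, and more seriously, the dichotomy you propose is not exhaustive, and its second branch does not work as described. The negation of ``many rows are mixed'' is ``most rows are monochromatic,'' which does \emph{not} imply that a large axis-aligned box lies essentially on one side (for instance, $A_e$ could be a union of pairwise far-apart full rows). Moreover, even granting a large box $B$ entirely contained in $A_e$, the vertices adjacent to $B$ from outside need not lie in $\overline{A_e}$, so the claim that $B$ has a ``long monochromatic boundary whose other side is on the opposite part'' is unjustified --- you would extract no crossing edges at all from such a box. The paper's actual case distinction is different and is what makes the argument go through: it splits on whether some connected \emph{component} $Q$ of $W[A_e]$ or $W[\overline{A_e}]$ has at least $\sqrt{n(W)/3}$ vertices. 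If so, the bounding rectangle of $Q$ has a long side, and \emph{connectivity of $Q$} guarantees that every row (or every layer of columns) of that rectangle meets $Q$, so that the leftmost (or uppermost) $Q$-vertex in each row has a neighbour in the other class --- this is where the crossing edges come from. If not, there are more than $\sqrt{n(W)/3}$ components, each supplying a vertex with a neighbour across the cut, and the subcubic graph induced by these pairs has a matching of linear size by \cite{BDDFK04}. To repair your proof you would need either to adopt this component-based dichotomy or to prove a correct replacement for your second case; as written, the argument fails there.
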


\begin{proof} We let $n(W) = |V(W)| = 2n^{2} - 2$. Consider now a branch decomposition $(T,\delta)$ for $W$. There exists $e \in E(T)$ such that both partition classes $A_{e}$ and $\overline{A_{e}}$ of $V(W)$ contain at least $n(W)/3$ vertices
\cite[Lemma 2.3]{KKST17}. Kanj et al.~\cite[Lemma 4.10]{KPSX11} showed that if $G$ is a graph such that each of its subgraphs has average degree at most $d$, then any matching $M$ in $G$ contains an induced matching in $G$ of size at least $|M|/(2d-1)$. Since~$W$ is subcubic, it is sufficient to show that $W[A_{e}, \overline{A_{e}}]$ has a matching of size $\sqrt{n}/10$. We distinguish two cases, according to whether or not one of $W[A_{e}]$ and $W[\overline{A_{e}}]$ has a component of size at least $\sqrt{n(W)/3}$. 

Suppose first that $W[A_{e}]$ has a component $Q$ of size at least $\sqrt{n(W)/3}$. The component $Q$ is contained in a rectangle of the underlying $n \times 2n$ grid. Consider the smallest such rectangle $B$, i.e., the rectangle whose horizontal sides contain the uppermost and lowermost vertex in $Q$ and whose vertical sides contain the leftmost and rightmost vertex in $Q$. Let $h$ and $r$ be the height and width of $B$, respectively. Since $|V(Q)| \geq \sqrt{n(W)/3}$, one of $h$ and $r$ is at least $\sqrt[4]{n(W)/3}$. 

Suppose first that $h \geq \sqrt[4]{n(W)/3}$. If $r < 2n$, say without loss of generality $B$ does not intersect column~$C_{1}$, we do the following. For each row of $B$, consider the leftmost vertex of $Q$ in that row (since~$Q$ is connected, each row contains at least one vertex of $Q$). Clearly, the left neighbours of each such vertex belongs to $\overline{A_{e}}$, and so we have a matching in $W[A_{e}, \overline{A_{e}}]$ of size $h - 2 \geq \sqrt[4]{n(W)/3} - 2$, which is at least $\sqrt{n}/10$ when $n \geq 7$. If $r = 2n$, we distinguish two cases according to whether $h = n$ or not. In the first case (i.e., $r = 2n$ and $h = n$) we argue as follows. Since $Q$ is connected, each row of $B$ contains a vertex of $Q \subseteq A_{e}$. Moreover, there are at most $2n/3$ rows of $B$ with all vertices contained in $A_{e}$, for otherwise $|A_{e}| > (2n/3)\cdot 2n \geq 2n(W)/3$. So there are at least $n/3$ rows of $B$ containing a vertex of $A_{e}$ and a vertex of $\overline{A_{e}}$. We can therefore find a matching in $W[A_{e}, \overline{A_{e}}]$ of size at least $n/3$. In the second case (i.e., $r = 2n$ and $h < n$), we proceed as follows. We assume, without loss of generality, that $B$ does not intersect the uppermost row of the grid. We partition the columns of $B$ into disjoint layers containing two consecutive columns each. For each layer, we consider its left column and the uppermost vertex $v \in A_{e}$ therein (since $Q$ is connected, such a vertex exists). Let $v_{1}$ be the vertex on the grid above $v$, let $v_{2}$ be the vertex to the right of $v$ and let~$v_{3}$ be the vertex above $v_{2}$. By construction, $v_{1} \in \overline{A_{e}}$ and if $vv_{1} \in E(W)$, we select this edge. Otherwise, $vv_{1} \notin E(W)$ and so $v_{2}v_{3} \in E(W)$ and we have a path $vv_{2}v_{3}v_{1}$ in $W$ with $v \in A_{e}$ and $v_{1} \in \overline{A_{e}}$. We then select an edge of this path which belongs to $W[A_{e}, \overline{A_{e}}]$. Proceeding similarly for each layer, we obtain a matching in $W[A_{e}, \overline{A_{e}}]$ of size at least $r/2 = n$. Suppose finally that $h < \sqrt[4]{n(W)/3}$. We have that $r \geq \sqrt[4]{n(W)/3}$ and we proceed exactly as in the case $r = 2n$ and $h < n$ to obtain a matching in $W[A_{e}, \overline{A_{e}}]$ of size at least $r/2 \geq \sqrt[4]{n(W)/3}/2$. 

It remains to consider the situation in which all components of $W[A_{e}]$ and $W[\overline{A_{e}}]$ have size less than $\sqrt{n(W)/3}$. In particular, since $W[A_{e}]$ has more than $n(W)/3$ vertices, it has more than $\sqrt{n(W)/3}$ components. Let $Q_{1}, \dots, Q_{k}$ be these components. For each $i \in \{1, \dots, k\}$, there exists a vertex $u_{i} \in Q_{i}$ with a neighbour $v_{i} \in \overline{A_{e}}$, as $W$ is connected. Let $H$ be the subgraph of $W[A_{e}, \overline{A_{e}}]$ induced by $\{u_{1}, \dots, u_{k}\} \cup \{v_{1}, \dots, v_{k}\}$ (notice that we might have $v_{i} = v_{j}$ for some $i \neq j$). Let $H_{1}, \dots, H_{\ell}$ be the components of $H$ and let $n_{i} = |V(H_{i})|$, for each $i \in \{1, \dots, \ell\}$. By construction, $n_{i} \geq 2$, for each $i$. Moreover, since $H_{i}$ is a connected subcubic graph, it has a matching of size at least $(n_{i}-1)/3 \geq n_{i}/6$ \cite{BDDFK04}.
But then $H$ has a matching of size
\[\sum_{i = 1}^{\ell}\frac{n_{i}}{6} = \frac{|V(H)|}{6} \geq \frac{k}{6} \geq \frac{1}{6}\cdot\sqrt{\frac{n(W)}{3}}.\]
As in all cases we find a matching in $W[A_{e}, \overline{A_{e}}]$ of size at least $\frac{\sqrt{n}}{10}$, this concludes the proof.        
\end{proof}

\begin{corollary}
For an integer $\Delta$, let ${\cal G}_\Delta$ be the class of graphs of maximum degree at most~$\Delta$. 
Then the mim-width of ${\cal G}_\Delta$ is bounded if and only if $\Delta\leq 2$. 
\end{corollary}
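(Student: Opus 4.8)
The plan is to establish the two implications of the equivalence separately; the ``only if'' direction is an immediate application of \cref{walls}, so the only thing to recall for the ``if'' direction is why graphs of maximum degree at most~$2$ have bounded mim-width.

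For the ``if'' direction I would argue as follows. If $\Delta \le 2$, then every graph in ${\cal G}_\Delta$ is a disjoint union of paths and cycles. Such graphs have clique-width bounded by a small absolute constant, and since bounded clique-width implies bounded mim-width~\cite{Va12}, it follows that the mim-width of ${\cal G}_\Delta$ is bounded. If one wants an explicit bound, \cref{l-2con} reduces the question to the blocks, which are single edges or cycles, and these are easily checked to have mim-width~$1$; hence every graph in ${\cal G}_1 \cup {\cal G}_2$ has mim-width at most~$1$.

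For the ``only if'' direction I would observe that for $\Delta \ge 3$ the class ${\cal G}_\Delta$ contains every subcubic graph. By construction, every elementary $(n \times n)$-wall is subcubic, so ${\cal G}_\Delta$ contains all elementary walls. Applying \cref{walls}, the elementary $(n \times n)$-wall has mim-width at least $\frac{\sqrt{n}}{50}$ for every $n \ge 7$; letting $n \to \infty$ shows that ${\cal G}_\Delta$ has unbounded mim-width.

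I do not anticipate any genuine obstacle: the statement follows by combining \cref{walls} with the standard fact that maximum-degree-$2$ graphs have bounded clique-width. The only step needing a moment's care is confirming that elementary walls are indeed subcubic, which is clear from their definition, since after deleting the alternating column edges and then the resulting degree-$1$ vertices, every remaining vertex has degree~$2$ or~$3$.
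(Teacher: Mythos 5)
Your proof follows exactly the route the paper intends: the paper derives this corollary directly from \cref{walls} (elementary walls are subcubic, so ${\cal G}_\Delta$ for $\Delta\geq 3$ contains a family of unbounded mim-width) together with the observation that maximum-degree-$2$ graphs have bounded clique-width and hence bounded mim-width. One caveat on your optional refinement: your claim that cycles have mim-width~$1$ is false --- already for $C_6$, every cut with both sides of size at least~$2$ admits an induced matching of size~$2$ in the cut graph (e.g.\ for $A=\{v_1,v_2\}$ the boundary edges $v_1v_6$ and $v_2v_3$ form one), so $\mimw(C_n)=2$ for $n\geq 6$; this does not affect the corollary, since the clique-width argument already gives boundedness, but the explicit bound of~$1$ should be dropped or replaced by~$2$.
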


A \emph{net-wall} is a graph that can be obtained from a wall~$G$ by performing a clique implant on each vertex of $G$ having degree three. An example of part of a net-wall is given in \cref{fig-netwall4col}. 

The following lemma is a straightforward consequence of \cref{walls,implant}.
\begin{lemma}
    \label{netwalls}
    The class of net-walls has unbounded mim-width.
\end{lemma}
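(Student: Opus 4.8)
The plan is to exhibit a family of net-walls whose mim-width tends to infinity, by transferring the lower bound of \cref{walls} through a sequence of clique implants. Concretely, for each $n \ge 7$ I would take $W_n$ to be the elementary $(n \times n)$-wall (which is in particular a wall) and let $N_n$ be the net-wall obtained from $W_n$ by performing a clique implant on every degree-$3$ vertex of $W_n$. Since each $N_n$ is by definition a net-wall, it suffices to show that $\mimw(N_n)$ is unbounded.

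The key observation is that the clique implants used to construct $N_n$ may be carried out one at a time. A clique implant on a vertex $v$ deletes $v$ and replaces it by a triangle whose vertices are attached to the former neighbours of $v$, and this leaves the degree of every other vertex of the graph unchanged; in particular, every not-yet-processed degree-$3$ vertex of $W_n$ still has degree $3$ in each intermediate graph, so the next clique implant is well-defined. Hence there is a sequence $W_n = G_0, G_1, \dots, G_k = N_n$ in which each $G_{i+1}$ is obtained from $G_i$ by a single clique implant. Applying the inequality $\mimw(G) \le \mimw(G')$ from \cref{implant} along this chain yields $\mimw(W_n) = \mimw(G_0) \le \mimw(G_1) \le \dots \le \mimw(G_k) = \mimw(N_n)$.

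Combining this with \cref{walls}, which gives $\mimw(W_n) \ge \tfrac{\sqrt{n}}{50}$, shows that $\mimw(N_n) \ge \tfrac{\sqrt{n}}{50}$, which is unbounded as $n \to \infty$; hence the class of net-walls has unbounded mim-width. There is no genuine obstacle here: the only point requiring (minor) care is verifying that the clique implants may legitimately be applied in succession — that is, that processing one degree-$3$ vertex does not lower the degree of any other — which is immediate from the definition of the operation. Note also that only the lower-bound half of \cref{implant} is needed; the term $d(v)$ in the upper bound plays no role.
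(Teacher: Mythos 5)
Your proof is correct and follows exactly the route the paper intends: the paper derives \cref{netwalls} as a direct consequence of \cref{walls} and the lower-bound inequality of \cref{implant}, applied once per degree-$3$ vertex of the wall. Your additional check that successive clique implants remain well-defined (since an implant does not change the degrees of the remaining vertices) is a sound, if minor, point of care.
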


Mengel~\cite{Me18} showed that strongly chordal split graphs, or equivalently $(\sun_3,\sun_4,\ldots)$-free split graphs, have unbounded mim-width.
We find two more subclasses of split graphs with unbounded mim-width by using \cref{subdivision,kpartite}.

\begin{lemma}
    \label{split}
    Let $\mathcal{G}$ be the class of split graphs, or equivalently $(C_4,C_5,2P_2)$-free graphs, where one of the following properties is satisfied by every $G \in \mathcal{G}$:
    \begin{enumerate}[(i)]
    \item $G$ has a split partition $(C,I)$ where each vertex in $I$ has degree~$2$ and each vertex in $C$ has at most three neighbours in $I$,
    \item $G$ has a split partition $(C,I)$ where each vertex in $I$ has degree at most $3$, and each vertex in $C$ has two neighbours in $I$, or
    \item $G$ is $\sun_t$-free $t\geq 3$.
   \end{enumerate} 
   Then $\mathcal{G}$ has unbounded mim-width.
\end{lemma}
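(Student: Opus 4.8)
The plan is to derive all three parts from the unbounded mim-width of walls (\cref{walls}) together with the two operation lemmas used in the surrounding discussion, \cref{subdivision,kpartite}; part (iii) will instead be a consequence of Mengel's theorem. The common idea for (i) and (ii) is to turn a subcubic graph of large mim-width into a split graph by first ``spreading out'' its edge set via subdivisions — so that one colour class of the resulting bipartite graph has all degrees equal to $2$ — and then completing one of the two classes into a clique, which by \cref{kpartite} costs only a factor of~$2$ in mim-width.

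Concretely, fix $n \ge 7$, let $W$ be the elementary $(n\times n)$-wall (so $\mimw(W) \ge \sqrt{n}/50$ by \cref{walls}, and $W$ is subcubic), and let $W'$ be obtained from $W$ by $1$-subdividing every edge. Then $W'$ is bipartite with one class $A = V(W)$ of vertices of degree at most~$3$ and one class $B$ of subdivision vertices of degree exactly~$2$, and $\mimw(W') \ge \mimw(W)$ by iterating \cref{subdivision}. For (i) I would let $G_1$ be $W'$ with $A$ turned into a clique: then $(A,B)$ is a split partition in which every vertex of $I=B$ has degree~$2$ and every vertex of $C=A$ has at most three neighbours in $B$, so $G_1$ lies in the class of~(i); and since $G_1$ arises from the $2$-partite graph $W'$ by adding edges inside one class, \cref{kpartite} gives $\mimw(G_1) \ge \tfrac12\mimw(W') \ge \tfrac12\mimw(W)$, which is unbounded. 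For (ii) I would instead let $G_2$ be $W'$ with $B$ turned into a clique: now $(B,A)$ is a split partition in which every vertex of $I=A$ has degree at most~$3$ and every vertex of $C=B$ (a subdivision vertex of an edge $uv$ of $W$) has exactly the two neighbours $u,v$ in $A$, so $G_2$ lies in the class of~(ii), and the identical application of \cref{kpartite} shows its mim-width is unbounded.

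Part (iii) needs no new construction: a split graph is strongly chordal if and only if it is $(\sun_3,\sun_4,\dotsc)$-free, so the class in~(iii) contains every strongly chordal split graph, and these have unbounded mim-width by Mengel~\cite{Me18}.

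The only delicate point is the bookkeeping: one must check that subdividing \emph{every} edge of $W$ yields precisely the required degree pattern — all degrees~$2$ on the subdivision side, degrees at most~$3$ on the original side — and keep straight which of the two classes becomes the clique in~(i) versus~(ii). Everything else is a direct invocation of the stated lemmas, and the constant-factor loss from \cref{subdivision,kpartite} is harmless because $\mimw(W) \to \infty$ as $n \to \infty$.
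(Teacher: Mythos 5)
Your proposal is correct and follows essentially the same route as the paper: subdivide every edge of a wall, observe the resulting graph is bipartite with the required degree pattern, make one side a clique, and combine \cref{walls}, \cref{subdivision} and \cref{kpartite}, with (iii) delegated to Mengel. The only difference is that you explicitly record which class becomes the clique for (i) versus (ii), a detail the paper leaves implicit.
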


\begin{proof}
Statement (iii) is due to Mengel~\cite{Me18}. To prove (i) and (ii), let $G$ be a wall, and let $G'$ be the graph obtained by $1$-subdividing each edge of $G$.  Partition $V(G')$ into $(A,B)$, where $B$ consists of the vertices of degree two introduced by the $1$-subdivisions.  Observe that $G'$ is bipartite, with vertex bipartition $(A,B)$. Let $G''$ be the graph obtained by making one of $A$ or $B$ a clique. By \cref{subdivision,kpartite}, $\mimw(G'') \ge \mimw(G)/2$. The result now follows from \cref{walls}.
\end{proof}

A graph is {\it chordal bipartite} if it is bipartite and every induced cycle has four vertices.
Brault-Baron et al.~\cite{BCM15} showed that the class of chordal bipartite graphs has unbounded mim-width
(we describe their construction in Section~\ref{sec:unbounded}).  
Combining their result with \cref{kpartite}, after adding all edges in a colour class, yields the following:
\begin{lemma}
    \label{cobipartite}
    The class of co-bipartite graphs, or equivalently $(3P_1,C_5,\overline{C_7},\overline{C_9},\ldots)$-free graphs, has unbounded mim-width.
\end{lemma}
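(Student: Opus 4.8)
The plan is to derive this directly from the known unboundedness of mim-width for chordal bipartite graphs (Brault-Baron et al.~\cite{BCM15}), combined with \cref{kpartite}. First I would dispose of the ``or equivalently'' clause, which is pure bookkeeping: a graph is co-bipartite exactly when its vertex set partitions into two cliques, and since the bipartite graphs are precisely the $(C_3,C_5,C_7,\ldots)$-free graphs, taking complements and using $\overline{C_3}=3P_1$ together with $\overline{C_5}=C_5$ shows the co-bipartite graphs are precisely the $(3P_1,C_5,\overline{C_7},\overline{C_9},\ldots)$-free graphs.

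For the substantive claim, let $(G_t)_{t\ge 1}$ be a family of chordal bipartite graphs with $\mimw(G_t)\to\infty$, which exists by \cite{BCM15} (their construction is recalled in Section~\ref{sec:unbounded}). Fix $t$, let $(A,B)$ be a bipartition of $G_t$, and form $G_t'$ from $G_t$ by adding every edge with both endpoints in $A$ and every edge with both endpoints in $B$; that is, complete each colour class to a clique while leaving the bipartite part between $A$ and $B$ untouched. Then $V(G_t')$ is partitioned into the two cliques $A$ and $B$, so $G_t'$ is co-bipartite.

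Since $G_t$ is $2$-partite with partition classes $A$ and $B$, and $G_t'$ is obtained from $G_t$ only by adding edges inside these two classes, \cref{kpartite} with $k=2$ gives $\mimw(G_t') \ge \frac{1}{2}\,\mimw(G_t)$. As the right-hand side is unbounded in $t$, the family $(G_t')_{t\ge 1}$ witnesses that the class of co-bipartite graphs has unbounded mim-width.

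There is essentially no obstacle here beyond making the right two choices. One must complete \emph{both} colour classes to cliques: completing only one would leave the other side independent and the complement would fail to be bipartite in general. And one must invoke \cref{kpartite} rather than \cref{makeclique}, because what is needed to transport unboundedness is precisely the lower bound $\mimw(G')\ge\frac{1}{k}\mimw(G)$, not the easy upper bound $\mimw(G')\le\mimw(G)+1$. The only genuinely external ingredient is the chordal bipartite construction of \cite{BCM15}.
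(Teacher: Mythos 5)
Your proof is correct and follows essentially the same route as the paper: the paper likewise takes the chordal bipartite graphs of unbounded mim-width from \cite{BCM15}, makes the colour classes into cliques, and applies \cref{kpartite} (with $k=2$) to transfer the lower bound. Your explicit remark that \emph{both} classes must be completed is a worthwhile clarification of the paper's terser phrasing, but the argument is the same.
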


As the last result in this section we consider hereditary classes defined by one forbidden induced subgraph.
It is folklore that the class of $H$-free graphs has bounded clique-width if and only if $H \subseteq_i P_4$ (see \cite{DP16} for a proof).
It turns out that the same dichotomy holds for mim-width.

\begin{theorem}\label{t-p4} The class of $H$-free graphs has bounded mim-width if and only if $H \subseteq_i P_4$.
\end{theorem}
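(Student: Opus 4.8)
The plan is to prove both directions of the equivalence, leaning on the corresponding clique-width dichotomy for one direction and on the unbounded-mim-width constructions established earlier for the other.

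For the ``if'' direction, suppose $H \subseteq_i P_4$. Then the class of $H$-free graphs contains the class of $P_4$-free graphs (cographs), and more to the point, every $H$-free graph is $P_4$-free whenever $H \subseteq_i P_4$ — wait, that is backwards; rather, $P_4$-free graphs are $H$-free is false too. The correct observation is: if $H \subseteq_i P_4$ then the class of $P_4$-free graphs is contained in... no. Let me restate cleanly: since $H \subseteq_i P_4$, the class of $H$-free graphs is a \emph{subclass} of the class of nothing in particular, but it is a well-known fact (the folklore cited just above, see \cite{DP16}) that $H$-free graphs have bounded clique-width precisely when $H \subseteq_i P_4$. Since Vatshelle~\cite{Va12} showed that bounded clique-width implies bounded mim-width, the class of $H$-free graphs has bounded mim-width whenever $H \subseteq_i P_4$.

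For the ``only if'' direction, I would argue the contrapositive: if $H \not\subseteq_i P_4$, then the class of $H$-free graphs has unbounded mim-width. The key structural fact is the standard case analysis (used in clique-width arguments, e.g.\ \cite{DP16}): if $H$ is not an induced subgraph of $P_4$, then $H$ contains as an induced subgraph at least one of $K_{1,3}$ (the claw), $\overline{2P_1+P_2}=\overline{K_{1,3}}$... more precisely, $H$ contains one of a short list of small graphs, and consequently the class of $H$-free graphs contains the class of $F$-free graphs for some $F$ from that list. The cleanest route: if $H \not\subseteq_i P_4$, then either $H$ has an induced $K_{1,3}$, or $H$ has an induced $C_4$, or $H$ has an induced $2P_1$ together with an edge arranged as $\overline{P_3}$, etc. In each case one of $3P_1$, $K_3$, $C_4$, $K_{1,3}$, $P_4+P_1$ (or rather, one concludes $H$ is \emph{not} a subgraph of $P_4$ iff $H$ contains an induced subgraph isomorphic to $K_{1,3}$, $2P_2$, $P_1+P_3$... ) — the point is that the complement handles some cases. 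To keep it clean I would split into: (a) some component of $H$ is not a path, hence $H \supseteq_i K_{1,3}$, so $H$-free graphs include all $K_{1,3}$-free graphs, which include all graphs of maximum degree $3$ that are line graphs of such — better yet, they include all subcubic $K_{1,3}$-free... hmm, that is not obviously unbounded. Actually the cleanest sources already set up in this paper: walls are subcubic and have unbounded mim-width (\cref{walls}), but walls contain induced $K_{1,3}$ so they are not a witness for the $K_{1,3}$ case. Instead I'd use: $K_{1,3}$-free graphs include all complete graphs... no, those have mim-width $1$.

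I expect the main obstacle to be exactly this bookkeeping for the ``only if'' direction: identifying, for each way $H$ fails to embed in $P_4$, a concrete subclass of $H$-free graphs already known (or provable via Section~\ref{s-mim}) to have unbounded mim-width. The natural witnesses are: when $H$ has a connected component that is not a subpath, $H \supseteq_i K_{1,3}$, and then $H$-free graphs contain all complete bipartite graphs $K_{n,n}$ — but $K_{n,n}$ is not $K_{1,3}$-free. So instead one uses that $K_{1,3}$-free graphs contain all line graphs; the line graph of a wall is $K_{1,3}$-free, has bounded degree, and by \cref{walls} together with \cref{vertexdeletion,implant}-type arguments (line graphs of subcubic graphs relate to clique implants) has unbounded mim-width. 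When $H$ is a linear forest with at least two components or a path on at least $5$ vertices, $H \supseteq_i 2P_2$ or $P_1+P_4$, and then $H$-free graphs contain all complements of triangle-free graphs, or all split graphs, or all chordal graphs — and chordal graphs, split graphs with the properties in \cref{split}, and co-bipartite graphs all have unbounded mim-width by the lemmas of this section. Finally, by passing to complements and invoking the mim-width-$1$ self-complementarity is \emph{not} available in general, so instead one notes $\overline{H} \not\subseteq_i P_4$ too (since $\overline{P_4}=P_4$), and repeats. I would structure the proof as: reduce to $H$ being either $K_{1,3}$, $\overline{K_{1,3}}$, $C_4$, $\overline{C_4}=2P_2$, $2P_2$... $\overline{2P_2}=C_4$, $P_1+P_3$, $\overline{P_1+P_3}$, $K_3$... and for each of these finitely many minimal ``bad'' $H$ exhibit a subclass with unbounded mim-width drawn from \cref{walls}, \cref{split}, \cref{cobipartite}, the chordal-bipartite result, and known unboundedness for chordal/circle graphs; then conclude by heredity that $H$-free graphs have unbounded mim-width for every $H \not\subseteq_i P_4$.
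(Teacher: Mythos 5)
Your ``if'' direction is essentially the paper's: $H \ssi P_4$ implies every $H$-free graph is $P_4$-free, and $P_4$-free graphs have clique-width at most~$2$, hence bounded mim-width. That part is fine.

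The ``only if'' direction is where the genuine gap lies: it is never actually completed. You correctly identify the strategy (for every $H \not\ssi P_4$, exhibit a hereditary subclass of the $H$-free graphs with unbounded mim-width) and you name the right witness classes (split, co-bipartite, chordal bipartite), but you never carry out the case analysis. You do not determine the minimal graphs failing to embed in $P_4$ (they are $3P_1$, $K_3$, $C_4$, $2P_2$ and $C_5$), and the one case you attempt in detail --- $H \si K_{1,3}$ via line graphs of walls --- is both unsubstantiated (the paper never proves line graphs of walls have unbounded mim-width; the nearest result is \cref{netwalls}) and unnecessary, since $K_{1,3}\si 3P_1$ and the co-bipartite witness of \cref{cobipartite} already covers that case. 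You also assert that complements of triangle-free graphs would witness the $2P_2$ case, but these are only $3P_1$-free, not $2P_2$-free (indeed $2P_2=\overline{C_4}$ is itself the complement of a triangle-free graph). The paper sidesteps the enumeration entirely: if the class of $H$-free graphs has bounded mim-width, then $H$ must occur as an induced subgraph of a member of each hereditary class of unbounded mim-width, so $H$ is simultaneously chordal bipartite (hence $C_3$-free, and being also $C_4$-free it is a forest), co-bipartite (hence $3P_1$-free) and split (hence $(C_4,C_5,2P_2)$-free); a $(3P_1,2P_2)$-free forest is an induced subgraph of $P_4$. Your sketch could be completed along either route, but as written it stops at the bookkeeping, which is precisely the content of the proof.
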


\begin{proof} If $H \subseteq_i P_4$, then $H$-free graphs form a subclass of $P_4$-free graphs.
Every $P_4$-free graph has clique-width at most $2$~\cite{CO00} and so mim-width at most $2$~\cite{Va12}.
Suppose now that $H$ is a graph such that the class of $H$-free graphs has bounded mim-width. 
Recall that chordal bipartite graphs have unbounded mim-width~\cite{BCM15}
(see also Section~\ref{sec:unbounded}). Hence, $H$ is $C_3$-free. As 
co-bipartite graphs, and thus $3P_1$-free graphs, and split graphs, or equivalently, $(C_4,C_5,2P_2)$-free graphs, have unbounded mim-width by \cref{cobipartite,split}, this means that
$H$ is a $(3P_1,2P_2)$-free forest. It follows that $H\ssi P_4$. 
\end{proof}

\section{New Bounded Cases}
\label{sec:bounded}

In this section, we present three general classes and two further specific classes, of $(H_1,H_2)$-free graphs having bounded mim-width, but unbounded clique-width.
First, we present the three infinite families of classes of $(H_1,H_2)$-free graphs.
We show that for a class in one of these three families, there exists a constant $k$ such that for every graph $G$ in the class, and every $X \subseteq V(G)$, we have that $\cutmim_G(X, \overline{X}) \le k$.
This implies that every branch decomposition of $G$ has mim-width at most $k$.
Thus, {\it for a graph in one of these classes, a branch decomposition of constant mim-width is quickly computable}: any branch decomposition will suffice.
Finally, we present two more classes of $(H_1,H_2)$-free graphs having bounded mim-width, 
which do not have this property, but for which we prove that a branch decomposition of constant width can be computed in polynomial-time.

We make use of Ramsey theory.
By Ramsey's Theorem, for all positive integers $a$ and $b$, there exists an integer $R(a,b)$ such that if $G$ is a graph on at least $R(a,b)$ vertices, then $G$ has either a clique of size $a$, or an independent set of size $b$.

Recall that $K_r \boxminus K_r$ is the graph obtained from $2K_r$ by adding a perfect matching and that $K_r \boxminus rP_1$ is the graph obtained from $K_r\boxminus K_r$ by removing all the edges in one of the complete graphs.
We let $K_r \boxminus P_1$ denote the graph obtained from $K_r$ by adding a single vertex, attached to $K_r$ by a single pendant edge.
We also denote $\overline{C_4+P_1}$ as $\bowtie$.
Examples of these graphs are given in \cref{boundedegfigs}.

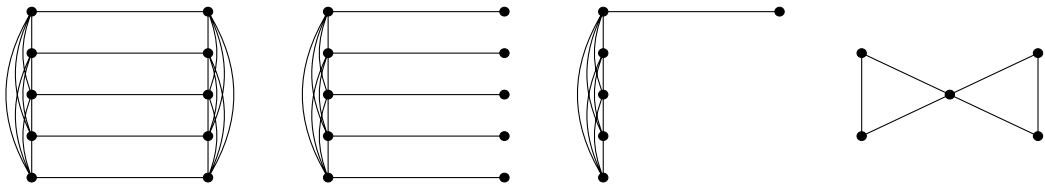
\begin{figure}[htb]
\begin{minipage}{0.22\textwidth}
\centering
\begin{tikzpicture}[xscale=0.58,yscale=.55]
\draw
(2,-2)--(-2,-2)--(-2,2)--(2,2) (-2,-1)--(2,-1) (-2,0)--(2,0) (-2,1)--(2,1) (2,-2)--(2,2)
(-2,-2) to[out=110,in=250] (-2,0) (-2,-1) to[out=110,in=250] (-2,1) (-2,0) to[out=110,in=250] (-2,2) (-2,-2) to[out=115,in=245] (-2,1)
(-2,-1) to[out=115,in=245] (-2,2) (-2,-2) to[out=120,in=240] (-2,2) (2,-2) to[out=70,in=290] (2,0) (2,-1) to[out=70,in=290] (2,1)
(2,0) to[out=70,in=290] (2,2) (2,-2) to[out=65,in=295] (2,1) (2,-1) to[out=65,in=295] (2,2) (2,-2) to[out=60,in=300] (2,2);
\draw[fill=black]
(-2,-2) circle [radius=3pt] (-2,-1) circle [radius=3pt] (-2,0) circle [radius=3pt] (-2,1) circle [radius=3pt] (-2,2) circle [radius=3pt] 
(2,-2) circle [radius=3pt] (2,-1) circle [radius=3pt] (2,0) circle [radius=3pt] (2,1) circle [radius=3pt] (2,2) circle [radius=3pt];
\end{tikzpicture}
\end{minipage}
\qquad
\begin{minipage}{0.20\textwidth}
\centering
\begin{tikzpicture}[xscale=0.58,yscale=.55]
\draw
(2,-2)--(-2,-2)--(-2,2)--(2,2) (-2,-1)--(2,-1) (-2,0)--(2,0) (-2,1)--(2,1)
(-2,-2) to[out=110,in=250] (-2,0) (-2,-1) to[out=110,in=250] (-2,1) (-2,0) to[out=110,in=250] (-2,2) 
(-2,-2) to[out=115,in=245] (-2,1) (-2,-1) to[out=115,in=245] (-2,2) (-2,-2) to[out=120,in=240] (-2,2);
\draw[fill=black]
(-2,-2) circle [radius=3pt] (-2,-1) circle [radius=3pt] (-2,0) circle [radius=3pt] (-2,1) circle [radius=3pt] 
(-2,2) circle [radius=3pt] (2,-2) circle [radius=3pt] (2,-1) circle [radius=3pt] (2,0) circle [radius=3pt] 
(2,1) circle [radius=3pt] (2,2) circle [radius=3pt];
\end{tikzpicture}
\end{minipage}
\qquad
\begin{minipage}{0.20\textwidth}
\centering
\begin{tikzpicture}[xscale=0.58,yscale=.55]
\draw
(-2,-2)--(-2,2)--(2,2)
(-2,-2) to[out=110,in=250] (-2,0) (-2,-1) to[out=110,in=250] (-2,1) (-2,0) to[out=110,in=250] (-2,2) 
(-2,-2) to[out=115,in=245] (-2,1) (-2,-1) to[out=115,in=245] (-2,2) (-2,-2) to[out=120,in=240] (-2,2);
\draw[fill=black]
(-2,-2) circle [radius=3pt] 
(-2,-1) circle [radius=3pt] 
(-2,0) circle [radius=3pt] 
(-2,1) circle [radius=3pt] 
(-2,2) circle [radius=3pt] 
(2,2) circle [radius=3pt];
\end{tikzpicture}
\end{minipage}
\qquad
\begin{minipage}{0.20\textwidth}
\centering
\begin{tikzpicture}[xscale=0.58,yscale=.55]
\draw (-2,-1)--(-2,1)--(0,0)--(-2,-1);
\draw (2,-1)--(2,1)--(0,0)--(2,-1);
\draw[fill=black]
(-2,-1) circle [radius=3pt] 
(-2,1) circle [radius=3pt] 
(0,0) circle [radius=3pt] 
(2,-1) circle [radius=3pt] 
(2,1) circle [radius=3pt];
\end{tikzpicture}
\end{minipage}
\caption{The graphs $K_5\boxminus K_5$, $K_5\boxminus 5P_1$, $K_5\boxminus P_1$, and $\bowtie = \overline{C_4+P_1}$.}
\label{boundedegfigs}
\end{figure}

\begin{theorem}\label{thmbounded1}
Let $G$ be a $(K_r \boxminus rP_1, 2P_2)$-free graph for $r \geq 3$.
Then $\cutmim_G(X, \overline{X}) < \max\{6,r\}$ for every $X \subseteq V(G)$.
In particular, $\mimw(G) < \max\{6,r\}$.
\end{theorem}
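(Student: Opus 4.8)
The plan is to establish the stronger combinatorial bound that no induced matching of $G[X,\overline{X}]$ has $\max\{6,r\}$ or more edges. So suppose otherwise, and let $\{a_1b_1,\dots,a_kb_k\}$ be an induced matching of $G[X,\overline{X}]$ with $a_i\in X$, $b_i\in\overline{X}$ and $k=\max\{6,r\}$. Write $A=\{a_1,\dots,a_k\}$ and $B=\{b_1,\dots,b_k\}$ (these are $2k$ distinct vertices), and note that, since the matching is induced in $G[X,\overline{X}]$, the only edges of $G$ between $A$ and $B$ are the edges $a_ib_i$; all the work happens inside $G[A\cup B]$. I would first record two consequences of $G$ being $2P_2$-free. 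For distinct $i,j$, the set $\{a_i,b_i,a_j,b_j\}$ contains the edges $a_ib_i$ and $a_jb_j$ and has no other edges apart from possibly $a_ia_j$ and $b_ib_j$, so it would induce a $2P_2$ unless (i) $a_ia_j\in E(G)$ or $b_ib_j\in E(G)$. Similarly, if $a_ia_j\in E(G)$ and $b_ub_v\in E(G)$ with $\{i,j\}\cap\{u,v\}=\emptyset$ (for distinct $i,j$ and distinct $u,v$), then $\{a_i,a_j,b_u,b_v\}$ induces a $2P_2$; so (ii) whenever $G[A]$ has an edge $a_ia_j$ and $G[B]$ has an edge $b_ub_v$, the index sets $\{i,j\}$ and $\{u,v\}$ meet.

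The crucial step, and the one that makes the near-tight constant work, is the claim that if $G[A]$ has an edge, then $G[A]=K_k$; this is where $k\ge 6$ gets used. Say $a_1a_2\in E(G)$. By (ii), every edge of $G[B]$ meets $\{b_1,b_2\}$, so $\{b_3,\dots,b_k\}$ is independent in $G$, and then (i) forces $\{a_3,\dots,a_k\}$ to be a clique, of size $k-2\ge 4$. Suppose $a_1a_i\notin E(G)$ for some $i\ge 3$. Then (i) gives $b_1b_i\in E(G)$; but $\{a_3,\dots,a_k\}\setminus\{a_i\}$ is a clique on $k-3\ge 2$ vertices, so it contains an edge $a_ua_v$ with $\{u,v\}\cap\{1,i\}=\emptyset$, and then $a_ua_v$ and $b_1b_i$ violate (ii). Hence $a_1$, and by the same argument $a_2$, is adjacent to every vertex of $\{a_3,\dots,a_k\}$; together with $a_1a_2\in E(G)$ and the clique $\{a_3,\dots,a_k\}$ this gives $G[A]=K_k$.

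Finishing is then a two-case check. If $G[A]$ has no edge, (i) makes $B$ a clique, and then $G[A\cup B]$ is exactly $K_k\boxminus kP_1$ (clique $B$, independent set $A$, perfect matching $a_ib_i$), which contains $K_r\boxminus rP_1$ because $k\ge r$ --- contradicting that $G$ is $(K_r\boxminus rP_1)$-free. If $G[A]$ has an edge, then $G[A]=K_k$ by the claim; since $k\ge 4$ it has the two disjoint edges $a_1a_2$ and $a_3a_4$, so by (ii) $G[B]$ can have no edge at all (a single edge of $G[B]$ would have to meet both $\{1,2\}$ and $\{3,4\}$), hence $B$ is independent and again $G[A\cup B]=K_k\boxminus kP_1$ contains $K_r\boxminus rP_1$ --- a contradiction. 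Therefore $\cutmim_G(X,\overline{X})<\max\{6,r\}$ for every $X\subseteq V(G)$. Since each edge of any branch decomposition of $G$ induces a cut of exactly this form, every branch decomposition has mim-width less than $\max\{6,r\}$, so $\mimw(G)<\max\{6,r\}$ (and any branch decomposition witnesses this).

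The step I expect to be the real obstacle is not getting \emph{a} bound but getting one this small. A straightforward Ramsey argument applied to the adjacency pattern among the $a_i$'s and $b_i$'s only seems to yield something like $k\le r+1$, which is useless once $r\ge 5$, since then $r+1\ge\max\{6,r\}$. Tightening it down to $\max\{6,r\}-1$ is what forces the use of property (ii) together with the ``collapse to $K_k$'' claim; the fiddly part is lining up the arithmetic --- needing $k\ge 6$ so that the clique $\{a_3,\dots,a_k\}$ has size $\ge 4$ and, after deleting one vertex, still has size $\ge 2$ --- so that the disjoint-index-sets obstruction in (ii) can actually be triggered.
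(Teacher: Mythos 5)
Your proof is correct, and its skeleton matches the paper's: assume an induced matching of size $k=\max\{6,r\}$ across the cut, record that $2P_2$-freeness forces, for every pair of indices, an edge on at least one side (your observation (i)) and forbids edges on opposite sides with disjoint index sets (your observation (ii)), and conclude that one side must be a clique while the other is independent. Where you diverge is in how the ``one side is a complete graph'' step is reached and in the final contradiction. The paper first invokes Ramsey's theorem with $R(3,3)=6$ to seed a triangle in $X'$ or $Y'$, then grows a maximum clique of $X'$ to all of $X'$ by finding a $2P_2$ otherwise; you instead prove the collapse claim ``an edge in $G[A]$ forces $G[A]=K_k$'' directly from (ii), which dispenses with Ramsey entirely (and, as your arithmetic shows, already works for $k\ge 5$). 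For the endgame, the paper uses $(K_r\boxminus rP_1)$-freeness of $G$ to force an edge $y_iy_j$ on the other side and then exhibits a $2P_2$; you instead show the other side is independent and exhibit an induced $K_r\boxminus rP_1$ inside $G[A\cup B]\cong K_k\boxminus kP_1$. Both routes are sound and yield the same bound; yours is slightly more elementary and self-contained, whereas the paper's Ramsey step has the advantage of following the same template as its companion results for $(K_r\boxminus P_1,tP_2)$-free and $(K_r\boxminus K_r,sP_1+P_2)$-free graphs.
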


\begin{proof}
Let $k = \max\{6,r\}$ and let $(T,\delta)$ be a branch decomposition of $G$.
Towards a contradiction, suppose that there exists $X \subseteq V(G)$ such that $G[X,\overline{X}]$ has an induced matching of size at least $k$.
Let $X' = \{x_1,x_2,\dotsc,x_{k}\} \subseteq X$ and $Y' = \{y_1,y_2,\dotsc,y_{k}\} \subseteq \overline{X}$ such that $x_iy_i$ is an edge of the induced matching for each $i \in \{1,2,\dotsc,k\}$.

First, observe that for any distinct $i,j \in \{1,2,\dotsc,k\}$, either $x_ix_j$ or $y_iy_j$ is an edge, otherwise $G[\{x_i,x_j,y_i,y_j\}] \cong 2P_2$.
We claim that $X'$ or $Y'$ contains a clique of size $3$.
Since $|X'| = k \ge 6 = R(3,3)$, 
the set $X'$ contains either a clique on $3$ vertices, or an independent set on $3$ vertices.
So we may assume that $X'$ contains an independent set on $3$ vertices, $\{x_i,x_j,x_\ell\}$ say.
Then $\{y_i, y_j, y_\ell\}$ is a clique of size $3$ contained in $Y'$, proving the claim.

Without loss of generality, we may now assume that $X'$ contains a clique of size $3$.
Suppose $X'$ is not a clique.
Then there exist distinct $i,j \in \{1,2,\dotsc,k\}$ such that $x_i$ is not adjacent to $x_j$.
Now $y_iy_j$ is an edge, since $G$ is $2P_2$-free.
Let $X''$ be a maximum-sized clique contained in $X'$, so $|X''| \ge 3$.
Note that $\{x_i,x_j\} \nsubseteq X''$, since $X''$ is a clique, so we may assume that $x_j \notin X''$.
As any pair in $X'' \setminus \{x_i\}$ induces an edge that is anticomplete to the edge $y_iy_j$, 
we see that $G$ contains an induced $2P_2$, a contradiction.
We deduce that $X'$ is a clique of size $k$.
Now, since $G$ is $(K_r \boxminus rP_1)$-free, there exist distinct $i,j \in \{1,2,\dotsc,k\}$ such that $y_iy_j$ is an edge.
Note that since $k \ge 6$, there exist distinct $s,t \in \{1,2,\dotsc,k\} \setminus \{i,j\}$.
But now $x_sx_t$ is anticomplete to $y_iy_j$, contradicting that $G$ is $2P_2$-free.
\end{proof}
The class of $(K_r \boxminus rP_1, 2P_2)$-free graphs for $r \in \{1,2\}$ is a subclass of $P_4$-free graphs, and thus has bounded clique-width and mim-width.
However, for $r\ge 3$, the class of $(K_r \boxminus rP_1, 2P_2)$-free graphs has unbounded clique-width~\cite[Theorem~4.18]{DJP19}, whereas \cref{thmbounded1} shows it has bounded mim-width.
In particular, $(\net, 2P_2)$-free graphs and $(\bull, 2P_2)$-free graphs have bounded mim-width but unbounded clique-width.

In our next two results, we present two other new classes of bounded mim-width.

\begin{theorem}\label{thmbounded2}
Let $G$ be a $(K_r \boxminus P_1, tP_2)$-free graph for $r \geq 1$ and
$t \geq 1$.
Then $\cutmim_G(X, \overline{X}) < R(r,R(r,t))$ for every $X \subseteq V(G)$.
In particular, $\mimw(G) < R(r,R(r,t))$.
\end{theorem}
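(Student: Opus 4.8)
The plan is a proof by contradiction combined with a double application of Ramsey's Theorem. Suppose some $X\subseteq V(G)$ satisfies $\cutmim_G(X,\overline{X})\ge k:=R(r,R(r,t))$, and fix an induced matching $x_1y_1,\dots,x_ky_k$ of $G[X,\overline{X}]$ with each $x_i\in X$ and $y_i\in\overline{X}$. The only adjacency information this gives us is that $x_iy_j$ is an edge precisely when $i=j$; we know nothing about edges among the $x_i$'s or among the $y_i$'s, which is exactly why a single Ramsey step is not enough.

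First I would apply Ramsey's Theorem to $G[\{x_1,\dots,x_k\}]$: since $k\ge R(r,R(r,t))$, this set contains either a clique of size $r$ or an independent set of size $R(r,t)$. If $\{x_{i_1},\dots,x_{i_r}\}$ is a clique, then $\{x_{i_1},\dots,x_{i_r},y_{i_1}\}$ induces a copy of $K_r$ together with the single pendant edge $x_{i_1}y_{i_1}$, because $y_{i_1}$ is non-adjacent to $x_{i_2},\dots,x_{i_r}$; this is an induced $K_r\boxminus P_1$, a contradiction. Otherwise, pass to the corresponding $y$-vertices $\{y_{i_1},\dots,y_{i_{R(r,t)}}\}$ indexed by an independent set of $\{x_1,\dots,x_k\}$, and apply Ramsey's Theorem again to this set of at least $R(r,t)$ vertices. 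If it contains a clique of size $r$, the same argument as before (now with the roles of the two sides swapped, using the pendant edge from the clique back to its matching partner on the $x$-side) produces an induced $K_r\boxminus P_1$. If instead it contains an independent set $\{y_{j_1},\dots,y_{j_t}\}$ of size $t$, then the $t$ matching edges $x_{j_1}y_{j_1},\dots,x_{j_t}y_{j_t}$ form an induced $tP_2$: the $x_{j_\ell}$'s are pairwise non-adjacent (they were selected inside an independent set), the $y_{j_\ell}$'s are pairwise non-adjacent, and there are no cross edges $x_{j_\ell}y_{j_{\ell'}}$ with $\ell\ne\ell'$ by the induced-matching property. In every case we reach a contradiction, so $\cutmim_G(X,\overline{X})<R(r,R(r,t))$ for all $X$; since the mim-width of any branch decomposition is the maximum of $\cutmim_G(A_e,\overline{A_e})$ over its edges, $\mimw(G)<R(r,R(r,t))$ follows.

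There is no genuine obstacle here; the only care needed is in the bookkeeping of indices through the two Ramsey steps and in checking that each of the three resulting configurations really is an induced copy of the relevant forbidden graph — in particular that in the two $K_r\boxminus P_1$ cases the pendant vertex misses all but one vertex of the clique, which is precisely the induced-matching condition. It is also worth noting that the bound is consistent in the degenerate cases: for $r=1$ we have $K_1\boxminus P_1=P_2$ and $R(1,R(1,t))=1$, matching the fact that a $P_2$-free graph has no edges at all.
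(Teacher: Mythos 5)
Your proof is correct and follows essentially the same route as the paper's: a double application of Ramsey's Theorem to the endpoints of a hypothetical induced matching of size $R(r,R(r,t))$, extracting either a clique yielding an induced $K_r\boxminus P_1$ (via the induced-matching condition, which guarantees the pendant vertex misses the rest of the clique) or nested independent sets yielding an induced $tP_2$. The only cosmetic difference is that the paper picks the pendant vertex as $y_j$ for an arbitrary $j$ in the clique's index set rather than fixing $j=i_1$, which changes nothing.
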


\begin{proof}
Let $k = R(r,R(r,t))$ and let $(T,\delta)$ be a branch decomposition of $G$.
Towards a contradiction, suppose that there exists $X \subseteq V(G)$ such that $G[X,\overline{X}]$ has an induced matching of size at least $k$. Let $X' = \{x_1,x_2,\dotsc,x_{k}\} \subseteq X$ and $Y' = \{y_1,y_2,\dotsc,y_{k}\} \subseteq \overline{X}$ such that $x_iy_i$ is an edge of the induced matching for each $i \in \{1,2,\dotsc,k\}$.

Since $|X'| = k = R(r,R(r,t))$, the set $X'$ contains either a clique of size $r$, or an independent set of size $R(r,t)$. Suppose there is some $J \subseteq \{1,2,\dotsc,k\}$ such that $X_J=\{x_i : i \in J\}$ is a clique of size $r$. Then, for an arbitrarily chosen $j \in J$, the vertices $X_J \cup \{y_j\}$ induce a $K_r \boxminus P_1$, a contradiction. So $X'$ contains an independent set of size $R(r,t)$. Let $I \subseteq \{1,2,\dotsc,k\}$ such that $X_I=\{x_i : i \in I\}$ is an independent set of size $R(r,t)$, and consider the set $Y_I=\{y_i : i \in I\}$. Since $|Y_I| = R(r,t)$, the set $Y_I$ either contains a clique of size $r$, or an independent set of size $t$. In the former case, $G$ contains an induced $K_r \boxminus P_1$, while in the latter case, $G$ contains an induced $tP_2$, a contradiction.
\end{proof}

\begin{theorem}\label{thmbounded3}
Let $G$ be a $(K_r \boxminus K_r, sP_1 + P_2)$-free graph for $r \geq 1$
and $s \geq 0$.
Then $\cutmim_G(X, \overline{X}) < R(R(r,s+1),s+1)$ for every $X \subseteq V(G)$.
In particular, $\mimw(G) < R(R(r,s+1),s+1)$.
\end{theorem}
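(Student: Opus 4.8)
The plan is to follow the template of the proofs of \cref{thmbounded1,thmbounded2}: set $k = R(R(r,s+1),s+1)$, let $(T,\delta)$ be an arbitrary branch decomposition of $G$, and suppose towards a contradiction that some $X \subseteq V(G)$ is such that $G[X,\overline{X}]$ has an induced matching of size at least $k$. Write its edges as $x_iy_i$ with $x_i \in X$ and $y_i \in \overline{X}$ for $i \in \{1,\dots,k\}$, and set $X' = \{x_1,\dots,x_k\}$ and $Y' = \{y_1,\dots,y_k\}$. The only facts about this configuration I will use are that each $x_iy_i$ is an edge and that $x_iy_j \notin E(G)$ for all distinct $i,j$; nothing is constrained about edges inside $X'$ or inside $Y'$, which is exactly what forces a double application of Ramsey's theorem, with the outer and inner Ramsey numbers in $k = R(R(r,s+1),s+1)$ playing distinct roles.

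First I would apply Ramsey to $X'$. Since $|X'| = R(R(r,s+1),s+1)$, either $X'$ contains an independent set of size $s+1$, or it contains a clique of size $R(r,s+1)$. In the first case, say $\{x_{i_0},x_{i_1},\dots,x_{i_s}\}$ is independent; then $\{x_{i_1},\dots,x_{i_s}\}$ together with the edge $x_{i_0}y_{i_0}$ induces $sP_1+P_2$, because these $s$ vertices are pairwise non-adjacent, non-adjacent to $x_{i_0}$ by independence, and non-adjacent to $y_{i_0}$ by the induced-matching property — contradicting $(sP_1+P_2)$-freeness. So we may pass to a clique: let $J \subseteq \{1,\dots,k\}$ with $|J| = R(r,s+1)$ be such that $\{x_i : i \in J\}$ is a clique.

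Next I would apply Ramsey to $Y_J = \{y_i : i \in J\}$, which has size $R(r,s+1)$, so it contains either a clique of size $r$ or an independent set of size $s+1$. If $\{y_{i_1},\dots,y_{i_r}\}$ with all $i_\ell \in J$ is a clique, then $\{x_{i_1},\dots,x_{i_r}\}\cup\{y_{i_1},\dots,y_{i_r}\}$ induces exactly $K_r \boxminus K_r$: the $x$-part is a clique as a subset of the clique indexed by $J$, the $y$-part is a clique, and the matching edges $x_{i_\ell}y_{i_\ell}$ are the only edges across since $x_{i_\ell}y_{i_m}\notin E(G)$ for $\ell\neq m$. If instead $\{y_{i_0},y_{i_1},\dots,y_{i_s}\}$ with all indices in $J$ is independent, then $\{y_{i_1},\dots,y_{i_s}\}$ with the edge $x_{i_0}y_{i_0}$ induces $sP_1+P_2$ by the same bookkeeping as before. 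In all cases we reach a contradiction, so $\cutmim_G(X,\overline{X}) < k$ for every $X \subseteq V(G)$; since $\mimw_G(T,\delta)$ is the maximum of $\cutmim_G(A_e,\overline{A_e})$ over $e \in E(T)$ and each $A_e$ is a subset of $V(G)$, the bound $\mimw(G) < R(R(r,s+1),s+1)$ follows immediately from the definition of mim-width.

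I do not foresee a genuine obstacle: once one commits to Ramsey-ing first on the $X$-side and then on the $Y$-vertices surviving in the clique part, the argument is essentially forced, and it is the same two-layer structure already used in \cref{thmbounded2}. The one point needing care is verifying, in each of the two $sP_1+P_2$ sub-arguments, that the $s$ designated isolated vertices are non-adjacent to one another and to \emph{both} endpoints of the chosen $P_2$; this is precisely where the combinatorial meaning of ``induced matching'' ($x_iy_j\notin E$ for $i\neq j$) is combined with the independence produced by Ramsey. The degenerate parameter values ($r=1$, or $s=0$, where $sP_1+P_2 = P_2$ so the first Ramsey step already closes the argument) should be checked but cause no difficulty.
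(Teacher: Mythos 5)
Your proposal is correct and follows essentially the same argument as the paper: Ramsey applied to $X'$ to extract either an independent set of size $s+1$ (yielding an induced $sP_1+P_2$ via a matching edge) or a clique of size $R(r,s+1)$, followed by Ramsey on the corresponding $Y$-vertices to produce either $K_r \boxminus K_r$ or $sP_1+P_2$. The extra details you supply (verifying the cross non-adjacencies and the degenerate parameter values) are consistent with, and slightly more explicit than, the paper's proof.
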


\begin{proof}
Let $k = R(R(r,s+1),s+1)$ and let $(T,\delta)$ be a branch decomposition of $G$.
Towards a contradiction, suppose that there exists $X \subseteq V(G)$ such that $G[X,\overline{X}]$ has an induced matching of size at least $k$. Let $X' = \{x_1,x_2,\dotsc,x_{k}\} \subseteq X$ and $Y' = \{y_1,y_2,\dotsc,y_{k}\} \subseteq \overline{X}$ such that $x_iy_i$ is an edge of the induced matching for each $i \in \{1,2,\dotsc,k\}$.

Since $|X'| = k = R(R(r,s+1),s+1)$, the set $X'$ contains either a clique of size $R(r,s+1)$, or an independent set of size $s+1$. But the latter implies that $G$ has an induced $sP_1+P_2$ subgraph, a contradiction. So $X'$ contains a clique of size $R(r,s+1)$. Let $I \subseteq \{1,2,\dotsc,k\}$ such that $X_I=\{x_i : i \in I\}$ is an clique of size $R(r,s+1)$, and consider the set $Y_I=\{y_i : i \in I\}$. Since $|Y_I| = R(r,s+1)$, the set $Y_I$ either contains a clique of size $r$, or an independent set of size $s+1$. In the former case, $G$ contains an induced $K_r \boxminus K_r$, while in the latter case, $G$ contains an induced $sP_1 + P_2$, a contradiction.
\end{proof}

Note that $(K_r \boxminus P_1, tP_2)$-free graphs have
unbounded clique-width if and only if $r \ge 3$,$t \ge 3$, or $r \ge 4$, $t \ge 2$ \cite[Theorem~4.18]{DJP19}.
Note also that $(K_r \boxminus K_r, sP_1+P_2)$-free graphs have unbounded clique-width if and only if $r=2$, $s \ge 3$, or $r \ge 3$, $s \ge 2$~\cite[Theorem~4.18]{DJP19}.

\medskip
\noindent
Our final results of the section are used to resolve the remaining cases where $|V(H_1)| + |V(H_2)| \le 8$.\footnote{In Corollary~\ref{c-8} in Section~\ref{s-soa} we prove that we determined all pairs $(H_1,H_2)$ with 
$|V(H_1)| + |V(H_2)| \le 8$ for which the mim-width of $(H_1,H_2)$-free graphs is bounded and quickly computable.}
For these results, we employ the following approach.
Suppose we wish to show that the class of $(H'_1,H'_2)$-free graphs is bounded, where $H'_1 \ssi H_1$ for one of the pairs $(H_1,H_2)$ appearing in \cref{thmbounded1,thmbounded2,thmbounded3}.
  If $G$ is a $H_2$-free graph in the class, then we can compute a branch decomposition of constant mim-width by one of \cref{thmbounded1,thmbounded2,thmbounded3}.
  So it remains only to show that we can compute a branch decomposition of constant mim-width for $(H'_1,H'_2)$-free graphs having an induced subgraph isomorphic to $H_2$.
When $H_1' = 2P_2$ and $H_2' = K_{1,3}$, we exploit the structure of $(2P_2,K_{1,3})$-free graphs having an induced $K_3 \boxminus 3P_1$ to prove Lemma~\ref{t-2p2claw-lemma}. Then, by combining this lemma with \cref{thmbounded1}, we obtain \cref{t-2p2claw}.
  Similarly, when $H_1' = 2P_1+P_2$ and $H_2' = \bowtie$ (see \cref{boundedegfigs}), we use \cref{t-3p1bowtie-lemma,thmbounded3} to obtain \cref{t-3p1bowtie}.
  
  For the proofs of \cref{t-2p2claw-lemma,t-3p1bowtie-lemma}, we require the following definition.
For an integer $l \geq 1$, an {\em $l$-caterpillar} is a subcubic tree $T$ on
$2l$ vertices with $V(T) = \{ s_1,\ldots,s_l,t_1,\ldots,t_l \}$,
such that $E(T) = \{ s_it_i \;:\; 1 \leq i \leq l \} \cup \{ s_is_{i+1} \;:\; 1 \leq i \leq l-1 \}$.
Note that we label the leaves of an $l$-caterpillar $t_1,t_2,\dotsc,t_l$, in this order.
See Figure~\ref{f-cater} for an example.

\begin{figure}[h]
\begin{center}
\begin{tikzpicture}[scale=1]
\draw (-2,-0.5)--(-2,0.5)--(2,0.5)--(2,-0.5) (1,-0.5)--(1,0.5) (0,-0.5)--(0,0.5) (-1,-0.5)--(-1,0.5);
\draw[fill=black] (2,0.5) circle [radius=2pt] (1,0.5) circle [radius=2pt] (0,0.5) circle [radius=2pt]
(-1,0.5) circle [radius=2pt] (-2,0.5) circle [radius=2pt] (2,-0.5) circle [radius=2pt] (1,-0.5) circle [radius=2pt]
(0,-0.5) circle [radius=2pt] (-1,-0.5) circle [radius=2pt] (-2,-0.5) circle [radius=2pt] ;
\node[above] at (-2,0.5) {$s_1$}; \node[above] at (-1,0.5) {$s_2$};
\node[above] at (0,0.5) {$s_3$}; \node[above] at (1,0.5) {$s_4$};
\node[above] at (2,0.5) {$s_5$}; \node[below] at (-2,-0.5) {$t_1$};
\node[below] at (-1,-0.5) {$t_2$}; \node[below] at (0,-0.5) {$t_3$};
\node[below] at (1,-0.5) {$t_4$}; \node[below] at (2,-0.5) {$t_5$};
\end{tikzpicture}
\caption{The $5$-caterpillar.}\label{f-cater}
\end{center}
\end{figure}

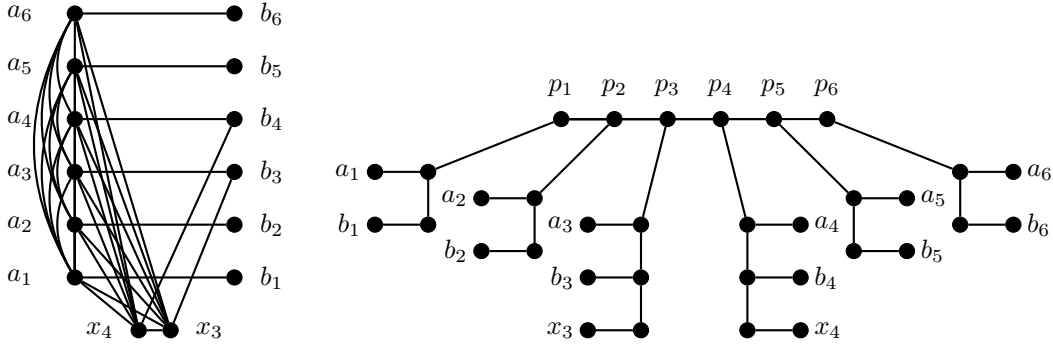
\begin{figure}[ht]
  \centering
  \begin{tikzpicture}[scale=0.7]
    \node[inner sep=0pt, minimum size=0.2cm,circle,fill,draw] (a1) at (0,0) {};
    \node[inner sep=0pt, minimum size=0.2cm,circle,fill,draw] (a2) at (0,1) {};
    \node[inner sep=0pt, minimum size=0.2cm,circle,fill,draw] (a3) at (0,2) {};
    \node[inner sep=0pt, minimum size=0.2cm,circle,fill,draw] (a4) at (0,3) {};
    \node[inner sep=0pt, minimum size=0.2cm,circle,fill,draw] (a5) at (0,4) {};
    \node[inner sep=0pt, minimum size=0.2cm,circle,fill,draw] (a6) at (0,5) {};
    \node[inner sep=0pt, minimum size=0.2cm,circle,fill,draw] (b1) at (3,0) {};
    \node[inner sep=0pt, minimum size=0.2cm,circle,fill,draw] (b2) at (3,1) {};
    \node[inner sep=0pt, minimum size=0.2cm,circle,fill,draw] (b3) at (3,2) {};
    \node[inner sep=0pt, minimum size=0.2cm,circle,fill,draw] (b4) at (3,3) {};
    \node[inner sep=0pt, minimum size=0.2cm,circle,fill,draw] (b5) at (3,4) {};
    \node[inner sep=0pt, minimum size=0.2cm,circle,fill,draw] (b6) at (3,5) {};
    \foreach\i in {1,...,6}{
      \draw (a\i) edge[thick] (b\i);
      \node[left=0.9em of a\i] {$a_{\i}$};
      \node[right=0.3em of b\i] {$b_{\i}$};
    }
    \draw (a1) edge[thick] (a2) edge[thick] (a3) edge[thick] (a4) edge[thick] (a5) edge[thick] (a6);
    \draw (a1) edge[thick,bend left] (a3);
    \draw (a1) edge[thick,bend left] (a4);
    \draw (a1) edge[thick,bend left] (a5);
    \draw (a1) edge[thick,bend left] (a6);
    \draw (a2) edge[thick,bend left] (a4);
    \draw (a2) edge[thick,bend left] (a5);
    \draw (a2) edge[thick,bend left] (a6);
    \draw (a3) edge[thick,bend left] (a5);
    \draw (a3) edge[thick,bend left] (a6);
    \draw (a4) edge[thick,bend left] (a6);
    \node[inner sep=0pt, minimum size=0.2cm,circle,fill,draw] (B2) at (1.2,-1) {};
    \node[inner sep=0pt, minimum size=0.2cm,circle,fill,draw] (B1) at (1.8,-1) {};
    \node[left=0.1cm of B2] {$x_4$};
    \node[right=0.1cm of B1] {$x_3$};
    \draw (B1) edge[thick] (B2);
    \draw (B2) edge[thick] (b4);
    \draw (B1) edge[thick] (b3);
    \draw (B1) edge[thick] (a1);
    \draw (B1) edge[thick] (a2);
    \draw (B1) edge[thick] (a3);
    \draw (B1) edge[thick] (a4);
    \draw (B1) edge[thick] (a5);
    \draw (B1) edge[thick] (a6);
    \draw (B2) edge[thick] (a1);
    \draw (B2) edge[thick] (a2);
    \draw (B2) edge[thick] (a3);
    \draw (B2) edge[thick] (a4);
    \draw (B2) edge[thick] (a5);
    \draw (B2) edge[thick] (a6);
  \end{tikzpicture}
  %\\
 \hspace{0.6em}
  \begin{tikzpicture}[scale=0.7]
    \node[inner sep=0pt, minimum size=0.2cm,circle,fill,draw] (p1) at (0,1) {};
    \node[inner sep=0pt, minimum size=0.2cm,circle,fill,draw] (p2) at (1,1) {};
    \node[inner sep=0pt, minimum size=0.2cm,circle,fill,draw] (p3) at (2,1) {};
    \node[inner sep=0pt, minimum size=0.2cm,circle,fill,draw] (p4) at (3,1) {};
    \node[inner sep=0pt, minimum size=0.2cm,circle,fill,draw] (p5) at (4,1) {};
    \node[inner sep=0pt, minimum size=0.2cm,circle,fill,draw] (p6) at (5,1) {};
    \node[inner sep=0pt, minimum size=0.2cm,circle,fill,draw] (t1) at (-2.5,0) {};
    \node[inner sep=0pt, minimum size=0.2cm,circle,fill,draw] (tt1) at (-2.5,-1) {};
    \node[inner sep=0pt, minimum size=0.2cm,circle,fill,draw] (t2) at (-0.5,-0.5) {};
    \node[inner sep=0pt, minimum size=0.2cm,circle,fill,draw] (tt2) at (-0.5,-1.5) {};
    \node[inner sep=0pt, minimum size=0.2cm,circle,fill,draw] (t3) at (1.5,-1) {};
    \node[inner sep=0pt, minimum size=0.2cm,circle,fill,draw] (tt3) at (1.5,-2) {};
    \node[inner sep=0pt, minimum size=0.2cm,circle,fill,draw] (ttt3) at (1.5,-3) {};
    \node[inner sep=0pt, minimum size=0.2cm,circle,fill,draw] (t4) at (3.5,-1) {};
    \node[inner sep=0pt, minimum size=0.2cm,circle,fill,draw] (tt4) at (3.5,-2) {};
    \node[inner sep=0pt, minimum size=0.2cm,circle,fill,draw] (ttt4) at (3.5,-3) {};
    \node[inner sep=0pt, minimum size=0.2cm,circle,fill,draw] (t5) at (5.5,-0.5) {};
    \node[inner sep=0pt, minimum size=0.2cm,circle,fill,draw] (tt5) at (5.5,-1.5) {};
    \node[inner sep=0pt, minimum size=0.2cm,circle,fill,draw] (t6) at (7.5,0) {};
    \node[inner sep=0pt, minimum size=0.2cm,circle,fill,draw] (tt6) at (7.5,-1) {};
    \draw (p1) edge[thick] (p2) edge[thick] (p3) edge[thick] (p4) edge[thick] (p5) edge[thick] (p6);
    \foreach\i in {1,...,6}{
      \draw (p\i) edge[thick] (t\i);
      \draw (t\i) edge[thick] (tt\i);
      }
      \draw (tt3) edge[thick] (ttt3);
      \draw (tt4) edge[thick] (ttt4);
      \node[inner sep=0pt, minimum size=0.2cm,circle,fill,draw] (l1) at ({-3.5+2*(1-1)},0) {};
      \node[inner sep=0pt, minimum size=0.2cm,circle,fill,draw] (ll1) at ({-3.5+2*(1-1)},-1) {};
      \node[inner sep=0pt, minimum size=0.2cm,circle,fill,draw] (l2) at ({-3.5+2*(2-1)},-0.5) {};
      \node[inner sep=0pt, minimum size=0.2cm,circle,fill,draw] (ll2) at ({-3.5+2*(2-1)},-1.5) {};
      \node[inner sep=0pt, minimum size=0.2cm,circle,fill,draw] (l3) at ({-3.5+2*(3-1)},-1) {};
      \node[inner sep=0pt, minimum size=0.2cm,circle,fill,draw] (ll3) at ({-3.5+2*(3-1)},-2) {};

        \node[inner sep=0pt, minimum size=0.2cm,circle,fill,draw] (l4) at ({-1.5+2*(4-1)},-1) {};
        \node[inner sep=0pt, minimum size=0.2cm,circle,fill,draw] (ll4) at ({-1.5+2*(4-1)},-2) {};
        \node[inner sep=0pt, minimum size=0.2cm,circle,fill,draw] (l5) at ({-1.5+2*(5-1)},-0.5) {};
        \node[inner sep=0pt, minimum size=0.2cm,circle,fill,draw] (ll5) at ({-1.5+2*(5-1)},-1.5) {};
        \node[inner sep=0pt, minimum size=0.2cm,circle,fill,draw] (l6) at ({-1.5+2*(6-1)},0) {};
        \node[inner sep=0pt, minimum size=0.2cm,circle,fill,draw] (ll6) at ({-1.5+2*(6-1)},-1) {};

      \foreach\i in {1,...,6}{
      \draw (t\i) edge[thick] (l\i);
      \draw (tt\i) edge[thick] (ll\i);}
        \node[inner sep=0pt, minimum size=0.2cm,circle,fill,draw] (lll3) at ({-3.5+2*(2)},-3) {};
        \node[inner sep=0pt, minimum size=0.2cm,circle,fill,draw] (lll4) at ({-1.5+2*(3)},-3) {};
        \draw (ttt3) edge[thick] (lll3);
        \draw (ttt4) edge[thick] (lll4);
        \foreach\i in {1,...,6}{
          \node[above=0.2em of p\i] {$p_{\i}$};
          }
          \foreach\i in {1,2,3}{
            \node[left=-0.05cm of l\i] {$a_{\i}$};
            \node[left=-0.05cm of ll\i] {$b_{\i}$};
            }
          \foreach\i in {4,5,6}{
            \node[right=-0.05cm of l\i] {$a_{\i}$};
            \node[right=-0.05cm of ll\i] {$b_{\i}$};
            }
            \node[left=-0.05cm of lll3] {$x_3$};
            \node[right=-0.05cm of lll4] {$x_4$};

  \end{tikzpicture}
  \caption{On the left, a $(2P_2, K_{1,3})$-free graph $G$, and on the right
    the branch decomposition $(T,\delta)$ of $G$ as constructed in the proof of Lemma~\ref{t-2p2claw-lemma}.
  }
  \label{f-2p2claw}
\end{figure}

\begin{lemma}\label{t-2p2claw-lemma}
Let $G$ be a connected $(2P_2,K_{1,3})$-free graph.
Given $X \subseteq V(G)$ such that $G[X] \cong K_r \boxminus rP_1$ for some $r \ge 3$, where $X$ is maximal, we can construct, in $O(n)$ time, a branch decomposition $(T, \delta)$ of $G$ such that $\mimw_G(T,\delta) = 1$.
\end{lemma}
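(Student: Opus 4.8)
The plan is to first pin down the structure of $G$ and then exhibit an explicit caterpillar decomposition. Write $A=\{a_1,\dots,a_r\}$ for the clique of $G[X]$ and $B=\{b_1,\dots,b_r\}$ for the independent set, with $a_ib_i$ the matching edges of $K_r\boxminus rP_1$, and put $C=V(G)\setminus X$ (if $C=\emptyset$ then $G=K_r\boxminus rP_1$, for which an explicit caterpillar works). Using only that $G$ is $2P_2$-free, $K_{1,3}$-free, $r\ge 3$, and $X$ maximal, I would establish: (i) $A$ dominates $G$ — this is the only place maximality is used: a vertex $v\in C$ with no neighbour in $A$ is anticomplete to $X$ (else a $2P_2$ using an edge $a_ja_k$), and tracing a shortest path from $v$ to $X$ yields, via $2P_2$- and claw-freeness, a vertex $w$ complete to $A$ and anticomplete to $B$ with $w\sim v$, so that $X\cup\{w,v\}$ induces $K_{r+1}\boxminus(r+1)P_1$, a contradiction; (ii) every vertex of $C$ has at most one neighbour in $B$ (else a $2P_2$), so $C=C_0\cup W_1\cup\dots\cup W_r$ with $W_i=N(b_i)\cap C$ and $C_0$ the rest, where each $v\in W_i$ is adjacent to $a_i$ and each $v\in C_0$ is complete to $A$; (iii) the set $Q$ consisting of $A$ and all vertices of $C$ complete to $A$ is a clique (two non-adjacent such vertices, with a suitable $a_j$ and $b_j$, give a claw at $a_j$), $C_0\subseteq Q$, and writing $P_i=W_i\setminus Q$ one has $|P_i|\le 1$ (two such vertices are non-adjacent by a $2P_2$ and then form a claw at $a_i$ with any $a_m$, $m\ne i$); (iv) if $p_i$ is the vertex of $P_i$ (when it exists), then $N(p_i)\cap A=\{a_i\}$, $p_i$ is non-adjacent to $b_j$, to $p_j$, and to $W_j\cap Q$ for all $j\ne i$ (the last by a claw at the would-be common neighbour), and each vertex of $C_0$ is adjacent to at most one of $p_1,\dots,p_r$ (else a claw at that $C_0$-vertex using an $a_m$ with $m$ outside both indices). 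From (i)--(iv) one gets $N(b_i)=\{a_i\}\cup(W_i\cap Q)\cup\{p_i\}$ and $N(p_i)\subseteq\{a_i,b_i\}\cup(W_i\cap Q)\cup C_0^{(i)}$, where $C_0^{(i)}:=N(p_i)\cap C_0$; the sets $C_0^{(1)},\dots,C_0^{(r)}$ and $C_0^{(\bot)}:=C_0\setminus\bigcup_iC_0^{(i)}$ partition $C_0$.

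\textbf{The decomposition.} I would take $(T,\delta)$ to be the caterpillar whose leaves, read along the spine, are
\[
[\,a_1,\ W_1\cap Q,\ b_1,\ p_1,\ C_0^{(1)}\,],\ \dots,\ [\,a_r,\ W_r\cap Q,\ b_r,\ p_r,\ C_0^{(r)}\,],\ C_0^{(\bot)},
\]
with the vertices inside each of the sets $W_i\cap Q$, $C_0^{(i)}$, $C_0^{(\bot)}$ in any order, and $p_i$, $C_0^{(i)}$ omitted when $P_i=\emptyset$. Given the partition of $C$ above, this is built in $O(n)$ time.

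\textbf{Verification.} The cuts of a caterpillar are exactly the prefixes $L$ of this linear order (plus singletons, which are trivial), so it suffices to show that $G[L,\overline L]$ has no induced matching of size $2$ for every prefix $L$. Suppose $\{e_1,e_2\}$ were such a matching, with $e_i=\ell_ir_i$, $\ell_i\in L$, $r_i\in\overline L$. Since $Q$ is a clique, $\ell_i$ and $r_{3-i}$ cannot both lie in $Q$ (they would be adjacent, contradicting inducedness), and the only edges with both endpoints outside $Q$ are the $b_ip_i$, which never cross as $b_i,p_i$ are consecutive; these two facts force each $e_i$ to join a vertex outside $Q$ to a vertex of $Q$, with the two outside-$Q$ endpoints on one side of the cut and their $Q$-neighbours on the other. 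Now, for any $x\notin Q$ of index $i$ (i.e.\ $x\in\{b_i,p_i\}$), $N(x)\cap Q$ lies entirely inside block $i$, with $\{a_i\}\cup(W_i\cap Q)$ preceding $b_i$ and $p_i$, and $C_0^{(i)}$ following them. A case analysis on the indices of the four endpoints relative to the cut position — using that $b_i$ is complete to $W_i\cap Q$, that $p_i$ is non-adjacent to $W_j\cap Q$ for $j\ne i$, and that $p_i\not\sim p_j$, $b_i\not\sim a_j$, $b_i\not\sim b_j$ for $j\ne i$ — rules out every case. Hence $\cutmim_G(L,\overline L)\le 1$ for all cuts, so $\mimw_G(T,\delta)\le 1$; as $G$ has an edge, $\mimw_G(T,\delta)=1$.

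The step I expect to be the main obstacle is item (iv): showing that each $p_i$ misses $W_j\cap Q$ for $j\ne i$ and that no vertex of $C_0$ sees two of the $p_i$, since these are precisely the facts that let the block ordering avoid a crossing induced $2P_2$; once the structure is in hand, the verification is routine bookkeeping.
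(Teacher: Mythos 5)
Your overall strategy is the paper's: use $2P_2$- and $K_{1,3}$-freeness to pin down the neighbourhoods of the vertices outside $X$, then lay the graph out on a caterpillar in which each $b_i$ is grouped with its private neighbourhood, so that no cut can carry two independent cross edges. The decomposition you end up with is, up to the shape of the tree, the one in the paper, and your verification of the prefix cuts is sound given the structure you claim.

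The substantive difference is that your structure theory stops short of what the forbidden subgraphs actually give, and this is exactly where your flagged ``main obstacle'' sits. The paper shows that \emph{every} vertex $v$ outside $X$ with a neighbour in $X$ is complete to $A$: if $v\sim b_i$ then, since $v$ has at most one neighbour in $B$, applying $2P_2$-freeness to $vb_i$ and $a_jb_j$ forces $va_j\in E(G)$ for every $j\neq i$, and a claw at such an $a_j$ then forces $va_i\in E(G)$ as well. Consequently your sets $P_i$ are empty --- indeed your own claims already imply this, since a vertex $p_i$ with $N(p_i)\cap A=\{a_i\}$ and $N(p_i)\cap B=\{b_i\}$ yields the induced $2P_2$ on $\{p_i,b_i,a_j,b_j\}$ --- so all of item (iv), the $C_0^{(i)}$ bookkeeping, and the ``$b_ip_i$ never crosses'' step (which as stated is false for the one prefix cut lying between $b_i$ and $p_i$) can be deleted. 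Two of your parenthetical justifications also need repair even though the claims they support are true: ``at most one neighbour in $B$'' is not directly a $2P_2$ (with neighbours $b_1,b_3$ and a non-neighbour $b_2$ you first get $va_2\in E(G)$ from $2P_2$-freeness and then the claw $G[\{v,b_1,b_3,a_2\}]$), and ``two vertices of $P_i$ are non-adjacent by a $2P_2$'' needs a common non-neighbour in $A$, which you only have after establishing $N(p_i)\cap A=\{a_i\}$, i.e.\ after (iv). None of this changes the outcome: once the structure collapses to ``$A$ together with all of $V(G)\setminus X$ is a clique and $N(b_i)=\{a_i\}\cup B_i$ lies entirely in block $i$'', your caterpillar and your cut analysis go through verbatim and match the paper's.
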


\begin{proof}
  Let $A = \{ a_1,\ldots,a_r \}$ and $B = \{ b_1,\ldots,b_r \}$ such that $A$ is a clique, $B$ is an independent set, and $(A,B)$ is a partition of $X$. 
  Note that $G[X] \cong K_r \boxminus rP_1$, but for every $S \subseteq V(G) \setminus X$, we have that $G[X \cup S] \not\cong K_{r'} \boxminus r'P_1$ for each integer $r' > r$.  We assume that $a_ib_i \in E(G)$ for each $i \in \{1,\ldots,r\}$.
  Let $N_1$ be the set of vertices from $V(G) \setminus X$ that
  have a neighbour in $X$, and 
  let $N_2 = V(G) \setminus (X \cup N_1)$.

  Let $v \in N_1$. Suppose that $N(v) \cap B = \varnothing$.
  Since $G$ is connected, $v$ has a neighbour in $A$; by symmetry, we
  may assume that $va_1 \in E(G)$.
  Let $i \in \{2,\ldots,r\}$, and suppose that $va_i \not\in E(G)$.
  But then $G[\{ a_1,b_1,v,a_i\}] \cong K_{1,3}$, a contradiction.
  Therefore $N(v) \cap X = A$.
  Suppose now that $N(v) \cap B \neq \varnothing$; without loss of
  generality we may
  assume that $vb_1 \in E(G)$.
  If $v$ is complete to $B$, then any three vertices of $B$ together
  with $v$ induces a $K_{1,3}$, a contradiction.
  Therefore, without loss of generality we assume that $vb_2 \not\in
  E(G)$.
  Since $G$ is $2P_2$-free, $va_2 \in E(G)$.
  Now suppose that $va_i \not\in E(G)$ for some $i \in \{1,\ldots,r\}
  \setminus \{2\}$.
  But then $G[\{a_2,b_2,v,a_i\}] \cong K_{1,3}$, a contradiction.
  Therefore $v$ is complete to $A$.
  Now suppose that $|N(v) \cap B| \geq 2$; without loss of generality
  we may assume that $b_1, b_3 \in N(v)$.
  Recall that $b_2 \not\in N(v)$.
  But then $G[\{v,b_1,b_3,a_2\}] \cong K_{1,3}$, a contradiction.
  Therefore $N(v) \cap B = \{b_1\}$.
  Hence, for every vertex $v \in N_1$,
  either $N(v) \cap X = A$ or $N(v) \cap X = A \cup
  \{b\}$ for some $b \in B$.

  Suppose that there exist vertices $v, v' \in N_1$ such that $vv'
  \not\in E(G)$.
  Since vertices of $N_1$ have at most one neighbour in $B$, we may
  assume without loss of generality that $b_1 \not\in N(v) \cup N(v')$.
  But then $G[\{a_1,b_1,v,v'\}] \cong K_{1,3}$, a contradiction.
  Therefore $vv' \in E(G)$, and hence $N_1$ is a clique.

  We now prove that $N_2 = \varnothing$.
  Towards a contradiction, suppose that there exists a vertex $w \in N_2$.
  Since $G$ is connected, there exists a vertex $v \in N(w) \cap N_1$.
  By what we have already proved, either $N(v) \cap X = A$ or
  $N(v) \cap X = A \cup \{b\}$ for some $b \in B$.
  Suppose that $N(v) \cap B \neq \varnothing$; without loss of generality,
  we may assume that $N(v) \cap B = \{b_1\}$.
  But then $G[\{v,b_1,w,a_2\}] \cong K_{1,3}$, a contradiction.
  Therefore $v$ is anticomplete to $B$.
  It now follows that $G[X \cup \{v,w\}] \cong K_{r+1} \boxminus
  (r+1)P_1$, contradicting the maximality of $X$.
  Therefore $N_2 = \varnothing$.

  For $i \in \{1,\ldots,r\}$, let $B_i$ denote the set of vertices
  from $N_1$ that are adjacent to $b_i$ and let $B_0$ denote the set of
  vertices from $N_1$ that have no neighbour in $B$.
  Note that $(A,B,B_0,B_1,\ldots,B_r)$ is a partition of $V(G)$ (into possibly empty sets), and we can construct this partition in $O(n)$ time.
  Consider the branch decomposition $(T, \delta)$ of $G$ defined as
  follows; see also Figure~\ref{f-2p2claw}.
  For each $i \in \{1,\ldots,r\}$,
  let $T_i$ be a $(|B_i|+2)$-caterpillar and let $t_i$ be a vertex of
  $T_i$ of degree 2.
  If $B_0 \neq \varnothing$, let $T_0$ be a $|B_0|$-caterpillar and $t_0$ a vertex of $T_0$ of
  degree 2, or of degree 1 if $|B_0|=1$.
  Let $P = p_1,\ldots,p_r$ be a path on $r$ vertices.
  Let $T'$ be the tree with
  $V(T') = V(P) \cup \bigcup_{i=1}^r V(T_i)$ and
  $E(T') = E(P) \cup \bigcup_{i=1}^r E(T_i) \cup \{ t_ip_i \;:\; 1 \leq i \leq r \}$.
  If $B_0 = \varnothing$ then let $T=T'$, and otherwise let
  $T$ be the tree obtained from $T'$ by adding an additional vertex
  $p_{r+1}$ together with all vertices of $V(T_0)$, and adding edges
  $p_rp_{r+1}$ and $p_{r+1}t_0$ together with all edges of $T_0$.
  Finally, let $\delta$ be any bijection from $V(G)$ to the leaves of
  $T$ such that for all $i \in \{1,\ldots,r\}$ and for all $v \in V(G)$,
  $\delta(v) \in V(T_i)$ if $v \in \{a_i,b_i\} \cup B_i$,
  and $\delta(v) \in V(T_0)$ if $v \in B_0$.

  We now prove that $\mimw_G(T,\delta) = 1$.
   Let $e$ be an edge of $T$ and let $M$ be a maximum induced matching of
  $G[A_e, \overline{A_e}]$.
  We begin by claiming that at most one edge of $M$ has one endpoint in $B$ and the other
  in $A \cup N_1$.
  On the contrary, suppose without loss of generality that $b_1x$ and
  $b_2y$ are distinct edges of $M$, where $b_1,b_2 \in B \cap A_e$ and
  $x,y \in (A \cup N_1) \cap \overline{A_e}$.
  Observe that if $x \in N_1$ (respectively $y \in N_1$), then $x \in B_1$
  (respectively $y \in B_2$); and if $x \in A$ (respectively $y \in A$), then $x=a_1$
  (respectively $y=a_2$).
  Since $b_1, b_2 \in A_e$, we have that $e \not\in E(T_1) \cup E(T_2)
  \cup \{p_1p_2, p_1t_1, p_2t_2\}$, and therefore $\{a_1,a_2\} \cup B_1 \cup B_2
  \subseteq A_e$.
  But $N(b_1) \cup N(b_2) \subseteq \{a_1,a_2\} \cup B_1 \cup B_2$, a
  contradiction.
  Therefore at most one edge of $M$ has one endpoint in $B$ and the other in
  $A \cup N_1$.
  Since $A \cup N_1$ is a clique, at most one edge of $M$ has both endpoints
  in $A \cup N_1$, and since $B$ is an independent set, no edge of $M$ has
  both endpoints in $B$.
  Suppose that $|M| \geq 2$.
  Then $M = \{ uv, xy \}$, where, without loss of generality,
  $u,x \in A_e$, $u, v, x \in A \cup N_1$ and $y \in B$.
  But since $A \cup N_1$ is a clique, $xv$ is an edge, contradicting $M$
  being an induced matching.
  Therefore $|M| \leq 1$ and hence $\mimw_G(T, \delta) = 1$, as required.
\end{proof}

\begin{theorem}\label{t-2p2claw}
Let $G$ be a  $(2P_2,K_{1,3})$-free graph.
Then $\mimw(G) < 6$, and
one can construct, in polynomial time, a branch decomposition
$(T,\delta)$ of $G$ with $\mimw_G(T, \delta) < 6$.
\end{theorem}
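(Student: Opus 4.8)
The plan is to combine \cref{t-2p2claw-lemma} with \cref{thmbounded1} (applied with $r=3$, so that $K_3\boxminus 3P_1=\net$), via a case distinction on whether or not $G$ has an induced $\net$. As a preliminary step I would reduce to the connected case using \cref{l-2con}: it suffices to produce, for each block $H$ of $G$, a branch decomposition of mim-width less than $6$, since \cref{l-2con} then assembles these into a branch decomposition of $G$ of mim-width less than $6$ in polynomial time. Every block of $G$ is a connected induced subgraph of $G$, hence is itself $(2P_2,K_{1,3})$-free, and blocks consisting of a single vertex or a single edge have mim-width at most $1$. So from now on we may assume $G$ is connected.

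Next I would split into two cases. If $G$ is $(K_3\boxminus 3P_1)$-free, then $G$ is $(K_3\boxminus 3P_1, 2P_2)$-free, so \cref{thmbounded1} gives $\cutmim_G(X,\overline{X})<\max\{6,3\}=6$ for every $X\subseteq V(G)$; hence \emph{every} branch decomposition of $G$ has mim-width less than $6$, and one can be written down in linear time (for instance, any caterpillar whose leaf set is $V(G)$). If instead $G$ has an induced $K_3\boxminus 3P_1$, then I would compute a set $X\subseteq V(G)$ that is maximal subject to $G[X]\cong K_r\boxminus rP_1$ for some $r\ge 3$, and apply \cref{t-2p2claw-lemma} to this $X$ to obtain, in $O(n)$ time, a branch decomposition of mim-width exactly $1$. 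Ranging over all blocks, this yields $\mimw(G)<6$ together with the claimed polynomial-time algorithm.

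The step that needs genuine care is the polynomial-time computation of a maximal such $X$. Starting from any induced copy of $K_3\boxminus 3P_1$, I would repeatedly attempt to enlarge the current set $X$ (with $G[X]\cong K_r\boxminus rP_1$). The key structural observation is that the induced copies of $K_r\boxminus rP_1$ inside $K_{r+1}\boxminus (r+1)P_1$ are precisely those obtained by deleting one matched pair (a clique vertex together with its pendant), so if $X$ can be extended to a larger graph of this form at all, then it can already be extended by adding just two vertices. Hence at each round it suffices to test the $O(n^2)$ pairs $\{v,w\}\subseteq V(G)\setminus X$ for whether $G[X\cup\{v,w\}]\cong K_{r+1}\boxminus (r+1)P_1$; each successful round increases $|X|$ by $2$, so after $O(n)$ rounds the process halts at a set $X$ that is maximal in the sense demanded by \cref{t-2p2claw-lemma}. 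Everything else — the reduction through blocks, the trivial small-block cases, and the reassembly via \cref{l-2con} — is routine, and the substantial structural work (analysing $(2P_2,K_{1,3})$-free graphs around a maximal induced $K_r\boxminus rP_1$) has already been done in \cref{t-2p2claw-lemma}.
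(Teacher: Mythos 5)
Your proposal is correct and follows essentially the same route as the paper: reduce to the connected case, split on whether $G$ contains an induced $K_3\boxminus 3P_1$, and apply \cref{thmbounded1} (with $r=3$) or \cref{t-2p2claw-lemma} accordingly. The only difference is cosmetic: you compute the maximal $X$ by greedy pair-augmentation (justified by classifying the induced copies of $K_r\boxminus rP_1$ inside $K_{r'}\boxminus r'P_1$, which is a valid argument), whereas the paper grows $(A,B)$ by a single scan over the edges of $G$; both are routine polynomial-time implementations of the same step.
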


\begin{proof}
  If $G$ is not connected, we may consider each component in turn, by \cref{l-2con}.
  If $G$ is $(K_3 \boxminus 3P_1)$-free, then $\mimw(G) < 6$ by \cref{thmbounded1}.  On the other hand, if $G$ has an induced subgraph isomorphic to $K_3 \boxminus 3P_1$, then $\mimw(G) = 1$ by \cref{t-2p2claw-lemma}.

We now show how to compute a branch decomposition $(T,\delta)$ of $G$, with $\mimw_G(T,\delta) < 6$, in polynomial time.
Consider the following algorithm, which takes as input a connected $(2P_2,K_{1,3})$-free graph $G$.

  \begin{enumerate}[Step~1]
	\item Enumerate all subsets $S \subseteq V(G)$
	  such that $|S| = 6$ and check whether $G[S] \cong K_3 \boxminus 3P_1$.
	  If no such set $S$ exists, then return an arbitrary branch decomposition of $G$.
	\item Let $S \subseteq V(G)$ such that $G[S] \cong K_3 \boxminus
	  3P_1$ and let $(A,B)$ be a partition of $S$ such that $A$ is a
	  clique and $B$ is an independent set.
	\item Set $E = E(G) \setminus E(G[S])$.
	  While $E \neq \varnothing$:
	  \begin{itemize}
	    \item Choose an edge $e \in E$.
	    \item If one endpoint of $e$ (say $a$) is complete to $A$ and
	      anticomplete to $B$, and the other endpoint of $e$ (say $b$) is
	      anticomplete to $A \cup B$, then set $A \gets A \cup
	      \{a\}$ and $B \gets B \cup \{b\}$.
	    \item Set $E \gets E \setminus \{e\}$.
	  \end{itemize}
	\item Using \cref{t-2p2claw-lemma}, with $X = A \cup B$, compute a branch decomposition $(T,\delta)$ of $G$ and return it.
  \end{enumerate}

      It is easily checked that Steps 1--4 of this algorithm can be
      performed in polynomial time.
      If the algorithm returns a branch decomposition in Step 1,
      then by Theorem~\ref{thmbounded1} it has mim-width less than 6.
      Otherwise, the branch decomposition has mim-width 1 by \cref{t-2p2claw-lemma}.
\end{proof}

\begin{lemma}\label{t-3p1bowtie-lemma}
Let $G$ be a $(2P_1+P_2,\bowtie)$-free graph.
Given $X \subseteq V(G)$ such that $G[X] \cong K_r \boxminus K_r$ for some $r \ge 5$, where $X$ is maximal, we can construct, in $O(n)$ time, a branch decomposition $(T, \delta)$ of $G$ such that $\mimw_G(T,\delta) = 2$.
\end{lemma}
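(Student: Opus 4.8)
I would determine the structure of $G$ completely and then exhibit a simple branch decomposition. Throughout, write $X = A \cup B$ with $A = \{a_1,\ldots,a_r\}$ a clique, $B = \{b_1,\ldots,b_r\}$ a clique, and $a_ib_i \in E(G)$ for each $i$; let $N_1$ be the set of vertices of $V(G)\setminus X$ having a neighbour in $X$, and let $N_2 = V(G) \setminus (X \cup N_1)$. First I would note that no vertex $w$ of $G$ is anticomplete to $X$: otherwise $\{a_1,a_2,b_3,w\}$ induces $2P_1 + P_2$ (using $r \ge 3$). In particular $G$ is connected and $N_2 = \varnothing$, so $V(G) = X \cup N_1$.

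The heart of the argument is to show that every $v \in N_1$ is complete to $A$ or complete to $B$. Write $\bar\alpha(v) = A \setminus N(v)$ and $\bar\beta(v) = B \setminus N(v)$. If $\{a_i,a_j\} \subseteq \bar\alpha(v)$ then $v$ is adjacent to $b_k$ for every $k \notin \{i,j\}$, since otherwise $\{a_i,a_j,v,b_k\}$ induces $2P_1 + P_2$; applying this to three vertices of $\bar\alpha(v)$ shows that $|\bar\alpha(v)| \ge 3$ forces $\bar\beta(v) = \varnothing$, and symmetrically $|\bar\beta(v)| \ge 3$ forces $\bar\alpha(v) = \varnothing$. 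Using $\bowtie$-freeness, if $v$ is complete to $A$ and has two neighbours $b_k,b_l$ in $B$ then, choosing $p,q \notin \{k,l\}$ (possible as $r \ge 5$), the set $\{v,a_p,a_q,b_k,b_l\}$ induces a $\bowtie$ with centre $v$; hence a vertex complete to $A$ has at most one neighbour in $B$, and symmetrically. It remains to exclude the case that $v$ is complete to neither $A$ nor $B$: then $1 \le |\bar\alpha(v)| \le 2$ and $1 \le |\bar\beta(v)| \le 2$, so $v$ has at least $r - 2 \ge 3$ neighbours in each of $A$ and $B$. A short case analysis on the index sets of these neighbourhoods yields, in every case, two vertex-disjoint edges --- one inside $A$, one inside $B$ --- with no edge between them, hence a $\bowtie$ centred at $v$; the only configuration not caught this way (and it arises only when $r = 5$) is $N(v) \cap X = \{a_i,a_j,a_k,b_i,b_j,b_k\}$ for distinct $i,j,k$, which is instead forbidden by the $\bowtie$ with centre $a_i$, one ear $\{v,b_i\}$, and the other ear consisting of the two vertices of $A$ outside $\{a_i,a_j,a_k\}$. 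So $N_1$ partitions into the set $N_1^A$ of vertices complete to $A$ and the set $N_1^B$ of vertices complete to $B$ (these are disjoint, since such vertices have at most one neighbour in the other side). Finally, if $u,v \in N_1^A$ were non-adjacent they would have at least $r - 2 \ge 3$ common non-neighbours in $B$, and picking two of them, $b_i$ and $b_j$, the set $\{u,v,b_i,b_j\}$ induces $2P_1 + P_2$; so $N_1^A$ is a clique, and likewise $N_1^B$.

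Now $A' := A \cup N_1^A$ and $B' := B \cup N_1^B$ are cliques with $A' \cup B' = V(G)$. I would let $(T,\delta)$ be the branch decomposition whose tree is a caterpillar and whose leaf order begins $a_1,a_2,b_1,b_2$ and continues with the remaining vertices in any order; this is constructed in $O(n)$ time. For an edge $e$ of $T$, let $M$ be a maximum induced matching of $G[A_e,\overline{A_e}]$. Since $A'$ is a clique, $M$ contains at most one edge with both ends in $A'$, and if it contains such an edge then every other edge of $M$ is disjoint from $A'$ (any further vertex of $A'$ would be adjacent to an end of that edge); likewise for $B'$. Because every vertex lies in $A'$ or in $B'$, it follows that: if $M$ has an edge inside $A'$ (respectively inside $B'$) then its remaining edges all lie inside $B'$ (respectively $A'$), of which there is at most one, so $|M| \le 2$; and if $M$ has no edge inside $A'$ or inside $B'$ then every edge of $M$ joins $A'$ to $B'$, but two distinct such edges $xy, x'y'$ with $x,x' \in A'$ would satisfy $x \sim x'$, contradicting that $M$ is induced, so $|M| \le 1$. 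Hence $\cutmim_G(A_e,\overline{A_e}) \le 2$ for every $e$, i.e.\ $\mimw_G(T,\delta) \le 2$; and since the cut separating $\{a_1,a_2,b_1,b_2\}$ from the remaining vertices has the induced matching $\{a_1a_3,\,b_2b_4\}$ (using $r \ge 4$), in fact $\mimw_G(T,\delta) = 2$.

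I expect the main obstacle to be the structural analysis in the second step: the $\bowtie$-freeness arguments showing that no vertex of $N_1$ is only partially attached to each of $A$ and $B$, and in particular the disposal of the borderline configuration, where the hypothesis $r \ge 5$ is used essentially.
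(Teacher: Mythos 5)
Your structural analysis of $N_1$ is essentially sound (and slightly different from the paper's: you show each $v\in N_1$ is complete to $A$ or to $B$ with at most one neighbour on the other side, whereas the paper shows each such $v$ is complete to one of $A,B$ and \emph{anticomplete} to the other, using the maximality of $X$ — which your argument never invokes, a warning sign). The fatal gap is in the last step. Covering $V(G)$ by two cliques $A'$ and $B'$ does \emph{not} force every cut to have $\cutmim$ at most $2$. Your claim that two edges $xy,x'y'$ of $M$ with $x,x'\in A'$ contradict inducedness because $x\sim x'$ is wrong when $x$ and $x'$ lie on the \emph{same} side of the cut: the edge $xx'$ is then not an edge of the bipartite graph $G[A_e,\overline{A_e}]$, so it does not violate the inducedness of $M$ there. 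Concretely, with your leaf order ($a_1,a_2,b_1,b_2$ followed by the remaining vertices in an arbitrary order), the remaining vertices could be listed as $a_3,\dots,a_r,b_3,\dots,b_r,\dots$, and the cut taken just after $a_r$ puts all of $A$ on one side and $\{b_3,\dots,b_r\}$ on the other; then $\{a_ib_i : 3\le i\le r\}$ is an induced matching of the cut of size $r-2\ge 3$, since $a_ib_j\notin E(G)$ for $i\ne j$ and the clique edges inside $A$ and inside $B$ do not cross the cut. So your branch decomposition does not have mim-width $2$, and no argument based only on ``$V(G)$ is the union of two cliques'' can work — $K_r\boxminus K_r$ itself is the union of two cliques and admits a cut of mim-value $r$.

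The paper avoids this by choosing the caterpillar order to \emph{interleave} the matched pairs, namely $(a'_1,\dots,a'_{|A'|},a_1,b_1,a_2,b_2,\dots,a_r,b_r,b'_1,\dots,b'_{|B'|})$, so that between any two $a_i,a_j$ on one side of a cut the vertex $b_i$ also lies on that side; this is what lets one show that each side of a cut contains at most one matched vertex of $A$ and at most one of $B$ (and similarly for $A'$ and $B'$). To repair your proof you would need to (i) adopt such an interleaved ordering, and (ii) actually control the edges between your $N_1^A$ and $B\cup N_1^B$ — in particular establish that $N_1^A$ is anticomplete to $N_1^B$, which is where the maximality of $X$ (together with a further $\bowtie$ argument for vertices of $N_1^A$ having a neighbour in $B$) is needed; your current counting never uses either fact.
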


\begin{proof}
  Let $A = \{a_1,\ldots,a_r\}$ and $B = \{b_1,\ldots,b_r\}$ be cliques that partition $X$, with $a_ib_i \in E(G)$ for all $i \in \{1,\dotsc,r\}$.
  Let $N_1$ be the set of vertices of $V(G) \setminus X$ with a neighbour in $X$.
  Suppose there exists a vertex $v \in V(G) \setminus (X \cup N_1)$.
  Then $G[\{v,a_1,b_2,b_3\}] \cong 2P_1+P_2$, a contradiction.
  So $X \cup N_1 = V(G)$.

  We claim each vertex in $N_1$ is either complete or anticomplete to $A$.
  Suppose $v \in N_1$ has a neighbour and a non-neighbour in $A$.  Without loss of generality, let $a_r$ be the neighbour and let $a_1$ be the non-neighbour.
  If there is a pair of distinct vertices $b_i,b_j$ non-adjacent to $v$ for $i,j \in \{2,3,\dotsc,r\}$, then $G[\{v,a_1,b_i,b_j\}] \cong 2P_1+P_2$.
  So $v$ has at most one non-neighbour in $\{b_2,b_3,\dotsc,b_r\}$.
  In particular, as $r \ge 5$, we may assume without loss of generality that $b_3$ and $b_4$ are neighbours of $v$.
  If $v$ is adjacent to $a_2$, then $G[\{a_2,a_r,v,b_3,b_4\}] \cong \bowtie$, a contradiction.
  So $a_2$ is a non-neighbour of $v$.
  Now, if $b_r$ is adjacent to $v$, then $G[\{a_1,a_2,a_r,v,b_r\}] \cong \bowtie$; whereas if $b_r$ is non-adjacent to $v$, then $G[\{a_1,a_2,v,b_r\}] \cong 2P_1+P_2$.
  From this contradiction, we deduce that $v$ is either complete or anticomplete to $A$.
  By symmetry, each $v \in N_1$ is complete or anticomplete to $B$.

  If $v \in N_1$ is complete to both $A$ and $B$, then $G[\{a_1,a_2,v,b_3,b_4\}] \cong \bowtie$, a contradiction.
  If $v \in N_1$ is anticomplete to both $A$ and $B$, then $G[\{a_1,a_2,v,b_3\}] \cong 2P_1+P_2$, a contradiction.
  So each vertex in $N_1$ is either complete to $A$ and anticomplete to $B$, or complete to $B$ and anticomplete to $A$.  Call these two sets $A'$ and $B'$ respectively.  If a vertex $a \in A'$ has a neighbour $b \in B'$, then $G[X \cup \{a,b\}] \cong K_{r+1} \boxminus K_{r+1}$, contradicting the maximality of $X$.  So $A'$ and $B'$ are anticomplete.
  Moreover, if $a,a' \in A'$ are distinct and non-adjacent, then $G[\{a,a',b_1,b_2\}] \cong 2P_1+P_2$, a contradiction. So $A' \cup A$ and, similarly, $B' \cup B$ are cliques.

  Now let $a'_1,a'_2,\dotsc,a'_{|A'|}$ be an arbitrary ordering of $A'$, and let $b'_1,b'_2,\dotsc,b'_{|B'|}$ be an arbitrary ordering of $B'$.
  Let $(T,\delta)$ be the branch decomposition with linear ordering \[(a'_1, a'_2, \dotsc, a'_{|A'|}, a_1, b_1, a_2, b_2, \dotsc, a_r, b_r, b'_1, b'_2, \dotsc, b'_{|B'|});\] that is, let $T$ be a $|V(G)|$-caterpillar where $\delta$ respects this ordering, so $\delta(a'_1)=t_1$, $\delta(a'_2)=t_2$, \ldots, $\delta(b'_{|B'|})=t_{|V(G)|}$.
  Note that, given $X$, we can find $A$ and $B$, together with the labelling of $a_i$'s and $b_i$'s, as well as $A'$ and $B'$, in $O(n)$ time, so we can compute $(T,\delta)$ in $O(n)$ time.
  We claim that $\mimw_G(T,\delta) = 2$.
  Let $e \in E(T)$ and consider the corresponding cut $(A_e, \overline{A_e})$.
  First, observe that when $A_e = A' \cup \{a_1,b_1\}$, the graph $G[A_e, \overline{A_e}]$ has an induced matching of size 2, with edges $a_1a_2$ and $b_1b_2$, so $\mimw_G(T,\delta) \ge 2$.

  Let $M$ be an induced matching in $G[A_e,\overline{A_e}]$.  Let $V(M)$ denote the vertices incident to an edge of $M$.
  Suppose $V(M) \cap A_e$ contains at least two vertices of $A$.
  Then there exist $i,j\in \{1,2,\dotsc,r\}$ such that $a_i,a_j \in V(M) \cap A_e$, with $i < j$.  Observe that $b_i \in A_e$, since $b_i$ is between $a_i$ and $a_j$ in the linear ordering, and $a_i,a_j \in A_e$.
  Let $v,v' \in \overline{A_e}$ such that $a_iv,a_jv' \in M$.
  If $v \in A \cup A'$, then $a_jv$ is an edge of $G$, so $M$ is not induced.  Moreover, $v \notin B'$, since $B'$ is anticomplete to $A$.  So $v \in B$, and hence $v = b_i$.  But then $v \in A_e$, a contradiction.
  So $|V(M) \cap A_e \cap A| \le 1$.
  Similarly, $V(M) \cap A_e$ contains at most one vertex of $B$.

  Now suppose $V(M) \cap A_e$ contains a vertex $a' \in A'$.
  Suppose $a'v \in M$, where $a' \in A' \cap A_e$ and $v \in \overline{A_e}$.
  Then $v \in A \cup A'$, since $A'$ is anticomplete to $B \cup B'$.
  Hence $a'$ is the only vertex of $A \cup A'$ in $A_e \cap V(M)$, for otherwise $v$ has two neighbours in $V(M) \cap A_e$.
  So $|V(M) \cap A_e \cap (A' \cup A)| \le 1$.
  Similarly, $V(M) \cap A_e$ contains at most one vertex of $B' \cup B$.
  So $|M| \le 2$, and hence $\mimw_G(T,\delta) = 2$.
\end{proof}

\begin{theorem}\label{t-3p1bowtie}
Let $G$ be a  $(2P_1+P_2,\bowtie)$-free graph.
Then $\mimw(G) < R(14,3)$, and
one can construct, in polynomial time, a branch decomposition
$(T,\delta)$ of $G$ with $\mimw_G(T, \delta) < R(14,3)$.
\end{theorem}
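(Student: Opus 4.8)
The plan is to mirror the proof of \cref{t-2p2claw}, using \cref{t-3p1bowtie-lemma} in place of \cref{t-2p2claw-lemma} and \cref{thmbounded3} in place of \cref{thmbounded1}. Split into two cases according to whether $G$ has an induced subgraph isomorphic to $K_5 \boxminus K_5$. If it does not, then $G$ is $(K_5 \boxminus K_5,\, 2P_1+P_2)$-free, so \cref{thmbounded3} applied with $r=5$ and $s=2$ gives $\cutmim_G(X,\overline{X}) < R(R(5,3),3)$ for every $X \subseteq V(G)$; since the Ramsey number $R(5,3)=R(3,5)$ equals $14$, this says $\mimw(G) < R(14,3)$, and moreover \emph{any} branch decomposition of $G$ witnesses this bound. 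If instead $G$ has an induced $K_5 \boxminus K_5$, pick a set $X$ with $G[X] \cong K_r \boxminus K_r$ for some $r \ge 5$ that is maximal with respect to this property; then \cref{t-3p1bowtie-lemma} produces a branch decomposition $(T,\delta)$ of $G$ with $\mimw_G(T,\delta)=2 < R(14,3)$. (No connectedness reduction via \cref{l-2con} is needed here, unlike in \cref{t-2p2claw}, since neither \cref{thmbounded3} nor \cref{t-3p1bowtie-lemma} requires it.)

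For the algorithmic statement, I would give an algorithm directly analogous to the one in the proof of \cref{t-2p2claw}. Step~1: enumerate all $10$-element subsets $S \subseteq V(G)$ and test whether $G[S] \cong K_5 \boxminus K_5$; if no such $S$ is found, return an arbitrary branch decomposition, which has mim-width below $R(14,3)$ by \cref{thmbounded3} as above. Step~2: otherwise fix such an $S$ together with a partition of $S$ into two cliques $A$ and $B$ joined by a perfect matching. Step~3: pass once through the edges $ab$ of $G$ with $ab \notin E(G[S])$, and whenever one endpoint (say $a$) is complete to the current $A$ and anticomplete to the current $B$, while the other endpoint $b$ is complete to the current $B$ and anticomplete to the current $A$, update $A \gets A \cup \{a\}$ and $B \gets B \cup \{b\}$. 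Step~4: apply \cref{t-3p1bowtie-lemma} with $X = A \cup B$ and return the resulting branch decomposition. Each step runs in polynomial time (Step~1 in $O(n^{10})$ time, Step~3 in $O(mn)$ time, Step~4 in $O(n)$ time).

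The only point requiring a little care — and which I expect to be the main (modest) obstacle — is verifying correctness of Step~3: that after Step~3 the set $X = A \cup B$ induces $K_{|A|} \boxminus K_{|A|}$ and is \emph{maximal} with this property, so that \cref{t-3p1bowtie-lemma} applies. That $G[A \cup B]$ stays isomorphic to some $K_t \boxminus K_t$ is an invariant maintained by the update, since $a$ complete to $A$ keeps $A\cup\{a\}$ a clique, symmetrically $b$ keeps $B\cup\{b\}$ a clique, and $a$ anticomplete to $B$, $b$ anticomplete to $A$ together with the edge $ab$ give the enlarged perfect matching with no other cross edges. Maximality holds because the predicates ``complete to $A$'' and ``anticomplete to $B$'' are monotone under shrinking $A$ and $B$: if after Step~3 there were a pair $a^\ast,b^\ast \notin X$ with $a^\ast b^\ast \in E(G)$, $a^\ast$ complete to $A$ and anticomplete to $B$, and $b^\ast$ complete to $B$ and anticomplete to $A$ (so $G[X\cup\{a^\ast,b^\ast\}] \cong K_{|A|+1}\boxminus K_{|A|+1}$), then at the moment the edge $a^\ast b^\ast$ was examined in Step~3 the then-current $A$ and $B$ were subsets of the final ones, so those predicates already held and the algorithm would have added $a^\ast,b^\ast$ to $X$, contradicting $a^\ast,b^\ast\notin X$. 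With maximality in hand, \cref{t-3p1bowtie-lemma} gives mim-width $2 < R(14,3)$, completing the proof. The substantive work has already been carried out in \cref{t-3p1bowtie-lemma,thmbounded3}; the theorem is essentially their combination, and I do not anticipate any genuine difficulty beyond the monotonicity argument just sketched.
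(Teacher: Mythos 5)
Your proposal is correct and follows essentially the same route as the paper's proof: the same case split on whether $G$ contains an induced $K_5\boxminus K_5$, the same appeal to \cref{thmbounded3} with $R(R(5,3),3)=R(14,3)$ in the negative case, and the same four-step greedy algorithm feeding a maximal $X$ into \cref{t-3p1bowtie-lemma} in the positive case. Your explicit verification of the invariant and maximality in Step~3 (via monotonicity of ``complete to $A$'' and ``anticomplete to $B$'' under shrinking) is a detail the paper leaves as ``easily checked,'' and it is sound.
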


\begin{proof}
  If $G$ is $(K_5 \boxminus K_5)$-free, then $\mimw(G) < R(R(5,3),3)=R(14,3)$ by \cref{thmbounded3}.  On the other hand, if $G$ has an induced subgraph isomorphic to $K_5 \boxminus K_5$, then $\mimw(G) = 2$ by \cref{t-3p1bowtie-lemma}.

We now show how to compute a branch decomposition $(T,\delta)$ of $G$, with $\mimw_G(T,\delta) < R(14,3)$, in polynomial time.
Consider the following algorithm, which takes as input a connected $(2P_1+P_2,\bowtie)$-free graph $G$.

  \begin{enumerate}[Step~1]
	\item Enumerate all subsets $S \subseteq V(G)$
	  such that $|S| = 10$ and check whether $G[S] \cong K_5 \boxminus K_5$.
	  If no such set $S$ exists, then return an arbitrary branch decomposition of $G$.
	\item Let $S \subseteq V(G)$ such that $G[S] \cong K_5 \boxminus K_5$ and let $(A,B)$ be a partition of $S$ such that $A$ is a
	  clique and $B$ is an independent set.
	\item Set $E = E(G) \setminus E(G[S])$.
	  While $E \neq \varnothing$:
	  \begin{itemize}
	    \item Choose an edge $e \in E$.
	    \item If one endpoint of $e$ (say $a$) is complete to $A$ and
	      anticomplete to $B$, and the other endpoint of $e$ (say $b$) is
	      complete to $B$ and anticomplete to $A$, then set $A \gets A \cup
	      \{a\}$ and $B \gets B \cup \{b\}$.
	    \item Set $E \gets E \setminus \{e\}$.
	  \end{itemize}
	\item Using \cref{t-3p1bowtie-lemma}, with $X = A \cup B$, compute a branch decomposition $(T,\delta)$ of $G$ and return it.
  \end{enumerate}

      It is easily checked that Steps 1--4 of this algorithm can be
      performed in polynomial time.
      If the algorithm returns a branch decomposition in Step 1,
      then by \cref{thmbounded3} it has mim-width less than R(14,3).
      Otherwise, the branch decomposition has mim-width 2 by \cref{t-3p1bowtie-lemma}.
\end{proof}

\section{New Unbounded Cases}
\label{sec:unbounded}

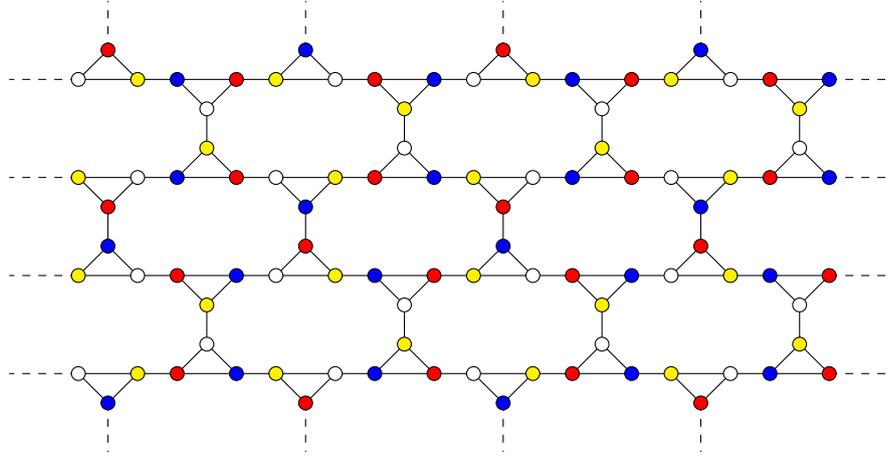
\begin{figure}
    \begin{center}
        \begin{tikzpicture}[scale=1.3, rotate=90]
            \draw
            (2,-3.3)--(2,4.3) (1,-3.3)--(1,4.3) (0,-3.3)--(0,4.3) (-1,-3.3)--(-1,4.3)%oth
            (-1,-3.3)--(-0.7,-3)--(-1,-2.7) (-1,-1.3)--(-0.7,-1)--(-1,-0.7) (-1,1.3)--(-0.7,1)--(-1,0.7) (-1,3.3)--(-0.7,3)--(-1,2.7)
            (-1,-2.3)--(-1.3,-2)--(-1,-1.7) (-1,0.3)--(-1.3,0)--(-1,-0.3) (-1,2.3)--(-1.3,2)--(-1,1.7) (-1,4.3)--(-1.3,4)--(-1,3.7) %1
            (0,-2.3)--(0.3,-2)--(0,-1.7) (0,0.3)--(0.3,0)--(0,-0.3) (0,2.3)--(0.3,2)--(0,1.7) (0,4.3)--(0.3,4)--(0,3.7)
            (0,-3.3)--(-0.3,-3)--(0,-2.7) (0,-1.3)--(-0.3,-1)--(0,-0.7)  (0,1.3)--(-0.3,1)--(0,0.7)(0,3.3)--(-0.3,3)--(0,2.7)%2
            (1,-3.3)--(1.3,-3)--(1,-2.7) (1,-1.3)--(1.3,-1)--(1,-0.7) (1,1.3)--(1.3,1)--(1,0.7) (1,3.3)--(1.3,3)--(1,2.7)
            (1,-2.3)--(0.7,-2)--(1,-1.7) (1,0.3)--(0.7,0)--(1,-0.3) (1,2.3)--(0.7,2)--(1,1.7) (1,4.3)--(0.7,4)--(1,3.7)%3
            (2,-2.3)--(2.3,-2)--(2,-1.7) (2,0.3)--(2.3,0)--(2,-0.3) (2,2.3)--(2.3,2)--(2,1.7) (2,4.3)--(2.3,4)--(2,3.7) 
            (2,-3.3)--(1.7,-3)--(2,-2.7) (2,-1.3)--(1.7,-1)--(2,-0.7) (2,1.3)--(1.7,1)--(2,0.7) (2,3.3)--(1.7,3)--(2,2.7)%4
            (-0.7,-3)--(-0.3,-3) (-0.7,-1)--(-0.3,-1) (-0.7,1)--(-0.3,1) (-0.7,3)--(-0.3,3)
            (1.7,-3)--(1.3,-3) (1.7,-1)--(1.3,-1) (1.7,1)--(1.3,1) (1.7,3)--(1.3,3)
            (0.7,-2)--(0.3,-2) (0.7,0)--(0.3,0) (0.7,2)--(0.3,2) (0.7,4)--(0.3,4);%horiz
            \draw[dashed]%
           (-1,-3.3)--(-1,-4) (0,-3.3)--(0,-4) (1,-3.3)--(1,-4) (2,-3.3)--(2,-4)
            (-1,4.3)--(-1,5) (0,4.3)--(0,5) (1,4.3)--(1,5) (2,4.3)--(2,5)
            (-1.3,4)--(-1.8,4) (-1.3,2)--(-1.8,2) (-1.3,0)--(-1.8,0) (-1.3,-2)--(-1.8,-2)
            (2.3,4)--(2.8,4) (2.3,2)--(2.8,2) (2.3,0)--(2.8,0) (2.3,-2)--(2.8,-2);
            \draw[fill=red]%a
            (-1,-3.3) circle [radius=2pt] (-1,-0.7) circle [radius=2pt] (-1,0.7) circle [radius=2pt] (-1,3.3) circle [radius=2pt] (0,-3.3) circle [radius=2pt]  
            (0,-0.7) circle [radius=2pt]  (0,0.7) circle [radius=2pt]  (0,3.3) circle [radius=2pt] (1,-2.7) circle [radius=2pt] (1,-1.3) circle [radius=2pt]
            (1,1.3) circle [radius=2pt] (1,2.7) circle [radius=2pt] (2,-2.7) circle [radius=2pt] (2,-1.3) circle [radius=2pt] (2,1.3) circle [radius=2pt] 
            (2,2.7) circle [radius=2pt] (-1.3,-2) circle [radius=2pt] (-1.3,2) circle [radius=2pt] (0.3,-2) circle [radius=2pt] (0.3,2) circle [radius=2pt] 
            (0.7,0) circle [radius=2pt] (0.7,4) circle [radius=2pt] (2.3,0) circle [radius=2pt] (2.3,4) circle [radius=2pt];
            \draw[fill=blue]%b
            (-1,-2.7) circle [radius=2pt] (-1,-1.3) circle [radius=2pt] (-1,1.3) circle [radius=2pt](-1,2.7) circle [radius=2pt] (0,-2.7) circle [radius=2pt] 
            (0,-1.3) circle [radius=2pt]  (0,1.3) circle [radius=2pt] (0,2.7) circle [radius=2pt] (1,-3.3) circle [radius=2pt] (1,-0.7) circle [radius=2pt]  
            (1,0.7) circle [radius=2pt]  (1,3.3) circle [radius=2pt]  (2,-3.3) circle [radius=2pt] (2,-0.7) circle [radius=2pt] (2,0.7) circle [radius=2pt] 
            (2,3.3) circle [radius=2pt] (-1.3,0) circle [radius=2pt] (-1.3,4) circle [radius=2pt] (0.3,0) circle [radius=2pt] (0.3,4) circle [radius=2pt] 
            (0.7,-2) circle [radius=2pt] (0.7,2) circle [radius=2pt] (2.3,-2) circle [radius=2pt] (2.3,2) circle [radius=2pt];
            \draw[fill=yellow]%c
            (-1,-1.7) circle [radius=2pt] (-1,-0.3) circle [radius=2pt] (-1,2.3) circle [radius=2pt] (-1,3.7) circle [radius=2pt] (0,-2.3) circle [radius=2pt]  
            (0,0.3) circle [radius=2pt]  (0,1.7) circle [radius=2pt] (0,4.3) circle [radius=2pt] (1,-2.3) circle [radius=2pt] (1,0.3) circle [radius=2pt]  
            (1,1.7) circle [radius=2pt]  (1,4.3) circle [radius=2pt] (2,-1.7) circle [radius=2pt] (2,-0.3) circle [radius=2pt] (2,2.3) circle [radius=2pt] 
            (2,3.7) circle [radius=2pt] (-0.7,-3) circle [radius=2pt](-0.7,1) circle [radius=2pt] (-0.3,-1) circle [radius=2pt]
            (-0.3,3) circle [radius=2pt] (1.7,-3) circle [radius=2pt] (1.7,1) circle [radius=2pt] (1.3,-1) circle [radius=2pt](1.3,3) circle [radius=2pt];
            \draw[fill=white] %d
            (-1,-2.3) circle [radius=2pt] (-1,0.3) circle [radius=2pt] (-1,1.7) circle [radius=2pt] (-1,4.3) circle [radius=2pt] (0,-1.7) circle [radius=2pt] 
            (0,-0.3) circle [radius=2pt] (0,2.3) circle [radius=2pt] (0,3.7) circle [radius=2pt] (1,-1.7) circle [radius=2pt] (1,-0.3) circle [radius=2pt] 
            (1,2.3) circle [radius=2pt] (1,3.7) circle [radius=2pt] (2,-2.3) circle [radius=2pt] (2,0.3) circle [radius=2pt] (2,1.7) circle [radius=2pt] 
            (2,4.3) circle [radius=2pt] (-0.7,-1) circle [radius=2pt] (-0.7,3) circle [radius=2pt](-0.3,-3) circle [radius=2pt](-0.3,1) circle [radius=2pt]
            (1.7,-1) circle [radius=2pt] (1.7,3) circle [radius=2pt](1.3,-3) circle [radius=2pt](1.3,1) circle [radius=2pt];

        \end{tikzpicture}
        \caption{A particular $4$-colouring of a net-wall, used in the proof of \cref{diamond-5p1}.}\label{fig-netwall4col}
    \end{center}
\end{figure}

We present a number of graph classes of unbounded mim-width, starting with following two theorems.

\begin{theorem}
    \label{diamond-5p1}
    The class of $(\diamond, 5P_1)$-free graphs has unbounded mim-width.
\end{theorem}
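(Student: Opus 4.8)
The plan is to obtain graphs of arbitrarily large mim-width inside the class of $(\diamond,5P_1)$-free graphs by modifying net-walls, which have unbounded mim-width by \cref{netwalls}. Given a net-wall $W$, I would fix a proper $4$-colouring $c$ of $W$ with colour classes $V_1,\dots,V_4$ — the colouring shown in \cref{fig-netwall4col} — and let $W'$ be the graph obtained from $W$ by adding all edges inside each $V_i$, so that each colour class becomes a clique. Since $W$ is $4$-partite with classes $V_1,\dots,V_4$ and $W'$ is obtained from $W$ by adding only edges within classes, \cref{kpartite} gives $\mimw(W')\ge\tfrac14\mimw(W)$; as $W$ ranges over net-walls this is unbounded. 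Moreover $V(W')$ is covered by the four cliques $V_1,\dots,V_4$, so every independent set of $W'$ has size at most $4$ and hence $W'$ is $5P_1$-free. The whole argument therefore reduces to showing that $W'$ contains no induced $\diamond$.

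For this I would use two facts. First, $W$ itself is $\diamond$-free: the only triangles of a net-wall are the implanted ones, which are pairwise vertex-disjoint, so no two triangles of $W$ share an edge, whereas an induced $\diamond$ consists of two triangles sharing an edge. Second, the colouring of \cref{fig-netwall4col} can be taken to be \emph{locally rainbow}: every vertex of $W$ has its (at most three) neighbours coloured pairwise differently; equivalently, $c$ properly colours the square of $W$. Granting these, suppose $\{p,q,r,s\}$ induces a $\diamond$ in $W'$ with $rs$ its unique non-edge. Then $c(r)\ne c(s)$, $r\not\sim_W s$, and both $p$ and $q$ lie in $N_{W'}(r)\cap N_{W'}(s)$, where $N_{W'}(v)=N_W(v)\cup(V_{c(v)}\setminus\{v\})$. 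A short case check — on whether $c(p)=c(q)$ and on which of $p,q$ shares a colour with $r$ or with $s$ — shows that in each case some vertex of $\{p,q,r,s\}$ has two equally-coloured $W$-neighbours, contradicting local rainbowness, or $\{p,q,r,s\}$ induces a $\diamond$ in $W$, contradicting the first fact. (For example, if $c(p)=c(q)=c(r)$ then, since $c(s)\ne c(p)$, both $sp$ and $sq$ must be edges of $W$, giving $s$ two equally-coloured neighbours; and if $c(p)\ne c(q)$ with $c(r)=c(p)$ and $c(s)=c(q)$, then $sp$, $pq$, $qr$ are edges of $W$, so $s,p,q,r$ is a path in $W$ in which $p$ has the two equally-coloured neighbours $s$ and $q$.) Hence $W'$ is $\diamond$-free.

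I expect the main obstacle to be producing the colouring with the local-rainbow property, i.e.\ properly $4$-colouring the square of a net-wall. Net-walls are subcubic, but the square of a net-wall has vertices of degree up to nine, so four colours are not automatically enough: around each implanted triangle the three triangle vertices are forced onto three distinct colours, all three of their external neighbours are then forced onto the fourth colour, and this has to be propagated consistently along the subdivided paths joining consecutive triangles. Verifying that the periodic pattern of \cref{fig-netwall4col} achieves this globally — taking $W$, if necessary, to be a sufficiently subdivided net-wall, which changes mim-width only by constant factors via \cref{subdivision} — is the one non-routine step; the case analysis for $\diamond$-freeness and the applications of \cref{kpartite,netwalls} are then straightforward.
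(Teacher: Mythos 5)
Your proposal is correct and follows essentially the same route as the paper: take a net-wall, use the periodic $4$-colouring of \cref{fig-netwall4col} (whose defining property -- no two same-coloured vertices share a $W$-neighbour -- is exactly your ``locally rainbow'' condition), make the colour classes cliques, and apply \cref{netwalls,kpartite}; the paper verifies $\diamond$-freeness via the slightly slicker observation that any triangle of $G$ meeting some $V_i$ twice lies entirely inside $V_i$, which subsumes your colour case analysis. The one step you flag as non-routine -- exhibiting the proper $4$-colouring of the square of a net-wall -- is precisely what the explicit periodic pattern in \cref{fig-netwall4col} supplies, so no extra subdivision is needed (though your fallback via \cref{subdivision} would also work).
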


\begin{proof}
For every integer $k$, we will construct a $(\diamond, 5P_1)$-free graph~$G$ such that $\mimw(G) > k$. By \cref{netwalls}, for any integer $k$ there exists a net-wall $W$ such that $\mimw(W) > 4k$. We partition the vertex set $V(W)$ into four colour classes $(V_1,V_2,V_3,V_4)$ as illustrated in \cref{fig-netwall4col}. Observe that, for each $i \in \{1,2,3,4\}$, the set $V_i$ is independent, and no two distinct vertices $v,v' \in V_i$ have a common neighbour; that is, $N_{W}(v) \cap N_{W}(v') = \varnothing$.

Let $G$ be the graph obtained from $W$ by making each of $V_1$, $V_2$, $V_3$ and $V_4$ into a clique. By \cref{kpartite}, $\mimw(G) \ge \mimw(W)/4 > k$. Since any set of five vertices of $G$ contains at least two vertices in one of $V_1$, $V_2$, $V_3$, and $V_4$, and each of these four sets is a clique, $G$ is $5P_1$-free.

It remains to show that $G$ is $\diamond$-free. First, observe that if $G[X] \cong K_3$ for some $X \subseteq V(G)$ with $|X \cap V_i| \ge 2$ for some $i \in \{1,2,3,4\}$, then, since no two vertices in $V_i$ have a common neighbour in $W$, it follows that $X \subseteq V_i$. Now, towards a contradiction, suppose $G[Y] \cong \diamond$ for some $Y \subseteq V(G)$. Then $Y$ is the union of two sets $X'$ and $X''$ that induce triangles in $G$, and $|X' \cap X''| = 2$. Since $W$ is $\diamond$-free, we may assume that $W[X']$ is not a triangle. Then $X'$ contains at least two vertices of $V_i$ for some $i \in \{1,2,3,4\}$. By the earlier observation, $X' \subseteq V_i$. Since $|X' \cap X''| = 2$, we then have $|X'' \cap V_i| \ge 2$, so $X'' \subseteq V_i$,  and hence $Y \subseteq V_i$. But this implies that $Y$ is a clique in $G$; a contradiction. So $G$ is $\diamond$-free.
\end{proof}

\begin{figure}
    \begin{center}
        \begin{tikzpicture}[scale=1.3, rotate=90]
            \draw
            (2,-3.3)--(2,4.3) (1,-3.3)--(1,4.3) (0,-3.3)--(0,4.3) (-1,-3.3)--(-1,4.3)%oth
            (-1,-3.3)--(-0.7,-3)--(-1,-2.7) (-1,-1.3)--(-0.7,-1)--(-1,-0.7) (-1,1.3)--(-0.7,1)--(-1,0.7) (-1,3.3)--(-0.7,3)--(-1,2.7)
            (-1,-2.3)--(-1.3,-2)--(-1,-1.7) (-1,0.3)--(-1.3,0)--(-1,-0.3) (-1,2.3)--(-1.3,2)--(-1,1.7) (-1,4.3)--(-1.3,4)--(-1,3.7) %1
            (0,-2.3)--(0.3,-2)--(0,-1.7) (0,0.3)--(0.3,0)--(0,-0.3) (0,2.3)--(0.3,2)--(0,1.7) (0,4.3)--(0.3,4)--(0,3.7)
            (0,-3.3)--(-0.3,-3)--(0,-2.7) (0,-1.3)--(-0.3,-1)--(0,-0.7)  (0,1.3)--(-0.3,1)--(0,0.7)(0,3.3)--(-0.3,3)--(0,2.7)%2
            (1,-3.3)--(1.3,-3)--(1,-2.7) (1,-1.3)--(1.3,-1)--(1,-0.7) (1,1.3)--(1.3,1)--(1,0.7) (1,3.3)--(1.3,3)--(1,2.7)
            (1,-2.3)--(0.7,-2)--(1,-1.7) (1,0.3)--(0.7,0)--(1,-0.3) (1,2.3)--(0.7,2)--(1,1.7) (1,4.3)--(0.7,4)--(1,3.7)%3
            (2,-2.3)--(2.3,-2)--(2,-1.7) (2,0.3)--(2.3,0)--(2,-0.3) (2,2.3)--(2.3,2)--(2,1.7) (2,4.3)--(2.3,4)--(2,3.7) 
            (2,-3.3)--(1.7,-3)--(2,-2.7) (2,-1.3)--(1.7,-1)--(2,-0.7) (2,1.3)--(1.7,1)--(2,0.7) (2,3.3)--(1.7,3)--(2,2.7)%4
            (-0.7,-3)--(-0.3,-3) (-0.7,-1)--(-0.3,-1) (-0.7,1)--(-0.3,1) (-0.7,3)--(-0.3,3)
            (1.7,-3)--(1.3,-3) (1.7,-1)--(1.3,-1) (1.7,1)--(1.3,1) (1.7,3)--(1.3,3)
            (0.7,-2)--(0.3,-2) (0.7,0)--(0.3,0) (0.7,2)--(0.3,2) (0.7,4)--(0.3,4);%horiz
            \draw[dashed]%delete this to remove the dashed lines
            (-1,-3.3)--(-1,-4) (0,-3.3)--(0,-4) (1,-3.3)--(1,-4) (2,-3.3)--(2,-4)
            (-1,4.3)--(-1,5) (0,4.3)--(0,5) (1,4.3)--(1,5) (2,4.3)--(2,5)
            (-1.3,4)--(-1.8,4) (-1.3,2)--(-1.8,2) (-1.3,0)--(-1.8,0) (-1.3,-2)--(-1.8,-2)
            (2.3,4)--(2.8,4) (2.3,2)--(2.8,2) (2.3,0)--(2.8,0) (2.3,-2)--(2.8,-2);
            \draw[fill=red]%c
            (-1,-1.7) circle [radius=2pt] (-1,0.3) circle [radius=2pt] (-1,2.3) circle [radius=2pt](-1,4.3) circle [radius=2pt] (0,-2.7) circle [radius=2pt] (0,-0.7) circle [radius=2pt] 
            (0,1.3) circle [radius=2pt] (0,3.3) circle [radius=2pt] (1,-1.7) circle [radius=2pt] (1,0.3) circle [radius=2pt] (1,2.3) circle [radius=2pt] (1,4.3) circle [radius=2pt] 
            (2,-2.7) circle [radius=2pt] (2,-0.7) circle [radius=2pt] (2,1.3) circle [radius=2pt] (2,3.3) circle [radius=2pt]  (0.3,-2) circle [radius=2pt] (0.3,0) circle [radius=2pt] 
            (0.3,2) circle [radius=2pt] (0.3,4) circle [radius=2pt] (2.3,-2) circle [radius=2pt] (2.3,0) circle [radius=2pt] (2.3,2) circle [radius=2pt] (2.3,4) circle [radius=2pt] 
            (-0.7,-3) circle [radius=2pt] (-0.7,-1) circle [radius=2pt] (-0.7,1) circle [radius=2pt] (-0.7,3) circle [radius=2pt] 
            (1.3,-3) circle [radius=2pt] (1.3,-1) circle [radius=2pt] (1.3,1) circle [radius=2pt] (1.3,3) circle [radius=2pt];
            \draw[fill=blue]%b
            (-1,-2.7) circle [radius=2pt] (-1,-0.7) circle [radius=2pt] (-1,1.3) circle [radius=2pt] (-1,3.3) circle [radius=2pt] (0,-1.7) circle [radius=2pt] (0,0.3) circle [radius=2pt] 
            (0,2.3) circle [radius=2pt] (0,4.3) circle [radius=2pt] (1,-2.7) circle [radius=2pt] (1,-0.7) circle [radius=2pt] (1,1.3) circle [radius=2pt] (1,3.3) circle [radius=2pt] 
            (2,-1.7) circle [radius=2pt] (2,0.3) circle [radius=2pt] (2,2.3) circle [radius=2pt] (2,4.3) circle [radius=2pt] (-1.3,-2) circle [radius=2pt] (-1.3,0) circle [radius=2pt] 
            (-1.3,2) circle [radius=2pt] (-1.3,4) circle [radius=2pt] (0.7,-2) circle [radius=2pt] (0.7,0) circle [radius=2pt] (0.7,2) circle [radius=2pt] (0.7,4) circle [radius=2pt] 
            (-0.3,-3) circle [radius=2pt] (-0.3,-1) circle [radius=2pt] (-0.3,1) circle [radius=2pt] (-0.3,3) circle [radius=2pt] 
            (1.7,-3) circle [radius=2pt] (1.7,-1) circle [radius=2pt] (1.7,1) circle [radius=2pt] (1.7,3) circle [radius=2pt];
            \draw[fill=white] %a
            (-1,-3.3) circle [radius=2pt] (-1,-2.3) circle [radius=2pt] (-1,-1.3) circle [radius=2pt] (-1,-0.3) circle [radius=2pt] (-1,0.7) circle [radius=2pt](-1,1.7) circle [radius=2pt] 
            (-1,2.7) circle [radius=2pt] (-1,3.7) circle [radius=2pt] (0,-3.3) circle [radius=2pt] (0,-2.3) circle [radius=2pt] (0,-1.3) circle [radius=2pt] (0,-0.3) circle [radius=2pt] 
            (0,0.7) circle [radius=2pt] (0,1.7) circle [radius=2pt] (0,2.7) circle [radius=2pt] (0,3.7) circle [radius=2pt] (1,-3.3) circle [radius=2pt](1,-2.3) circle [radius=2pt] 
            (1,-1.3) circle [radius=2pt] (1,-0.3) circle [radius=2pt] (1,0.7) circle [radius=2pt] (1,1.7) circle [radius=2pt] (1,2.7) circle [radius=2pt] (1,3.7) circle [radius=2pt] 
            (2,-3.3) circle [radius=2pt] (2,-2.3) circle [radius=2pt] (2,-1.3) circle [radius=2pt] (2,-0.3) circle [radius=2pt] (2,0.7) circle [radius=2pt] (2,1.7) circle [radius=2pt] 
            (2,2.7) circle [radius=2pt] (2,3.7) circle [radius=2pt];
        \end{tikzpicture}
        \caption{The $3$-colouring of a net-wall used in the proof of Theorem~\ref{k5minus-4p1}.}\label{fig-netwall3col}
    \end{center}
\end{figure}
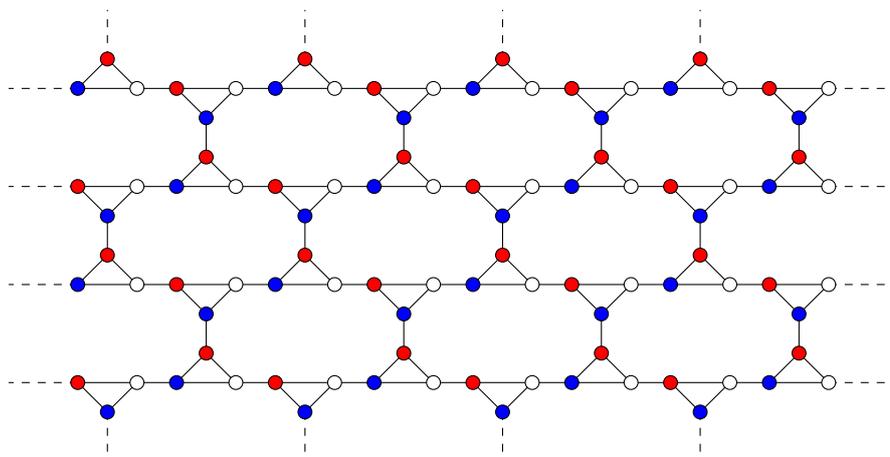

\begin{theorem}
    \label{k5minus-4p1}
    The class of $(4P_1,\overline{3P_1+P_2},\overline{P_1+2P_2})$-free graphs has unbounded mim-width.  
\end{theorem}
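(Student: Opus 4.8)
The plan is to follow the template of the proof of \cref{diamond-5p1}. Note first that $\overline{3P_1+P_2}$ is $K_5$ with one edge deleted, while $\overline{P_1+2P_2}$ is the five-vertex wheel $W_4$ (a $4$-cycle together with a vertex adjacent to all of it), equivalently $K_5$ with two disjoint edges deleted. Fix an integer $k$. By \cref{netwalls} there are net-walls of arbitrarily large mim-width, and by \cref{subdivision} (subdividing an edge not lying on an implanted triangle keeps the graph a net-wall and does not decrease mim-width) I may take a net-wall $W$ with $\mimw(W)>3k$ that arises from a sufficiently subdivided wall, so that no degree-$2$ vertex of $W$ is adjacent to two implanted-triangle vertices. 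I then $3$-colour $V(W)$ as in \cref{fig-netwall3col}, obtaining a partition $(V_1,V_2,V_3)$ with: (P1) each $V_i$ is independent in $W$ (so the colouring is proper, and every implanted triangle is rainbow); (P2) the three vertices attached by pendant edges to any one implanted triangle receive three distinct colours; and (P3) every degree-$2$ vertex of $W$ has two differently coloured neighbours. Finally, let $G$ be obtained from $W$ by turning each of $V_1,V_2,V_3$ into a clique; then \cref{kpartite} gives $\mimw(G)\ge\mimw(W)/3>k$.

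It remains to show that $G$ is $(4P_1,\overline{3P_1+P_2},\overline{P_1+2P_2})$-free. Any four vertices of $G$ contain two lying in a common $V_i$, which is a clique, so $G$ has no independent set of size $4$, i.e.\ $G$ is $4P_1$-free. For the other two graphs, suppose $G[Y]\cong\overline{3P_1+P_2}$ or $G[Y]\cong\overline{P_1+2P_2}$ for some $Y\subseteq V(G)$ with $|Y|=5$, and argue by case analysis on the profile $(|Y\cap V_1|,|Y\cap V_2|,|Y\cap V_3|)$. Two observations drive every case: each $Y\cap V_i$ is a clique of $G[Y]$, and every non-edge of $G[Y]$ joins two distinct classes (the edges of $G$ not already present in $W$ are precisely the within-class ones). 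I combine these with the net-wall structure: $W$ is subcubic, its only triangles are the implanted ones, each vertex of an implanted triangle has two ``siblings'' in the triangle (which receive distinct colours by (P1)) and one pendant-neighbour outside it, and $W$ has no~$C_4$.

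For $\overline{3P_1+P_2}$, i.e.\ $K_5$ minus an edge: the profiles $(4,1)$, $(3,2)$, and one subcase of $(3,1,1)$ each force some vertex of $Y$ to be $W$-adjacent to three equally coloured vertices of $Y$; this is impossible, since such a vertex would have to be an implanted-triangle vertex, but then two of those three neighbours would be its distinctly coloured siblings. The other subcase of $(3,1,1)$ forces a second non-edge in $G[Y]$, so $G[Y]$ has at most eight edges and is not $K_5$ minus an edge. The profile $(2,2,1)$ forces four vertices of $Y$ to induce a $C_4$ in $W$, a contradiction. Hence $G$ is $\overline{3P_1+P_2}$-free, using only (P1) and the net-wall structure. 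For $\overline{P_1+2P_2}=W_4$: the profile $(4,1)$ is impossible because $W_4$ has no $K_4$, and the profile $(2,2,1)$ again forces an induced $C_4$ in $W$ --- the hub of $W_4$ cannot be alone in its class, as that would make it $W$-adjacent to four vertices of $Y$. The crux is the profiles $(3,2)$ and $(3,1,1)$: the hub must then lie in the size-$3$ class, and unravelling the $W$-edges inside $Y$ shows that $Y$ consists of a single implanted triangle together with two of its pendant-neighbours (for $(3,1,1)$), or of an implanted triangle, a pendant-neighbour of it, and one further vertex, forming an induced $P_5$ of $W$ coloured $a,b,a,b,a$ (for $(3,2)$). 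In the first case, $G[Y]\cong W_4$ would make two pendant-neighbours of the triangle equally coloured, contradicting (P2); in the second, the middle vertex of the $P_5$ either has degree $2$ in $W$ and then has two equally coloured neighbours, contradicting (P3), or else it is an implanted-triangle vertex and (P2) is again violated. This establishes $\overline{P_1+2P_2}$-freeness, and the proof is complete.

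I expect the main obstacle to be pinning down properties (P1)--(P3) correctly: they must be strong enough to rule out every residual $W_4$-configuration above, yet realisable on a net-wall of arbitrarily large mim-width --- exactly what \cref{fig-netwall3col} has to certify (a finite, essentially periodic, check). Everything else is a finite, if somewhat tedious, walk through the colour profiles.
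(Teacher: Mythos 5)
Your construction is essentially the paper's: take a net-wall of large mim-width, $3$-colour it into independent classes $V_1,V_2,V_3$, make each class a clique, and apply \cref{netwalls} and \cref{kpartite}; the identifications $\overline{3P_1+P_2}\cong K_5-e$ and $\overline{P_1+2P_2}\cong$ the $5$-vertex wheel, and your $4P_1$- and $\overline{3P_1+P_2}$-freeness arguments, are all sound. The paper packages the colouring's key features as ``$W$ has no bichromatic induced $P_5$'' and ``every induced bull meets each class in at most two vertices'' and disposes of $\overline{P_1+2P_2}$ by locating its dominating vertex; your profile-by-profile analysis with the local properties (P1)--(P3) reaches the same two critical configurations (the bull and the bichromatic $P_5$).

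One step does not hold as written. In the $(3,2)$ profile for $\overline{P_1+2P_2}$ you reach an induced $P_5$ of $W$ coloured $a,b,a,b,a$ and, when its middle vertex $h$ has degree $3$, you assert that ``(P2) is again violated''. It is not: the configuration forcing two $V_b$-neighbours of $h$ is that $h$'s pendant neighbour lies in $V_b$ alongside $h$'s $V_b$-sibling, and (P2) only constrains the three pendant neighbours of $h$'s triangle against one another, not against the siblings. The contradiction must instead come from the second and fourth vertices of the $P_5$: (P1) and (P2) together force the pendant colours of each implanted triangle to be a derangement of the triangle's own colours, whence the $V_b$-sibling of $h$ has no neighbour in $V_a$ other than $h$; and $h$'s pendant neighbour is a degree-$2$ vertex (by your extra subdivision) whose second neighbour avoids $V_a$ by (P3). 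So the $P_5$ cannot be completed, and your claim is true, but it needs this additional argument rather than a direct appeal to (P2). Alternatively, simply take ``no bichromatic induced $P_5$'' as one of the certified properties of the colouring, as the paper does via \cref{fig-netwall3col}, which short-circuits this case.
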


\begin{proof}
   For every integer $k$, we will construct a $(4P_1,\overline{3P_1+P_2},\overline{P_1+2P_2})$-free graph~$G$ such that $\mimw(G) > k$.
    By \cref{netwalls}, for any integer $k$ there exists a net-wall $W$ such that $\mimw(W) > 3k$.
    We partition the vertex set $V(W)$ into three colour classes $(V_1,V_2,V_3)$ such that $V_i$ is an independent set for each $i \in \{1,2,3\}$ as illustrated in \cref{fig-netwall3col}.
    Since $W$ has maximum degree~$3$ and each vertex belongs to a triangle, a vertex has at most two neighbours in each colour class; that is, for each $i \in \{1,2,3\}$ and $v \in V_i$, we have $|N(v) \cap V_j| \le 2$ for $j \in \{1,2,3\}$.
    Note that these colour classes are chosen to satisfy the following properties.
    Firstly, $W$ does not contain a bichromatic induced $P_5$; that is, if $W[X] \cong P_5$ for some $X \subseteq V(P_5)$, then $X \cap V_i \neq \varnothing$ for each $i \in \{1,2,3\}$.
    Secondly, if $W[X] \cong \bull$, then $|X \cap V_i| \le 2$ for each $i \in \{1,2,3\}$.

    Let $G$ be the graph obtained from $W$ by making each of $V_1$, $V_2$, and $V_3$ into a clique.
    By \cref{kpartite}, $\mimw(G) \ge \mimw(W)/3 > k$.
    As any set of 4 vertices of $G$ contains at least two vertices in one of the cliques $V_1$, $V_2$, or $V_3$, we deduce that $G$ is $4P_1$-free.

    We now show that $G$ is $\left(\overline{3P_1+P_2}\right)$-free.
    To the contrary, suppose $G[X] \cong \overline{3P_1+P_2}$ for some $X \subseteq V(G)$.
    Then $X$ is not contained in $V_i$ for any $i \in \{1,2,3\}$.
    Moreover, $|X \cap V_i| \le 2$ for each $i\in \{1,2,3\}$, for otherwise there is a vertex with at least three neighbours in a different colour class.
    So, assume without loss of generality that 
    $X \cap V_1 = \{v_1,v'_1\}$, $X \cap V_2 = \{v_2,v'_2\}$, and $X \cap V_3 = \{v_3\}$.
    Then at least two of $\{v_1,v_2,v_3\}$, $\{v_1',v_2,v_3\}$, $\{v_1,v'_2,v_3\}$, $\{v_1',v_2',v_3\}$ induce triangles in $G$.
    These triangles consist of one vertex in each colour class, so they correspond to induced triangles in $W$.
    This is contradictory, as $W$ has no two distinct triangles that share a vertex.

    It remains to show that $G$ is $\left(\overline{P_1+2P_2}\right)$-free.
    Towards a contradiction, suppose $G[X] \cong \overline{P_1+2P_2}$.
    Note that $G[X]$ has a dominating vertex $h$.
    Without loss of generality, let $h \in V_1$.
    Since $h$ has degree $4$ in $G[X]$, we have $|X \cap V_1| \ge 2$.
    In fact, as $W$ has no cycles of length 4, no two vertices in $V_2 \cup V_3$ share a pair of common neighbours in $V_1$, so $|X \cap V_1| \ge 3$.
    Since $G[X]$ is $K_4$-free, we have $|X \cap V_1| = 3$.
    Let $X \cap V_1 = \{x,x',h\}$ and $X \setminus V_1 = \{y,z\}$.
    We may assume
    without loss of generality that $y \in V_2$.
    Now there is a $5$-vertex path $xyhzx'$ in $W$, up to the labels of $x$ and $x'$.
    If $z \in V_2$, then the four edges of this path are the only edges in $G[\{x,y,h,z,x'\}]$ where the two endpoints are in different colour classes, so $W[\{x,y,h,z,x'\}] \cong P_5$.
    Since $W$ has no bichromatic induced $P_5$, we deduce that $z \in V_3$.
    But then $W[X] \cong \bull$ and $|X \cap V_1| =3$, a contradiction.
\end{proof}

Next we use the  construction of a chordal bipartite graph $G'$ from a graph $G$, given in~\cite{BCM15}\footnote{Alternatively, we could take a wall, which has bipartition classes $A$ and $B$; 2-subdivide all of its edges; and make $A$ complete to $B$. The resulting graph has the same structure as $G'$ and can have arbitrarily large mim-width due to Theorem~\ref{walls} and Lemmas~\ref{subdivision} and~\ref{kpartite}.}.
Let $G=(V,E)$ be a graph.  We take two copies of $V$ labelled as follows: $X = \{x_v : v \in V\}$ and $Y = \{y_v : v \in V\}$.
To construct $G'$, start with a complete bipartite graph with vertex bipartition $(X,Y)$, and add, for each edge $e \in E$ with endpoints $u$ and $v$, two paths:
an $x_uy_v$-path $x_{u}q_et_ey_{v}$, and an $x_vy_u$-path $x_{v}q'_et'_ey_{u}$.
For convenience, we let $Q = \bigcup_{e \in E(G)} \{q_e, q'_e\}$ and $T = \bigcup_{e \in E(G)} \{t_e, t'_e\}$. Observe that $(X,Y,Q,T)$ partitions $V(G')$; see also Figure~\ref{f-gg}.

\begin{figure}[htb]
\begin{minipage}{0.48\textwidth}
  \centering
\begin{tikzpicture}[scale=0.8]
\draw (-3,2)--(-2.3,1)--(-2.3,-1)--(-1,-2)--(0.3,-1)--(0.3,1)--(1,2)--(1.7,1)--(1.7,-1)--(3,-2)
(-3,-2)--(-1.7,-1)--(-1.7,1)--(-1,2)--(-0.3,1)--(-0.3,-1)--(1,-2)--(2.3,-1)--(2.3,1)--(3,2)
(-3.2,2.3)--(-3.5,2.3)--(-3.5,1.7)--(-3.2,1.7) (-3.2,-2.3)--(-3.5,-2.3)--(-3.5,-1.7)--(-3.2,-1.7)
(-3.2,1.3)--(-3.5,1.3)--(-3.5,0.7)--(-3.2,0.7) (-3.2,-1.3)--(-3.5,-1.3)--(-3.5,-0.7)--(-3.2,-0.7);
\draw[fill=black] (-3,2) circle [radius=3pt] (-1,2) circle [radius=3pt] (1,2) circle [radius=3pt] (3,2) circle [radius=3pt] (-3,-2) circle [radius=3pt] 
(-1,-2) circle [radius=3pt] (1,-2) circle [radius=3pt] (3,-2) circle [radius=3pt] (-2.3,1) circle[radius=2pt] (-1.7,1) circle[radius=2pt] 
(-0.3,1) circle[radius=2pt] (0.3,1) circle[radius=2pt] (1.7,1) circle [radius=2pt] (2.3,1) circle[radius=2pt] (-2.3,-1) circle[radius=2pt] 
(-1.7,-1) circle[radius=2pt] (-0.3,-1) circle[radius=2pt] (0.3,-1) circle[radius=2pt] (1.7,-1) circle [radius=2pt] (2.3,-1) circle[radius=2pt];
\node[left] at (-3.5,2) {$Y$};
\node[left] at (-3.5,1) {$T$};
\node[left] at (-3.5,-1) {$Q$};
\node[left] at (-3.5,-2) {$X$};
\node[right] at (-3,2) {$y_a$};
\node[right] at (-1,2) {$y_b$};
\node[left] at (1,2) {$y_c$};
\node[left] at (3,2) {$y_d$};
\node[right] at (-3,-2) {$x_a$};
\node[right] at (-1,-2) {$x_b$};
\node[left] at (1,-2) {$x_c$};
\node[left] at (3,-2) {$x_d$};
\node[left] at (-2.3,1) {$t'_{ab}$};
\node[right] at (-1.7,1) {$t_{ab}$};
\node[left] at (-0.3,1) {$t'_{bc}$};
\node[right] at (0.3,1) {$t_{bc}$};
\node[left] at (1.7,1) {$t'_{cd}$};
\node[right] at (2.3,1) {$t_{cd}$};
\node[left] at (-2.3,-1) {$q'_{ab}$};
\node[right] at (-1.7,-1) {$q_{ab}$};
\node[left] at (-0.3,-1) {$q'_{bc}$};
\node[right] at (0.3,-1) {$q_{bc}$};
\node[left] at (1.7,-1) {$q'_{cd}$};
\node[right] at (2.3,-1) {$q_{cd}$};
\end{tikzpicture}
\end{minipage}
\begin{minipage}{0.48\textwidth}
 \centering
\begin{tikzpicture}[scale=0.8]
\draw (-3,2)--(-2.3,1)--(-2.3,-1)--(-1,-2)--(0.3,-1)--(0.3,1)--(1,2)--(1.7,1)--(1.7,-1)--(3,-2)
(-3,-2)--(-1.7,-1)--(-1.7,1)--(-1,2)--(-0.3,1)--(-0.3,-1)--(1,-2)--(2.3,-1)--(2.3,1)--(3,2)
(3.2,2.3)--(3.5,2.3)--(3.5,1.7)--(3.2,1.7) (3.2,-2.3)--(3.5,-2.3)--(3.5,-1.7)--(3.2,-1.7)
(3.2,-0.3)--(3.5,-0.3)--(3.5,0.3)--(3.2,0.3);
\draw[fill=black] (-3,2) circle [radius=3pt] (-1,2) circle [radius=3pt] (1,2) circle [radius=3pt] (3,2) circle [radius=3pt] 
(-3,-2) circle [radius=3pt] (-1,-2) circle [radius=3pt] (1,-2) circle [radius=3pt] (3,-2) circle [radius=3pt] 
(-2.3,0) circle[radius=2pt] (-1.7,0) circle[radius=2pt] (-0.3,0) circle[radius=2pt] (0.3,0) circle[radius=2pt] 
(1.7,0) circle [radius=2pt] (2.3,0) circle[radius=2pt];
\node[right] at (3.5,2) {$Y$};
\node[right] at (3.5,0) {$Z$};
\node[right] at (3.5,-2) {$X$};
\node[right] at (-3,2) {$y_a$};
\node[right] at (-1,2) {$y_b$};
\node[left] at (1,2) {$y_c$};
\node[left] at (3,2) {$y_d$};
\node[right] at (-3,-2) {$x_a$};
\node[right] at (-1,-2) {$x_b$};
\node[left] at (1,-2) {$x_c$};
\node[left] at (3,-2) {$x_d$};
\node[above left] at (-2.3,0) {$z'_{ab}$};
\node[below right] at (-1.7,0) {$z_{ab}$};
\node[above left] at (-0.3,0) {$z'_{bc}$};
\node[below right] at (0.3,0) {$z_{bc}$};
\node[above left] at (1.7,0) {$z'_{cd}$};
\node[below right] at (2.3,0) {$z_{cd}$};
\end{tikzpicture}
\end{minipage}
\caption{The graphs $G'$ and $G''$, where we did not draw the edges between $X$ and $Y$.}\label{f-gg}
\end{figure}
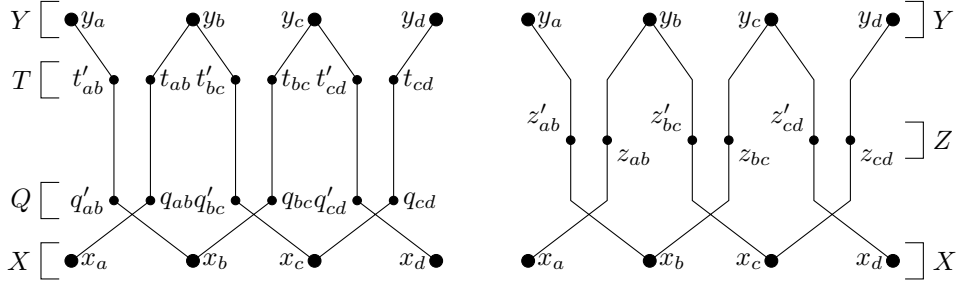

We need two lemmas. The  first one is due to Baron, Capelli and Mengel. 

\begin{lemma}[{\cite[Lemmas 15 and 16]{BCM15}}]
    \label{bcmconstruction}
    For any graph $G$, the graph $G'$ is chordal bipartite,
    Moreover, 
    if $G$ is bipartite, then $\mimw(G') \ge \tw(G) / 6$, where $\tw(G)$ denotes the treewidth of $G$.
\end{lemma}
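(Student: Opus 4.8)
The plan is to establish the three parts of the lemma separately: that $G'$ is bipartite, that it is chordal (and hence chordal bipartite), and that $\mimw(G') \ge \tw(G)/6$ when $G$ is bipartite.

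\textbf{Bipartiteness and chordality.} For bipartiteness I would exhibit the $2$-colouring $(X \cup T,\; Y \cup Q)$: every core edge joins $X$ to $Y$, and along each added path $x_u q_e t_e y_v$ (and $x_v q'_e t'_e y_u$) the classes alternate $X, Q, T, Y$. For chordality, the decisive observation is that every vertex of $Q \cup T$ has degree exactly $2$ in $G'$. Let $C$ be an induced cycle of $G'$; since $G'$ is bipartite $C$ is even, so it suffices to exclude $|C| \ge 6$. If $C$ meets $Q \cup T$, say at $q_e$ with $e = uv$, then $C$ uses both edges at $q_e$, hence also both edges at $t_e$, so $C$ contains the subpath $x_u\,q_e\,t_e\,y_v$; but $x_u y_v \in E(G')$ because $x_u \in X$ and $y_v \in Y$, and when $|C| \ge 6$ this is a chord of $C$, a contradiction. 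If $C$ avoids $Q \cup T$ then $C$ lies in the complete bipartite graph with parts $X$ and $Y$, which has no induced cycle longer than $4$ (any cycle $x_1 y_1 x_2 y_2 \dots$ of length $\ge 6$ has the chord $x_1 y_2$). Hence every induced cycle of $G'$ has exactly four vertices, so $G'$ is chordal bipartite.

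\textbf{The inequality.} I would prove the contrapositive: convert a branch decomposition of $G'$ of small mim-width into one of $G$ of small width. Given $(\mathcal{T}, \delta)$ for $G'$ with $\mimw_{G'}(\mathcal{T},\delta) = k$, delete from $\mathcal{T}$ every leaf except those in $\{\delta(x_v) : v \in V(G)\}$ and suppress the resulting degree-two nodes; this yields a branch decomposition $(\mathcal{T}',\delta')$ of $G$ over $V(G)$. Each edge of $\mathcal{T}'$ comes from a cut $(A,\overline{A})$ of $V(G')$ with $\cutmim_{G'}(A,\overline{A}) \le k$, inducing the vertex cut $(V_1,V_2)$ of $G$ with $V_1 = \{v : x_v \in A\}$, and I would show $G[V_1,V_2]$ has no matching of size exceeding $2k$. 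Indeed, for a matching edge $vv'$ of $G[V_1,V_2]$ (say $v \in V_1$, $v' \in V_2$), look at the two length-three gadget paths of $G'$ joining $x_v$ to $y_{v'}$ and $x_{v'}$ to $y_v$: since $x_v \in A$ and $x_{v'} \in \overline{A}$, either one of these paths is split by the cut (and thus contributes a crossing edge of $G'[A,\overline{A}]$), or both are monochromatic, in which case one of the core edges $x_v y_v$, $x_{v'}y_{v'}$ crosses. As the interior vertices of gadgets of distinct edges of $G$ are private (degree two, in disjoint copies), choosing these crossing edges carefully — preferring the middle edge $q_e t_e$ of a gadget, which touches only private vertices — and splitting the matching edges into a bounded number of classes yields an induced matching of $G'[A,\overline{A}]$ of size a constant fraction of the matching. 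Thus $\mathrm{mmw}(G) \le 2k$, and the lemma follows from $\tw(G) < 3\,\mathrm{mmw}(G)$; tracking the constants produces the factor $6$. (One may equally work over $E(G)$ using the leaves $\{\delta(q_e)\}$ and invoke $\tw(G) < \tfrac32\,\mathrm{bw}(G)$.)

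\textbf{The main obstacle.} The real work is the constant-tracking in the last part: the complete bipartite core on $(X,Y)$ creates many spurious crossing edges between endpoints of gadgets of different matching edges, so for each matching edge one must select a crossing edge that avoids these interactions and then colour the matching edges so that within each colour class the selected edges are pairwise independent; the need to treat the two gadget paths per edge of $G$ symmetrically is the fiddly part, and is what makes this a cited lemma rather than a short exercise. For the conclusion actually needed downstream — that chordal bipartite graphs, and hence the classes in \cref{t-p4,split,cobipartite}, have unbounded mim-width — this can be sidestepped entirely using the self-contained construction indicated in the footnote: $2$-subdivide every edge of an elementary $(n\times n)$-wall and make the two colour classes of the wall complete to one another. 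The resulting graph is chordal bipartite by the argument above, while \cref{subdivision} (subdividing does not decrease mim-width) together with \cref{walls} shows the $2$-subdivided wall already has mim-width at least $\tfrac{\sqrt{n}}{50}$, and \cref{kpartite}, applied with the three parts consisting of the vertex set of the original wall and the two sets of subdivision vertices, shows that making those colour classes complete costs at most a factor of $3$.
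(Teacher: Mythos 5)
Your first part (bipartite with parts $X\cup T$ and $Y\cup Q$, and no induced cycle of length at least $6$ because any such cycle through $q_e$ must traverse $x_u q_e t_e y_v$ and then has the chord $x_uy_v$) is correct. The paper does not reprove this lemma --- it is imported from~\cite{BCM15} --- but it does prove the closely analogous \cref{modified-bcmconstruction} for $G''$, and comparing your sketch with that proof exposes a genuine gap in your inequality argument. You restrict the branch decomposition of $G'$ to the leaves $\{\delta(x_v) : v\in V(G)\}$, so a crossing edge $vv'$ of $G$ only guarantees that $x_v$ and $x_{v'}$ lie on opposite sides of the cut. In the case you yourself identify --- both gadget paths monochromatic, i.e.\ $y_{v'}$ on the side of $x_v$ and $y_v$ on the side of $x_{v'}$ --- the only crossing edges attributable to $vv'$ are core edges of the complete bipartite graph on $(X,Y)$. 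But any two core edges $x_vy_v$ and $x_wy_w$ conflict (the edge $x_vy_w$ joins their endpoints), so the conflict graph on these fallback edges is complete and no colouring into ``a bounded number of classes'' can extract more than one of them into an induced matching. Since an adversarial branch decomposition of $G'$ can place every $y_v$ opposite to $x_v$, your argument as written can degenerate to an induced matching of size $1$, and the factor-$6$ accounting cannot be completed. A telltale symptom is that your proof of the inequality never uses the hypothesis that $G$ is bipartite.

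The missing idea, used both in~\cite{BCM15} and in the paper's proof of \cref{modified-bcmconstruction}, is to exploit the bipartition $(A,B)$ of $G$ when restricting the branch decomposition: represent $v\in A$ by the leaf of $x_v$ and $v\in B$ by the leaf of $y_v$. Then every crossing edge $uv$ of $G$ (with $u\in A$, $v\in B$) has $x_u$ and $y_v$ on opposite sides of the cut, so the gadget path $x_uq_et_ey_v$ must contain a crossing edge, and none of the three candidates is a core edge. Splitting the selected edges into those meeting $X\cup Q$ and those meeting $T\cup Y$ and keeping the larger class loses only a factor of $2$, and the result is automatically induced because each gadget vertex has a single neighbour outside its own gadget path; combined with the fact that some cut of the restricted decomposition carries a matching of $G$ of size at least $\tw(G)/3$, this gives the stated $\tw(G)/6$. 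Your fallback via the $2$-subdivided wall with the two colour classes made complete is sound and does suffice for the qualitative downstream conclusion that chordal bipartite graphs have unbounded mim-width, but it does not prove the lemma as stated, which is quantitative and applies to $G'$ built from an arbitrary bipartite $G$.
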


\begin{lemma}\label{bcmconstrpaths}
    For any graph $G$, the chordal bipartite graph $G'$ is $(P_8, P_3+P_6, S_{1,1,5})$-free.
\end{lemma}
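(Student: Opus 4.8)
The plan is to exploit the rigid local structure of $G'$. Recall from the construction the following facts: $G'[X \cup Y]$ is a complete bipartite graph; for each edge $e = uv$ of $G$, the vertex $q_e$ has degree $2$ in $G'$ with $N_{G'}(q_e) = \{x_u, t_e\}$ and $t_e$ has degree $2$ with $N_{G'}(t_e) = \{q_e, y_v\}$ (and symmetrically for $q'_e, t'_e$); and $G'[Q \cup T]$ is a disjoint union of edges, namely the pairs $\{q_e, t_e\}$ and $\{q'_e, t'_e\}$. In other words, each such pair is an ``ear'': a pendant path of length $3$ joining a vertex $x_u \in X$ to a vertex $y_v \in Y$, where moreover $x_u$ and $y_v$ are already adjacent in $G'$. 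Each of $P_8$, $P_3 + P_6$ and $S_{1,1,5}$ is a forest of maximum degree at most $3$, so in each case I would assume $H$ is an induced subgraph of $G'$ isomorphic to the relevant graph and derive a contradiction, freely using that $H$ is a forest with $\Delta(H) \le 3$.

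I would first establish two structural bounds for such an $H$. \emph{Biclique bound:} $H[V(H) \cap (X \cup Y)]$ is a complete bipartite graph $K_{p,q}$ with $p = |V(H) \cap X|$ and $q = |V(H) \cap Y|$; since $H$ is a forest this forces $\min(p,q) \le 1$, and when $\min(p,q) = 1$ the degree bound gives $\max(p,q) \le 3$, while $K_{3,1}$ is excluded because following the long leg out of its degree-$3$ centre $y^* \in Y$ forces a second vertex of $Y$ into $H$ (the path leaves $y^*$ into $X$, is then forced into $Q$, then into $T$, then into $Y$). When $\min(p,q) = 0$ one uses that no path in $G'[X \cup Q \cup T]$ joins two distinct vertices of $X$, so each component of $H$ has at most one vertex in $X \cup Y$. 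Altogether $|V(H) \cap (X \cup Y)| \le 3$. \emph{Ear bound:} if $V(H)$ meets an ear $\{q, t\}$ then some vertex of $\{q, t\} \cap V(H)$ is a leaf of $H$; otherwise both $q$ and $t$ lie in $V(H)$ with degree $2$, so their attachment vertices $x, y$ also lie in $V(H)$ and $H[\{x, q, t, y\}]$ is a $C_4$ (as $x \sim y$ in $G'$), contradicting that $H$ is a forest. Since distinct ears are vertex-disjoint, and (as $H$ has no isolated vertex) no ear supplies two leaves, the number of ears met by $H$ is at most the number of leaves of $H$; hence $|V(H) \cap (Q \cup T)| \le 2 \cdot (\text{number of leaves of } H)$, and in fact the number of leaves of $H$ lying in $Q \cup T$ equals the number of ears met.

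These two bounds immediately handle $P_8$, which has two leaves: $|V(H)| \le 3 + 4 = 7 < 8$. For $S_{1,1,5}$, which has three leaves and a degree-$3$ centre, the biclique bound gives $|V(H) \cap (Q \cup T)| \ge 5$, so all three ears are met and, by the last remark, all three leaves of $H$ lie in $Q \cup T$; in particular the two leaves adjacent to the centre do, and since the centre lies in $X \cup Y$ (it has degree $3$) these two leaves are both in $Q$ (if the centre is in $X$) or both in $T$ (if it is in $Y$). Either way each of these two leaves has its ear-partner outside $H$, so two of the three ears contribute only one vertex to $Q \cup T$ and $|V(H) \cap (Q \cup T)| \le 1 + 1 + 2 = 4$, a contradiction.

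The case $P_3 + P_6$ is the crux and where most of the work lies. Write $A \cong P_3$ and $B \cong P_6$ for the two components. As the two vertices of an ear are adjacent, each ear met by $H$ lies inside a single component; and since $G'[Q \cup T]$ is a union of edges, $|V(A) \cap (Q \cup T)| \le 2$, while the ear bound gives $|V(B) \cap (Q \cup T)| \le 4$. Combined with $|V(H) \cap (X \cup Y)| \le 3$ and $|V(H)| = 9$, this forces $|V(A) \cap (X \cup Y)| = 1$, $|V(B) \cap (X \cup Y)| = 2$, $|V(A) \cap (Q \cup T)| = 2$ and $|V(B) \cap (Q \cup T)| = 4$; in particular $B$ uses exactly two ears, each contributing two vertices. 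An ear with both vertices on a path $P$ and contributing two vertices to $P$ appears either as $x\, q\, t$ with $t$ an endpoint of $P$ (so the attachment vertex $x \in X$ lies on $P$) or as $q\, t\, y$ with $q$ an endpoint (so $y \in Y$ lies on $P$). A routine check shows that if both ears of $B$ have the first type, or both the second, then $B$ cannot be a $6$-vertex induced path (the path cannot close up without an edge inside $X$ or inside $Y$, or it forces a repeated or isolated vertex). Hence, up to symmetry, $B = t_\alpha\, q_\alpha\, x_{a_\alpha}\, y_{b_\beta}\, t_\beta\, q_\beta$, which contains a vertex of $X$ and a vertex of $Y$. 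But the unique vertex $z$ of $V(A) \cap (X \cup Y)$ lies in $X$ or in $Y$, and since $X$ is complete to $Y$ in $G'$, $z$ is adjacent to $y_{b_\beta}$ or to $x_{a_\alpha}$ — an edge between the two components, the final contradiction. The main obstacles are the exclusion of $K_{3,1}$ in the biclique bound and, above all, the case analysis showing how two ears must combine to form an induced $P_6$; everything else is routine.
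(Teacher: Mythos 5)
Your proof is correct, but it takes a genuinely different route from the paper's. The paper's argument is built around induced paths: it first shows that any induced path on at least $6$ vertices must meet both $X$ and $Y$, and that any induced path meeting both has at most $4$ vertices in $Q\cup T$ and hence at most $7$ vertices in total. This disposes of $P_8$ at once; the $P_3+P_6$ case is then almost immediate, since once the $P_6$ occupies a vertex of $X$ and a vertex of $Y$, the complete bipartite structure forces the remaining three vertices into $Q\cup T$, which induces only a disjoint union of edges and so contains no $P_3$; and for $S_{1,1,5}$ it locates both ends of the embedded $P_7$ in $Q\cup T$ and observes that the degree-$3$ centre, being adjacent to such an end, would have to be a degree-$2$ vertex of $G'$. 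Your route instead rests on two counting bounds --- at most $3$ vertices of $H$ in $X\cup Y$, and at most two vertices of $Q\cup T$ per leaf of $H$ via the ear structure --- which settles $P_8$ and $S_{1,1,5}$ cleanly and arguably more transparently, and would generalize to other forests with few leaves; the price is paid in the $P_3+P_6$ case, which for you is the longest part and for the paper the shortest. In fact you work harder there than necessary: once your counting forces $|V(H)\cap(X\cup Y)|=3$ split $1$--$2$ between the two components, you already have a contradiction, because three biclique vertices with $\min(p,q)=1$ and $\max(p,q)=2$ induce a connected $P_3$ in $G'$ and hence must all lie in a single component of $H$ (and $\min(p,q)=0$ caps the count at $2$); the case analysis of how two full ears sit on an induced $P_6$ can be skipped entirely. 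Two small points to tidy: the claim that no ear supplies two leaves should be justified by $H$ having no $P_2$ component (two adjacent leaves of $H$ would form one), not by the absence of isolated vertices; and the exclusion of $K_{3,1}$ should also cover the symmetric case where the degree-$3$ centre lies in $X$, which follows from the $X\leftrightarrow Y$, $Q\leftrightarrow T$ symmetry of the construction.
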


\begin{proof}
We label the vertices of $G'$ as described in the construction, so $(X, Y, Q, T)$ is a partition of $V(G')$. We first claim that if some $A \subseteq V(G')$ induces a path in $G'$, with $|A| \ge 6$, then $X \cap A$ and $Y \cap A$ are non-empty. Suppose $G'[A] \cong P_{|A|}$ and $Y \cap A = \varnothing$. In $G'[X \cup Q \cup T]$, each vertex in $T$ has degree~$1$, and each vertex in $Q$ has two neighbours: one in $X$ and one in $T$.  If a vertex of $T$ is in $A$, then it is an end of the path $G'[A]$; so $|T \cap A| \le 2$. If a vertex of $Q$ is in $A$, then either it is an end of the path $G'[A]$, or it is adjacent to a vertex of $T$ that is an end of the path $G'[A]$. So $|Q \cap A| \le 2$. Since $X$ is independent, $|A| \le 5$. The claim now follows by symmetry.

Now suppose some $A \subseteq V(G')$ induces a path in $G'$ where $A\cap X \neq \varnothing$ and $A\cap Y \neq \varnothing$.
    Since $G'[X \cup Y]$ is complete bipartite, we may also assume that $|X \cap A| \in \{1,2\}$ and $|Y \cap A| = 1$.
    For each vertex $v \in Q \cap A$ (respectively, $v \in T \cap A$), either $v$ is the end of the path $G'[A]$, or $v$ has a neighbour in $X \cap A$ (respectively, $Y \cap A$).
    Suppose $|(Q \cup T) \cap A| \ge 5$.
    Let $A'$ be the vertices in $(Q \cup T) \cap A$ that are not ends of the path $G'[A]$.
    Then $|A'| \ge 3$, and each vertex in $A'$ has a neighbour in $(X \cup Y) \cap A$.
    Since $A\cap X$ and $A \cap Y$ are non-empty, no two vertices in $(Q \cup T) \cap A$ share a neighbour in $(X \cup Y) \cap A$.
    So $|N_{G'[A]}(A') \cap (X \cup Y)| \ge 3$, implying $|X \cap A|=2$.  
    However, then the vertex in the singleton set $Y \cap A$ has degree~$3$ in $G'[A]$, a contradiction.
    So $|(Q \cup T) \cap A| < 5$, and $|A| < 8$. 
    It now follows that $G'$ is $P_8$-free.

    Next we suppose, for some $F \subseteq V(G')$, that $G'[F]$ is a linear forest, one component of which is a $P_6$.  Let $A \subseteq F$ such that $G'[A] \cong P_6$.
    By the foregoing claim, $X \cap A$ and $Y \cap A$ are non-empty.
    Since $G'[X \cup Y]$ is complete bipartite, it follows that $F \setminus A \subseteq Q \cup T$.
    Hence $G'[F \setminus A] \cong sP_1+tP_2$ for some $s,t \ge 0$, implying $G'$ is $(P_3+P_6)$-free.

    Finally, suppose $G'[S] \cong S_{1,1,5}$ for some $S \subseteq V(G)$.  Let $A \subseteq S$ such that $G'[A] \cong P_7$.
    By the foregoing, $X \cap A$ and $Y \cap A$ are non-empty, and $|(Q \cup T) \cap A| < 5$.  Hence $\{|X \cap A|, |Y \cap A|\} = \{1,2\}$.
    Observe now that both ends of the path $G'[A]$ are in either $Q$ or $T$, and the vertices of the path adjacent to the ends are in either $T$ or $Q$, respectively.
    But then some vertex in $T$ or $Q$ has degree~$3$ in $G'[S]$ and hence in $G'$, a contradiction.
    Hence $G'$ is $S_{1,1,5}$-free.
\end{proof}

\noindent
\Cref{bcmconstrpaths} is tight in the following sense: for some graph $G$, the graph $G'$ can contain, as an induced 
subgraph, $tP_2+P_7$ or $tP_5$ for any non-negative integer $t$, or $S_{2,2,4}$.

Theorem~\ref{cb-p8} now follows from \cref{bcmconstruction,bcmconstrpaths} and the fact that bipartite graphs can have arbitrarily large treewidth (see, e.g.,~\cite{Va12}). 
We use \cref{kpartite} to obtain \cref{gem-4p1,diamond-2p3}.

\begin{theorem}
    \label{cb-p8}
The class of chordal  bipartite $(P_8, P_3+P_6, S_{1,1,5})$-free graphs has unbounded mim-width. 
\end{theorem}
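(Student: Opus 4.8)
The plan is to instantiate the construction $G \mapsto G'$ of Baron, Capelli and Mengel (described in the paragraph preceding the statement) on a well-chosen family of input graphs. First I would take, for each integer $n$, a bipartite graph $G_n$ of treewidth at least $n$; the $(n \times n)$-grid is the natural choice, since it is bipartite and has treewidth exactly $n$ (see, e.g., \cite{Va12}).

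Next I would form $G_n'$ from $G_n$ as in that construction. By \cref{bcmconstruction}, each $G_n'$ is chordal bipartite, and since $G_n$ is bipartite we obtain $\mimw(G_n') \ge \tw(G_n)/6 \ge n/6$. By \cref{bcmconstrpaths}, each $G_n'$ is moreover $(P_8, P_3 + P_6, S_{1,1,5})$-free. Hence $\{G_n' : n \ge 1\}$ is a family of chordal bipartite $(P_8, P_3 + P_6, S_{1,1,5})$-free graphs whose mim-width tends to infinity, which is precisely the assertion of the theorem.

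There is essentially no remaining obstacle: \cref{bcmconstruction,bcmconstrpaths} together carry all the weight, and the only points needing (minor) care are that the chosen graphs $G_n$ are genuinely bipartite — so that the treewidth-bounding half of \cref{bcmconstruction} applies — and that their treewidth is unbounded, both of which are classical facts about grids. If one preferred to avoid invoking $G'$, the alternative construction noted in the footnote — $2$-subdividing every edge of a wall and making its two bipartition sides complete to each other — would serve equally well, using \cref{walls,subdivision,kpartite} in place of \cref{bcmconstruction}; but routing the argument through $G'$ is cleaner here, since \cref{bcmconstrpaths} is stated directly for $G'$.
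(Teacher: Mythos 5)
Your proposal is correct and follows exactly the paper's argument: the theorem is derived from \cref{bcmconstruction,bcmconstrpaths} applied to a family of bipartite graphs of unbounded treewidth (the paper cites~\cite{Va12} for this fact; grids are the standard witness). No further comment is needed.
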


\begin{theorem}
  \label{gem-4p1}
  The class of $(4P_1, \gem, \overline{P_1+2P_2})$-free graphs has unbounded mim-width.  
\end{theorem}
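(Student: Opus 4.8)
The plan is to combine the Brault-Baron--Capelli--Mengel construction with \cref{kpartite}, in the spirit of \cref{diamond-5p1,k5minus-4p1}. Starting from an arbitrary bipartite graph $G$, I would form the chordal bipartite graph $G'$ as described just before \cref{bcmconstrpaths}, with its vertex partition $(X,Y,Q,T)$, and then let $G^\ast$ be the graph obtained from $G'$ by turning each of the four sets $X$, $Y$, $Q$, $T$ into a clique. Since $G'[X\cup Y]$ is already complete bipartite, $X\cup Y$ becomes a single clique in $G^\ast$, so $V(G^\ast)$ is covered by the three cliques $X\cup Y$, $Q$ and $T$; hence $G^\ast$ is automatically $4P_1$-free. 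Because $G'$ is $4$-partite with parts $X,Y,Q,T$ and $G^\ast$ arises by adding edges inside these parts, \cref{kpartite} yields $\mimw(G^\ast)\ge \tfrac14\mimw(G')$, while \cref{bcmconstruction} gives $\mimw(G')\ge \tw(G)/6$; as bipartite graphs have unbounded treewidth, the family of all such $G^\ast$ would then have unbounded mim-width.

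The remaining work is to verify that $G^\ast$ is $\gem$-free and $\overline{P_1+2P_2}$-free. The key step I would prove is that for every vertex $v$ of $G^\ast$, the graph $G^\ast[N(v)]$ contains no induced $P_4$ and no induced $C_4$. To do this I would first record the adjacencies surviving in $G^\ast$: each vertex of $Q$ has exactly one neighbour in $X$, exactly one in $T$, and none in $Y$ (and symmetrically for $T$), while each $v\in X$ is complete to $(X\cup Y)\setminus\{v\}$, has some neighbours in $Q$, and none in $T$ (and symmetrically for $Y$). From this a short case analysis shows that $G^\ast[N(v)]$ is a disjoint union of two cliques when $v\in X\cup Y$, and a disjoint union of an isolated vertex with a graph consisting of a clique together with one further vertex adjacent to a subset of it when $v\in Q\cup T$; each such graph is easily seen to be $(P_4,C_4)$-free. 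Finally, since $\gem$ is obtained from $P_4$ by adding a dominating vertex and $\overline{P_1+2P_2}$ is the $4$-wheel, obtained from $C_4$ by adding a dominating vertex, an induced copy of either graph in $G^\ast$ would force its dominating vertex $h$ to satisfy that $G^\ast[N(h)]$ contains an induced $P_4$ (respectively $C_4$), which is impossible.

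I expect the main obstacle to be bookkeeping rather than conceptual: getting the surviving adjacencies between $X$, $Y$, $Q$, $T$ exactly right after the clique-completions — in particular confirming that no edges are created between $N(v)\cap Q$ and $(X\cup Y)\setminus\{v\}$ when $v\in X$, and that $t_e$ becomes isolated in $G^\ast[N(q_e)]$ — and double-checking the identifications $\gem=\overline{P_4+P_1}$ and $\overline{P_1+2P_2}=\overline{2P_2+P_1}$ so that the reduction to $(P_4,C_4)$-freeness of neighbourhoods is valid. I note that the same statement could alternatively be derived from a $2$-subdivided wall with its two colour classes made complete (as in the footnote to \cref{cb-p8}), invoking \cref{walls,subdivision,kpartite} in place of \cref{bcmconstruction}.
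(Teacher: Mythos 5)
Your proposal is correct and follows essentially the same route as the paper: the identical construction (the graph obtained from $B'$ by turning each of $X$, $Y$, $Q$, $T$ into a clique), the same appeal to \cref{bcmconstruction,kpartite} for the lower bound, and the same observation that the three cliques $X\cup Y$, $Q$, $T$ give $4P_1$-freeness. The only divergence is in the final verification: the paper classifies the induced diamonds and shows none extends by a dominating vertex, whereas you show directly that every neighbourhood in the constructed graph is $(P_4,C_4)$-free and use $\gem=\overline{P_1+P_4}$, $\overline{P_1+2P_2}=$ the $4$-wheel; both are short case analyses of the same adjacency structure, and yours is sound.
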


\begin{proof}
  For every integer $k$, we will construct a $(4P_1,\gem)$-free graph~$G$ such that $\mimw(G) > k$.  Let $B$ be a bipartite graph with $\tw(B) > 24k$. Then, $\mimw(B') > 4k$ by \cref{bcmconstruction}. Observe that $B'$ is $4$-partite, where $V(B')$ has a partition $(X, Y, T, Q)$ into independent colour classes, using the labelling described in the construction. Let $G$ be the graph obtained from $B'$ by making $X$, $Y$, $T$, and $Q$ into cliques. By \cref{kpartite}, $\mimw(G) \ge \mimw(B')/4 > k$.

  Observe that $X \cup Y$, $T$, and $Q$ are cliques that partition $V(G)$, so $G$ is $4P_1$-free.
  Note also that each vertex in $Q$ has exactly one neighbour in $T$, exactly one neighbour in $X$, and no neighbours in $Y$.
  By symmetry, each vertex in $T$ has exactly one neighbour in $Q$, exactly one neighbour in $Y$, and no neighbours in $X$.
  In particular, each vertex in $Q \cup T$ has at most one neighbour in $X \cup Y$.
  It remains to show that $G$ is $(\gem, \overline{P_1+2P_2})$-free.

  Suppose $G[D] \cong \diamond$ for some $D \subseteq V(G)$.
  Since $X \cup Y$ is a clique, $|D \cap (X \cup Y)| \le 3$.
  In fact, $|D \cap (X \cup Y)| \le 1$, since each vertex in $Q \cup T$ has at most one neighbour in $X \cup Y$.
  Note also that $D \nsubseteq T \cup Q$, since a vertex in $T$ has at most one neighbour in $Q$ (and vice versa).
  It follows, without loss of generality, that $|D \cap Q| = 3$ and $|D \cap X| = 1$.

  Now suppose $G[D']$ is isomorphic to $\gem$ or $\overline{P_1+2P_2}$ for some $D' = D \cup z$ with $z \in V(G) \setminus D$.
  Note that a $\gem$ or a $\overline{P_1+2P_2}$ has a dominating vertex~$h$, and $h \in D \cap Q$.
  If $z \in X$, then $hz$ is not an edge, since the only neighbour of $h$ in $X$ is the vertex in $D \cap X$.
  If $z \in Y \cup T$, then $z$ has degree $1$ in $G[D']$.
  If $z \in Q$, then $G[D']$ contains a $K_4$.
  From this contradiction we deduce that $G$ is $(\gem, \overline{P_1+2P_2})$-free.
\end{proof}

\begin{theorem}
    \label{diamond-2p3}
The class of  $(\diamond, 2P_3)$-free graphs has unbounded mim-width.  
\end{theorem}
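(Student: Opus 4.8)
I would follow the same template as the proof of \cref{gem-4p1}, again building on the chordal bipartite graph $B'$ obtained from a bipartite graph $B$ of large treewidth, but creating fewer cliques. Given an integer $k$, take a bipartite graph $B$ with $\tw(B) > 24k$, form $B'$ with its partition $(X,Y,Q,T)$ into independent sets, and let $G$ be obtained from $B'$ by adding all edges inside $X$ and all edges inside $Y$ (leaving $Q$ and $T$ untouched). All added edges have both endpoints in one of the four independent sets, so \cref{kpartite} together with \cref{bcmconstruction} gives $\mimw(G) \ge \mimw(B')/4 \ge \tw(B)/24 > k$. Since $B'$ is complete bipartite between $X$ and $Y$, in $G$ the set $X \cup Y$ is a clique, while each vertex of $Q \cup T$ retains exactly its two $B'$-neighbours: one in $X \cup Y$ and one (its partner in the subdivision paths) in $Q \cup T$. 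In particular $G[Q \cup T]$ is an induced perfect matching, every vertex of $Q \cup T$ has exactly one neighbour in $X \cup Y$, and every vertex of $X \cup Y$ has degree at least $|X|+|Y|-1 \ge 3$.

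It then remains to show $G$ is $(\diamond, 2P_3)$-free, and both parts are short. For $\diamond$-freeness: the two degree-$3$ vertices of an induced $\diamond$ have degree at least $3$ in $G$, hence lie in $X \cup Y$; its two degree-$2$ vertices are non-adjacent, so they cannot both lie in the clique $X \cup Y$, yet a vertex of $Q \cup T$ has at most one neighbour in $X \cup Y$ and so cannot be adjacent to both degree-$3$ vertices — a contradiction. For $2P_3$-freeness: every induced $P_3$ of $G$ meets $X \cup Y$, because a $P_3$ contained in $Q \cup T$ would have a middle vertex with two neighbours in $Q \cup T$, contradicting that $G[Q \cup T]$ is a matching; consequently the two components of an induced $2P_3$ would each contribute a vertex of $X \cup Y$, and those vertices would be adjacent, contradicting that the components are anticomplete.

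The only point requiring care — and the step I would double-check most carefully — is reading off the precise neighbourhoods of the vertices of $Q$ and $T$ from the definition of $B'$ (namely that $q_e$, resp.\ $q'_e$, is adjacent only to one vertex of $X$ and to $t_e$, resp.\ $t'_e$, and symmetrically for $T$), since the whole argument rests on these two local facts; everything afterwards is a routine case analysis. If one prefers to avoid $B'$, the same construction and proof go through starting from the graph in the footnote preceding \cref{cb-p8} ($2$-subdivide a wall, make its two sides complete to each other, then turn each side into a clique), invoking \cref{walls,subdivision,kpartite} in place of \cref{bcmconstruction}.
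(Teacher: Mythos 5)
Your proposal is correct and follows essentially the same route as the paper: the identical construction (make $X$ and $Y$ into cliques in $B'$), the same appeal to \cref{bcmconstruction,kpartite}, and the same two observations that $X\cup Y$ is a clique while each vertex of $Q\cup T$ has degree $2$ with exactly one neighbour in $X\cup Y$. The only differences are cosmetic: the paper applies \cref{kpartite} to the bipartition $(X\cup T, Y\cup Q)$ of $B'$ (losing a factor $2$ rather than your factor $4$, which is immaterial for unboundedness), and it derives $\diamond$-freeness by noting every triangle lies in the clique $X\cup Y$ rather than via your degree count.
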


\begin{proof}
For every integer $k$, we will construct a $(\diamond, 2P_3)$-free graph~$G$ such that $\mimw(G) > k$.
    Let $B$ be a bipartite graph with $\tw(B) > 12k$.
    Then, $\mimw(B') > 2k$ by \cref{bcmconstruction}.
    Observe that $B'$ is bipartite, where $(X \cup T, Y \cup Q)$ is a bipartition of $V(B')$.
    Let $G$ be the graph obtained from $B'$ by making $X$ and $Y$ into cliques.
    By \cref{kpartite}, $\mimw(G) \ge \mimw(B')/2 > k$.

    Observe now that $X \cup Y$ is a clique of $G$.
    Moreover, $G$ can be obtained starting from $G[X \cup Y]$ by adding $3$-edge $xy$-paths for some $x \in X$ and $y \in Y$.
    It follows that each induced $P_3$ subgraph of $G$ contains some vertex of $X \cup Y$.
    Since $X \cup Y$ is a clique, any two disjoint induced $P_3$ subgraphs of $G$ have an edge between them.  So $G$ is $2P_3$-free.

    Finally, observe that for each induced $K_3$ subgraph of $G$ we have $V(K_3) \subseteq X \cup Y$.  Hence, if $G[A] \cong \diamond$ for some $A\subseteq V(G)$, then $A \subseteq X \cup Y$, but then $A$ is a clique, a contradiction. So $G$ is $\diamond$-free.
\end{proof}

We now describe the construction of a graph $G''$ from a graph $G=(V,E)$.
This construction is similar to the construction of $G'$; we adapt the approach taken by \cite{BCM15} to construct graphs with arbitrarily large mim-width. 
Take two copies of $V$ labelled as follows: $X = \{x_v : v \in V\}$ and $Y = \{y_v : v \in V\}$.
Construct a graph $G''$ on vertex set $X \cup Y \cup Z$ where $Z = \bigcup_{e \in E(G)} \{z_e, z'_e\}$.
Start with a complete bipartite graph with vertex bipartition $(X,Y)$, and add, for each edge $e \in E$ with endpoints $u$ and $v$, two paths $x_{u}z_ey_{v}$ and $x_{v}z'_ey_{u}$.
Observe that $G''$ is $3$-partite, with colour classes $(X,Y,Z)$; see also Figure~\ref{f-gg}.

The following lemma is proven by modifying the proof of \cref{bcmconstruction} given in \cite{BCM15}. Alternatively, we could take the 
$n \times n$ wall $W$, which has bipartition classes $A$ and $B$; $1$-subdivide each edge of $W$; and make $A$ complete to $B$. By applying Theorem~\ref{walls} and Lemmas~\ref{subdivision} and~\ref{kpartite}, we obtain a lower bound on the mim-width in terms of $n$.

\begin{lemma}
    \label{modified-bcmconstruction}
    If $G$ is a bipartite graph, then $\mimw(G'') \ge \tw(G) / 6$.
\end{lemma}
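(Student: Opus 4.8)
The plan is to adapt, essentially line by line, the proof of \cref{bcmconstruction} given in \cite{BCM15} to the slightly denser graph $G''$. Observe that $G''$ is obtained from $G'$ by contracting, for every edge $e=uv$ of $G$, the edge $q_et_e$ (and likewise $q'_et'_e$), so that the two internal vertices of each subdivision path of $G'$ collapse to a single degree-$2$ vertex $z_e$ (respectively $z'_e$) with one neighbour in $X$ and one in $Y$, while the complete bipartite graph between $X$ and $Y$ is left untouched. The one genuine structural change is that in $G''$, for an edge $e=uv$, the $2$-path $x_uz_ey_v$ together with the $X$--$Y$ edge $x_uy_v$ forms a triangle, whereas the length-$3$ path in $G'$ together with $x_uy_v$ forms an induced $C_4$; so the point to check is that the argument of \cite{BCM15} is robust to the appearance of these triangles. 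The properties it actually relies on are: $X$ is complete to $Y$; every vertex outside $X\cup Y$ has degree $2$ with one neighbour in $X$ and one in $Y$; and each edge $e=uv$ of $G$ contributes, for each of its endpoints, a private degree-$2$ vertex joined to that endpoint's $X$-copy (and one joined to its $Y$-copy). All of these hold for $G''$.

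Concretely, the strategy I would follow is this. Take a branch decomposition $(T,\delta)$ of $G''$ with $\mimw_{G''}(T,\delta)=w$. Delete from $T$ every leaf mapped to a vertex of $X\cup Y\cup\{z'_e:e\in E(G)\}$ and suppress the resulting vertices of degree at most $2$; this yields a subcubic tree $T_0$ whose leaves are in bijection with $E(G)$ via $e\mapsto\delta(z_e)$, i.e.\ a branch decomposition of $G$ in the sense used to define branchwidth. For an edge $f$ of $T_0$, let $(E_1,E_2)$ be the induced partition of $E(G)$, let $(A,\overline A)$ be the partition of $V(G'')$ corresponding to $f$ in $(T,\delta)$, and let $B_f$ be the set of vertices of $G$ incident both to an edge of $E_1$ and to an edge of $E_2$. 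The crux is to prove that $\cutmim_{G''}(A,\overline A)\ge |B_f|/c$ for an absolute constant $c$. Granting this, $\mathrm{bw}(G)\le\max_f|B_f|\le c\,w$, and then $\tw(G)+1\le\tfrac32\,\mathrm{bw}(G)\le\tfrac32 c\,w$; matching the constant with \cite{BCM15} (their argument gives $c=4$) then yields $w\ge\tw(G)/6$.

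The hard part will be exactly the inequality $\cutmim_{G''}(A,\overline A)\ge|B_f|/c$. For each $v\in B_f$ one picks edges $e,e'\ni v$ with $z_e\in A$ and $z_{e'}\in\overline A$; since for each of these edges one of its two $z$-vertices is joined to $x_v$, this produces a single edge of $G''$ incident to $x_v$ (or to $y_v$) that crosses the cut $(A,\overline A)$. The resulting collection of crossing candidate edges need not be an induced matching, because the complete bipartite graph between $X$ and $Y$ — and, now, the new triangles $x_uz_ey_v$ — can put edges between candidates. The resolution, as in \cite{BCM15}, is to keep only candidates whose $X$-endpoint lies on whichever of the two sides contains the fewer vertices of $X$ (killing all $X$--$Y$ adjacencies between candidates), and then greedily discard candidates to destroy the few remaining conflicts, using that each $z$-vertex has only two neighbours and each $v\in B_f$ supplies a bounded number of candidates; this loses only a constant factor. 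I would redo this bookkeeping for the shorter paths of $G''$: each new triangle contributes at most one extra possible adjacency per candidate (the $X$--$Y$ edge to the triangle's third vertex), which is absorbed into $c$. Since the paths of $G''$ are strictly shorter than those of $G'$, there are fewer private vertices and the conflict analysis is if anything simpler, so I expect no new difficulty and the constant $6$ to survive; if some step forced a worse constant, the lemma would merely be restated with that constant, which is immaterial for the applications.

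Finally, as a sanity check and an alternative (weaker) route to unboundedness, one may take the $(n\times n)$ wall $W$ with bipartition classes $A$ and $B$, $1$-subdivide every edge of $W$ to obtain a bipartite graph $W^{+}$ with parts $A\cup B$ and $S$ (the subdivision vertices), and then make $A$ complete to $B$, obtaining $W^{++}$. Since $W^{++}$ arises from the $2$-partite graph $W^{+}$ by adding edges both of whose endpoints lie in the part $A\cup B$, \cref{kpartite} gives $\mimw(W^{++})\ge\tfrac12\mimw(W^{+})$, while \cref{subdivision} together with \cref{walls} gives $\mimw(W^{+})\ge\mimw(W)\ge\sqrt{n}/50$; hence $\mimw(W^{++})=\Omega(\sqrt{n})$, so the relevant class already has unbounded mim-width. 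To obtain the treewidth-linear bound as phrased, however, one does need the general construction and the argument above.
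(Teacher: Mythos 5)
Your overall architecture---derive from a branch decomposition of $G''$ a branch decomposition of $G$ over $E(G)$, bound the middle sets $B_f$ by the cut-mim, and finish via $\tw(G)+1\le\tfrac{3}{2}\mathrm{bw}(G)$---hinges entirely on the inequality $\cutmim_{G''}(A,\overline{A})\ge |B_f|/c$, which you rightly call the crux but only sketch. That inequality is false for every constant $c$. Take $G=C_{2n}$ with bipartition $(U,W)$, and label the internal vertices of the edges so that, going around the cycle, the edges are alternately ``forward'' ($z_e$ adjacent to $x_u$ and $y_w$ with $u\in U$, $w\in W$) and ``backward'' ($z_e$ adjacent to $x_w$ and $y_u$); then every vertex of $G$ is incident to one edge of each kind. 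Put $\{x_u,y_u':u\in U, u'\in W\}$, all $z_e$ for forward $e$, and the $z'_e$ whose two neighbours lie in this set into $A$, and everything else into $\overline{A}$; realise $(A,\overline{A})$ as a cut of a caterpillar decomposition. Every vertex of $Z$ now lies on the same side as both of its neighbours, so the only crossing edges are $X$--$Y$ edges, and these form two complete bipartite graphs, giving $\cutmim_{G''}(A,\overline{A})=2$; yet the induced partition of $E(G)$ separates the forward from the backward edges, so $B_f=V(G)$ and $|B_f|=2n$. The failure is precisely the candidates of the form $x_vy_v$ that your bookkeeping paragraph glosses over: a vertex of $B_f$ may contribute only that edge to the cut, such candidates pairwise conflict, and your proposed fix (keeping candidates whose $X$-endpoint is on the side with fewer $X$-vertices) does not remove these conflicts. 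So $\max_f|B_f|$ does not control $\mimw_{G''}(T,\delta)$, and the chain $\mathrm{bw}(G)\le\max_f|B_f|\le c\,w$ breaks.

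The paper's proof never passes through branchwidth. It keeps one leaf per vertex of $G$ (namely $x_v$ for $v$ in one colour class and $y_v$ for $v$ in the other), obtaining a branch decomposition of $G$ over $V(G)$, and invokes \cite[Lemma~9]{BCM15}: some cut of $G$ carries a not-necessarily-induced matching $M$ with $|M|\ge\tw(G)/3$. Because the edges of $M$ are \emph{vertex-disjoint}, lifting each $uv\in M$ to whichever of $x_uz_{uv}$, $z_{uv}y_v$ crosses the corresponding cut of $G''$, and then keeping the majority type ($X$-incident or $Y$-incident), yields an induced matching of size $|M|/2\ge\tw(G)/6$; inducedness is automatic since each $z_{uv}$ has a unique neighbour in $X$ and a unique neighbour in $Y$. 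That vertex-disjointness is exactly what your middle sets lack, and it is why the matching-based route works where the middle-set route cannot. Your closing wall-based observation is correct (the paper makes the same remark) but, as you note, it only yields unboundedness, not the bound in terms of $\tw(G)$.
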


\begin{proof}
Let $G$ be a bipartite graph with vertex bipartition $(A,B)$, and let $(T'',\delta'')$ be an arbitrary branch decomposition of $G''$.  We will show that $\mimw_{G''}(T'',\delta'') \ge \tw(G)/6$.

We first construct a branch decomposition $(T,\delta)$ of $G$ such that $E(T) \subseteq E(T'')$, as follows. Let $T$ be the tree obtained from $T''$ by deleting the leaves $t \in V(T'')$ such that $\delta''(t) = x_v$ for some $v \in B$, or $\delta''(t) = y_u$ for some $u \in A$, or $\delta''(t) \in Q \cup T$. In the resulting tree $T$, for each leaf $t \in T$ we define $\delta(t) = v$ if $\delta''(t) = x_v$ for some $v \in A$; and $\delta(t) = u$ if $\delta''(t) = x_u$ for some $u \in B$.
  
Suppose $e \in E(T)$.  Recall that $(A_e,\overline{A_e})$ denotes the partition of $V(G)$ induced by the two components of $T \backslash e$, and let $(A''_e, \overline{A''_e})$ denote the partition of $V(G'')$ induced by the two components of $T'' \backslash e$. Let $uv$ be an edge in the cut $G[A_e,\overline{A_e}]$. Since $G$ is bipartite, we may assume $u \in A$ and $v \in B$. Then $x_u$ and $y_v$ are on different sides of the cut $G''[A''_e,\overline{A''_e}]$; we may assume that $x_u \in A''_e$ and $y_v \in \overline{A''_e}$. Since there is a path $x_uz_{uv}y_v$ in $G''$, either the edge $x_uz_{uv}$ or the edge $z_{uv}y_v$ is in $G''[A''_e,\overline{A''_e}]$.

Let $M$ be a matching of $G[A_e,\overline{A_e}]$. We obtain a matching $M'$ of $G[A''_e,\overline{A''_e}]$ of size $|M|$ as follows: for each edge $uv$ in $M$, choose the edge $x_uz_{uv}$ or $z_{uv}y_v$ that is in $G[A''_e,\overline{A''_e}]$. We partition $M'$ into $(M'_X,M'_Y)$ where $M'_X$ consists of the edges incident to a vertex of $X$ and $M'_Y$ consists of the edges incident to a vertex of $Y$. Let $M''$ be the larger of $M'_X$ and $M'_Y$; then $|M''| \ge |M| / 2$. Note that $M''$ is a matching of $G''[A''_e,\overline{A''_e}]$ since $M'' \subseteq M'$.

By \cite[Lemma~9]{BCM15},  there exists some edge $e \in E(T)$ such that $G[A_e,\overline{A_e}]$ has a (not necessarily induced) matching $M$ of size at least $\tw(G)/3$. By the previous paragraph, $G''[A''_e,\overline{A''_e}]$ has a matching $M''$ of size at least $|M| / 2 \ge \tw(G)/6$, which consists of edges between a vertex in $Z$ and a vertex in either $X$ or $Y$.

We claim that $M''$ is an induced matching. Suppose not.  Then we may assume (up to swapping $X$ and $Y$) that $M''$ has edges $x_uz_{uv}$ and $x_{u'}z_{u'v'}$, for some distinct $u,u' \in V(G)$, and $G''$ also has an edge $x_uz_{u'v'}$ or $x_{u'}z_{uv}$. But, by construction, the vertices $z_{uv},z_{u'v'} \in Z$ have only one neighbour in $X$, so neither $x_uz_{u'v'}$ nor $x_{u'}z_{uv}$ is an edge of $G''$. Thus $M''$ is induced, and hence $\mimw_{G''}(T'',\delta'') \ge \tw(G)/6$, as required.
\end{proof}

We use Lemma~\ref{modified-bcmconstruction} to show the following theorem.

\begin{theorem}
    \label{diamond-p6}
    The class of $(K_4,\diamond,P_6,P_2+P_4)$-free graphs has unbounded mim-width.
\end{theorem}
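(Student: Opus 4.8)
The plan is to show that the graph $G''$ constructed just above the statement already belongs to the class of $(K_4,\diamond,P_6,P_2+P_4)$-free graphs, so that the mim-width lower bound comes directly from \cref{modified-bcmconstruction}. Concretely, given an integer $k$, I would take $G$ to be a bipartite graph with $\tw(G) > 6k$ (recall that bipartite graphs can have arbitrarily large treewidth), form $G''$ as described, and conclude $\mimw(G'') \ge \tw(G)/6 > k$ by \cref{modified-bcmconstruction}. It then remains to verify the four forbidden induced subgraphs using only the structure of $G''$: the vertex set partitions as $(X,Y,Z)$, where $G''[X\cup Y]$ is complete bipartite, $Z$ is independent, and every $z\in Z$ has exactly one neighbour in $X$ and exactly one neighbour in $Y$. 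A recurring observation I will use is that $G''[X\cup Z]$ and $G''[Y\cup Z]$ are each a disjoint union of stars (centred at the vertices of $X$, respectively $Y$), and hence contain no induced path on four or more vertices.

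For the two ``dense'' obstructions, I would first show that every triangle of $G''$ has the form $\{z,x,y\}$ where $x,y$ are the two neighbours of some $z\in Z$: indeed $G''[X\cup Y]$ is triangle-free, and a $Z$-vertex has degree $2$. A $Z$-vertex therefore cannot lie in a $K_4$, and $X\cup Y$ contains no triangle, so $G''$ is $K_4$-free. For $\diamond$-freeness, note that the assignment $z\mapsto (x,y)$ sending a $Z$-vertex to its pair of neighbours is injective; consequently no two distinct triangles of $G''$ share an edge (a shared edge inside $X\cup Y$ would yield two distinct $Z$-vertices with the same neighbour pair, while a shared edge incident to a $Z$-vertex forces the two triangles to coincide), so $G''$ has no induced $\diamond$.

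For $P_6$-freeness I would bound the number of vertices of an arbitrary induced path $P$ of $G''$ by $5$. Since $P$ cannot contain two vertices of $X$ and two of $Y$ (those four would induce a $C_4$), we may assume $|V(P)\cap Y|\le 1$. If moreover $|V(P)\cap X|\le 1$, then deleting $V(P)\cap(X\cup Y)$ splits $P$ into at most three subpaths, each living in the edgeless graph $G''[Z]$ and hence trivial, so $|V(P)|\le 5$; and if $V(P)\cap Y=\varnothing$ then $P$ lies in the union of stars $G''[X\cup Z]$, so $|V(P)|\le 3$. In the remaining case $|V(P)\cap Y|=1$ and $|V(P)\cap X|=2$, the unique $Y$-vertex of $P$ is complete to $X$, so the two $X$-vertices of $P$ must be exactly its two neighbours on $P$; a short case analysis of each of the two arms of $P$ beyond the $Y$-vertex (each arm lies in $G''[X\cup Z]$, and an arm with three vertices would force a second vertex of $P$ in $Y$) shows each arm has at most two vertices, so again $|V(P)|\le 5$. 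Thus $G''$ is $P_6$-free. Finally, for $(P_2+P_4)$-freeness, suppose $G''[F]\cong P_2+P_4$ and consider the $P_4$-part: being an induced $P_4$ it cannot be contained in $X\cup Z$ or in $Y\cup Z$, so it contains some $x_0\in X$ and some $y_0\in Y$. Then neither vertex of the $P_2$-part can lie in $X$ (it would be adjacent to $y_0$) or in $Y$ (it would be adjacent to $x_0$), so both lie in $Z$, contradicting that they form an edge since $Z$ is independent.

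I expect the main obstacle to be the $P_6$-freeness argument, in particular the bookkeeping needed to rule out induced paths of length $6$ or $7$ passing through a vertex of $Y$: one must track how the arms of the path alternate between $X$ and $Z$ and repeatedly invoke both that $P$ has only one vertex in $Y$ and that a vertex of $X$ adjacent to that $Y$-vertex cannot be at distance $2$ from it along $P$. Everything else reduces to the structural description of $G''$ recorded above.
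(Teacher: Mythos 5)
Your proposal is correct and takes essentially the same approach as the paper: both use the graph $G''$ built from a bipartite graph of treewidth greater than $6k$, obtain the mim-width lower bound from \cref{modified-bcmconstruction}, and verify the four exclusions from the same structural facts about the partition $(X,Y,Z)$ (degree-$2$ vertices in $Z$, completeness of $X$ to $Y$, and simplicity of the underlying bipartite graph). Your bookkeeping for $\diamond$- and $P_6$-freeness (the injectivity of the triangle map and the star-forest observation for $G''[X\cup Z]$) differs only cosmetically from the paper's case analysis.
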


\begin{proof}
We show that for every integer $k$, there is a $(K_4,\diamond,P_6,P_2+P_4)$-free graph~$G$ such that $\mimw(G) > k$.  Let $B$ be a (simple) bipartite graph with $\tw(B) > 6k$ and let $G=B''$. Then $\mimw(G) > k$ by \cref{modified-bcmconstruction}. Observe that $X$, $Y$ and $Z$ are independent sets.

First we claim that $G$ is $K_4$-free. Suppose $G[A] \cong K_4$ for some $A \subseteq V(G)$. Since each vertex in $Z$ has degree~$2$, $A \subseteq X \cup Y$. But then $|A \cap X| \ge 2$ or $|A \cap Y| \ge 2$, a contradiction.

Next we claim that $G$ is $\diamond$-free. Suppose $G[A] \cong \diamond$ for some $A \subseteq V(G)$.  Since each vertex in $Z$ has degree~$2$, the degree-$3$ vertices of the diamond must be in $X$ or $Y$.  Since these vertices are adjacent, one is in $X$ and one is in $Y$.  As the other two vertices of the diamond are complete to these two vertices, these vertices are in $Z$. Let $A \cap X = \{x_u\}$, $A \cap Y = \{y_v\}$, and $A \cap Z = \{z_e,z_{e'}\}$.  Now $x_uz_ey_v$ and $x_uz_{e'}y_v$ are paths in $G$, corresponding to multiple edges $e=uv$ and $e'=uv$ in $B$, but this contradicts that $B$ is simple.

Next we claim that $G$ is $P_2+P_4$-free. Suppose $G[A] \cong P_2+P_4$ for some $A \subseteq V(G)$ and $G[A'] \cong P_4$ for some $A' \subseteq A$.If $A' \subseteq Y \cup Z$, then one end of $G[A']$ is in $Y$, and the other end is in $Z$. But each vertex in $Z$ has one neighbour in $X$ and one neighbour in $Y$, so $A' \cap X \neq \varnothing$ and, by symmetry, $A' \cap Y \neq \varnothing$. Now each vertex in $X$ or $Y$ is adjacent to a vertex of $G[A']$. So $A \setminus A' \subseteq Z$, but then $G[A \setminus A'] \cong 2P_1$, a contradiction.

It remains to show that $G$ is $P_6$-free. Suppose $G[A] \cong P_6$ for some $A \subseteq V(G)$. If $A \subseteq X \cup Z$, then each vertex of $A \cap Z$ has degree at most~$1$ in $G[A]$, so there are at most two such vertices. But then $|A \cap X| \ge 4$, and this set is independent in $G[A]$, a contradiction. So $A \cap Y \neq \varnothing$ and, by symmetry, $A \cap X \neq \varnothing$. Since $X$ is complete to $Y$, we also have $|A \cap (X \cup Y)| \le 3$. Without loss of generality we may assume $A \cap X$ is a singleton~$\{x\}$. Then $x$ has two neighbours in $A \cap Y$, so $A \cap X$ and $A \cap Z$ are anticomplete.   But then $A\cap (X\cup Z)$ is an independent set of size at least~$4$, a contradiction.
\end{proof}

\section{State of the Art}\label{s-soa}

In this section, we show the consequences of the results from Sections~\ref{s-mim}--\ref{sec:unbounded} for the boundedness and unboundedness of mim-width of classes of $(H_1,H_2)$-free graphs. We will also make a comparison between the results for mim-width and clique-width. 
In contrast to the situation where only one induced subgraph is forbidden, we note many differences when two induced subgraphs $H_1$ and $H_2$ are forbidden. 
Figure~\ref{f-sec4} illustrates a number of graphs that we use throughout the section.

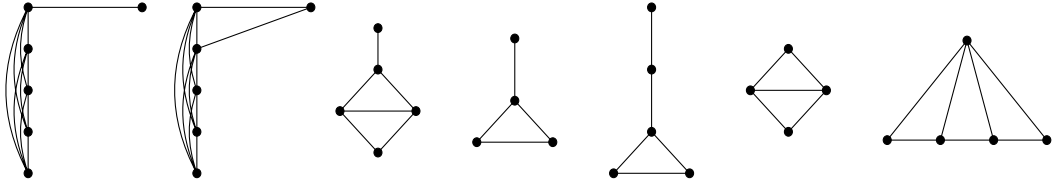
\begin{figure}[h]
  \centering
\begin{minipage}[c]{0.1\textwidth}
\begin{tikzpicture}[xscale=0.5, yscale=0.55]
\draw (-2,-2)--(-2,2)--(1,2)
(-2,-2) to[out=110,in=250] (-2,0) (-2,-1) to[out=110,in=250] (-2,1) (-2,0) to[out=110,in=250] (-2,2) 
(-2,-2) to[out=115,in=245] (-2,1) (-2,-1) to[out=115,in=245] (-2,2) (-2,-2) to[out=120,in=240] (-2,2);
\draw[fill=black] (-2,-2) circle [radius=3pt] (-2,-1) circle [radius=3pt] (-2,0) circle [radius=3pt] 
(-2,1) circle [radius=3pt] (-2,2) circle [radius=3pt] (1,2) circle [radius=3pt];
\end{tikzpicture}
\end{minipage}
\qquad
\begin{minipage}[c]{0.1\textwidth}
\begin{tikzpicture}[xscale=0.5, yscale=0.55]
\draw (-2,-2)--(-2,2)--(1,2)--(-2,1)
(-2,-2) to[out=110,in=250] (-2,0) (-2,-1) to[out=110,in=250] (-2,1) (-2,0) to[out=110,in=250] (-2,2) 
(-2,-2) to[out=115,in=245] (-2,1) (-2,-1) to[out=115,in=245] (-2,2) (-2,-2) to[out=120,in=240] (-2,2);
\draw[fill=black] (-2,-2) circle [radius=3pt] (-2,-1) circle [radius=3pt] (-2,0) circle [radius=3pt] 
(-2,1) circle [radius=3pt] (-2,2) circle [radius=3pt] (1,2) circle [radius=3pt];
\end{tikzpicture}
\end{minipage}
\qquad
\begin{minipage}[c]{0.07\textwidth}
\begin{tikzpicture}[xscale=0.5, yscale=0.55]
\draw (-1,0)--(0,1)--(1,0)--(-1,0)--(0,-1)--(1,0) (0,1)--(0,2);
\draw[fill=black] (-1,0) circle [radius=3pt] (1,0) circle [radius=3pt] 
(0,1) circle [radius=3pt] (0,-1) circle [radius=3pt] (0,2) circle [radius=3pt];
\end{tikzpicture}
\end{minipage}
\qquad
\begin{minipage}[c]{0.07\textwidth}
\begin{tikzpicture}[xscale=0.5, yscale=0.55]
\draw (0,1.5)--(0,0)--(-1,-1)--(1,-1)--(0,0);
\draw[fill=black] (-1,-1) circle [radius=3pt] (1,-1) circle [radius=3pt] 
(0,0) circle [radius=3pt] (0,1.5) circle [radius=3pt];
\end{tikzpicture}
\end{minipage}
\qquad
\begin{minipage}[c]{0.07\textwidth}
\begin{tikzpicture}[xscale=0.5, yscale=0.55]
\draw (0,3)--(0,1.5)--(0,0)--(-1,-1)--(1,-1)--(0,0);
\draw[fill=black] (-1,-1) circle [radius=3pt] (1,-1) circle [radius=3pt] (0,0) circle [radius=3pt] 
(0,1.5) circle [radius=3pt] (0,3) circle [radius=3pt];
\end{tikzpicture}
\end{minipage}
\qquad
\begin{minipage}[c]{0.07\textwidth}
\begin{tikzpicture}[xscale=0.5, yscale=0.55]
\draw (-1,0)--(0,1)--(1,0)--(-1,0)--(0,-1)--(1,0);
\draw[fill=black] (-1,0) circle [radius=3pt] (1,0) circle [radius=3pt] 
(0,1) circle [radius=3pt] (0,-1) circle [radius=3pt];
\end{tikzpicture}
\end{minipage}
\qquad
\begin{minipage}[c]{0.14\textwidth}
\begin{tikzpicture}[xscale=0.5, yscale=0.55]
\draw (0.7,-1)--(0,1.4)--(2.1,-1)--(2.1,-1)--(-2.1,-1)--(0,1.4)--(-0.7,-1);
\draw[fill=black] (-2.1,-1) circle [radius=3pt] (-0.7,-1) circle [radius=3pt] 
(0.7,-1) circle [radius=3pt] (2.1,-1) circle [radius=3pt] (0,1.4) circle [radius=3pt];
\end{tikzpicture}
\end{minipage}
\caption{The graphs $K_5\boxminus P_1=\overline{K_{1,4}+P_1}$, $\overline{K_{1,3}+2P_1}$, $\overline{S_{1,1,2}}$, $\paw$, $\hammer$, $\diamond$ and $\gem$.}\label{f-sec4}
\end{figure}

\subsection{Two Summary Theorems}

In our first summary theorem we give all pairs $(H_1,H_2)$ for which the mim-width of the class of $(H_1,H_2)$-free graphs is bounded.
This theorem gives more bounded cases than the corresponding summary theorem for boundedness of {\it clique-width} of classes of $(H_1,H_2)$-free graphs, which can be found in~\cite{DJP19} and which we need for our proof. To get the summary theorem for clique-width, replace Cases~(x)--(xv) of Theorem~\ref{t-sum1} by the more restricted case where  $H_1=K_s$ and $H_2=tP_1$ for some $s,t\geq 1$.

\begin{theorem}\label{t-sum1}
For graphs $H_1$ and $H_2$, the mim-width of the class of $(H_1,H_2)$-free graphs is \emph{bounded} and \emph{quickly computable} if one of the following holds:
\begin{enumerate}[(i)]
\item $H_1$ or $H_2 \ssi P_4$,
\item 
$H_1 \ssi \paw$ and $H_2 \ssi K_{1,3}+\nobreak 3P_1,\; K_{1,3}+\nobreak P_2,\;\allowbreak P_1+\nobreak P_2+\nobreak P_3,\;\allowbreak P_1+\nobreak P_5,\;\allowbreak P_1+\nobreak S_{1,1,2},\;\allowbreak P_2+\nobreak P_4,\;\allowbreak P_6,\; \allowbreak S_{1,1,3}$ or~$S_{1,2,2}$,
\item $H_1\ssi P_1+P_3$ and $H_2 \ssi \overline{K_{1,3}+\nobreak 3P_1},\; \overline{K_{1,3}+\nobreak P_2},\;\allowbreak \overline{P_1+\nobreak P_2+\nobreak P_3},\;\allowbreak \overline{P_1+\nobreak P_5},\;\allowbreak \overline{P_1+\nobreak S_{1,1,2}},\;\allowbreak \overline{P_2+\nobreak P_4},\;\nobreak \overline{P_6},\; \allowbreak \overline{S_{1,1,3}}$ or~$\overline{S_{1,2,2}}$,
\item
$H_1 \ssi \diamond$ and $H_2\ssi P_1+\nobreak 2P_2,\; 3P_1+\nobreak P_2$ or~$P_2+\nobreak P_3$,
\item
$H_1 \ssi 2P_1+P_2$ and $H_2\ssi \overline{P_1+\nobreak 2P_2},\; \overline{3P_1+\nobreak P_2}$ or~$\overline{P_2+\nobreak P_3}$,
\item 
$H_1 \ssi \gem$ and $H_2 \ssi P_1+\nobreak P_4$ or~$P_5$,
\item 
$H_1 \ssi P_1+P_4$ and $H_2 \ssi \overline{P_5}$,
\item $H_1\ssi K_3+\nobreak P_1$ and $H_2 \ssi K_{1,3}$,
\item $H_1\ssi 2P_1+\nobreak P_3$ and $H_2\ssi \overline{2P_1+\nobreak P_3}$,
\item $H_1\ssi 2P_1+P_2$ and $H_2\ssi \bowtie$,
\item $H_1\ssi K_{1,3}$ and $H_2\ssi 2P_2$,
\item $H_1\ssi K_r$ for $r\geq 1$ and $H_2 \ssi sP_1+P_5$ for $s\geq 0$, 
\item $H_1\ssi K_r \boxminus rP_1$ for $r\geq 1$ and $H_2\ssi 2P_2$,
\item $H_1\ssi  K_r \boxminus P_1$ for $r\geq 1$ and $H_2\ssi tP_2$ for  $t\geq 1$, or
\item $H_1\ssi K_r \boxminus K_r$ for $r\geq 1$ and $H_2\ssi sP_1+P_2$ for $s\geq 0$. 
\end{enumerate}
\end{theorem}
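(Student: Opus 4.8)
The plan is to prove each case by exhibiting it as a consequence of results already established in the paper, together with the observation (from Vatshelle~\cite{Va12}) that bounded clique-width implies bounded mim-width, with the branch decomposition obtainable in polynomial time from a clique-width expression. Case~(i) is immediate since $P_4$-free graphs have clique-width (hence mim-width) at most~$2$, and any branch decomposition works once we have a bound; this is \cref{t-p4}. Cases~(xi)--(xv) are exactly \cref{t-2p2claw}, \cref{t-sum1}(xii) follows from Brettell et al.~\cite{BHP} on $(K_r,sP_1+P_5)$-free graphs, and \cref{thmbounded1,thmbounded2,thmbounded3} give cases~(xiii),~(xiv),~(xv) respectively, while case~(x) is \cref{t-3p1bowtie}. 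In each of these the cited theorem furnishes either a uniform bound on $\cutmim_G(X,\overline X)$ over all cuts (so any branch decomposition is fine and trivially computable) or an explicit polynomial-time construction.

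Next I would handle the ``paw'' family, case~(ii), and its complementary case~(iii). Here the key point is that $\paw$-free graphs have a known structure (by Olariu's characterization, each connected $\paw$-free graph is either triangle-free or complete multipartite), so a $(\paw,H_2)$-free graph is either $(K_3,H_2)$-free or a complete multipartite graph with no induced $H_2$; in the latter subcase the graph has bounded clique-width, and in the former subcase one invokes the known clique-width results for $(K_3,H_2')$-free (equivalently, triangle-free $H_2'$-free) graphs, all of which are listed as bounded in the clique-width survey~\cite{DJP19}. Since the listed graphs $H_2$ in~(ii) are precisely those for which $(K_3,H_2)$-free has bounded clique-width, boundedness of mim-width follows with a polynomially computable decomposition. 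Case~(iii) is then obtained by the folklore fact that mim-width~$1$ is closed under complementation — more precisely, one must be slightly careful here: complementation does \emph{not} in general preserve boundedness of mim-width (as the paper itself notes), so the correct route is to observe that $(H_1,H_2)$-free with $H_1 \ssi P_1+P_3 = \overline{\paw}$ means the complement is $(\paw,\overline{H_2})$-free, and then to cite the \emph{clique-width} result for case~(iii) directly from~\cite{DJP19} (clique-width \emph{is} complementation-invariant), which again yields bounded mim-width. The remaining small cases~(iv)--(ix) are each individually in the clique-width survey's list of bounded bigenic classes, so they follow from bounded clique-width and the computability of a clique-width expression in polynomial time for these classes (each has a known polynomial recognition-plus-expression algorithm).

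The main obstacle I expect is bookkeeping rather than mathematical depth: one must verify that for every $H_2$ appearing in~(ii) and~(iii), and for every pair in~(iv)--(ix), the corresponding bigenic class genuinely appears among the \emph{bounded} cases of the clique-width classification in~\cite{DJP19}, and that in each such case a clique-width expression (equivalently, a branch decomposition of bounded mim-width) can be produced in polynomial time — the latter is standard but needs to be stated, since mim-width algorithms require the decomposition as input. A secondary subtlety is making sure the ``maximality'' hypotheses of \cref{t-2p2claw-lemma,t-3p1bowtie-lemma} are handled correctly when these lemmas are invoked inside cases~(x)--(xv), but this is already taken care of by the algorithms in \cref{t-2p2claw,t-3p1bowtie}. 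With these pieces assembled, the theorem follows by going through the fifteen cases in turn.
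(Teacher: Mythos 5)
Your proposal is correct and follows essentially the same route as the paper: cases (i)--(ix) are reduced to the known bounded-clique-width bigenic classes (with quick computability coming from the Oum--Seymour approximation algorithm for clique-width, and cases (iii), (v), (vii) obtained from (ii), (iv), (vi) via the complementation-invariance of clique-width), while cases (x)--(xv) are exactly \cref{t-3p1bowtie}, \cref{t-2p2claw}, the result of~\cite{BHP}, and \cref{thmbounded1,thmbounded2,thmbounded3}. Your detour through Olariu's structure theorem for $\paw$-free graphs in case~(ii) is harmless but unnecessary, since the cited clique-width results already cover these classes directly.
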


\begin{proof}
Cases (i)--(ix) follows from the fact that each of the classes of $(H_1,H_2)$-free graphs in these cases has bounded clique-width and that clique-width is quickly computable for general graphs~\cite{OS06}.
For Case~(i) we also refer to Theorem~\ref{t-p4}.
Boundedness of clique-width has been proven for Case (ii) as follows: in~\cite{DP16} for $K_{1,3}+\nobreak 3P_1$; in~\cite{DLRR12} for $K_{1,3}+\nobreak P_2$; in~\cite{DDP17} for $P_1+P_2+P_3$ and $P_1+P_5$;  in~\cite{DP16} for $P_1+\nobreak S_{1,1,2}$; in~\cite{DLP20} for $P_2+P_4$; in~\cite{BKM06} for $P_6$;  in~\cite{DLRR12} for  $S_{1,1,3}$; and in~\cite{DDP17} for $S_{1,2,2}$. It has been proven for Case~(iv) as follows: in~\cite{DDP17} for $P_1+2P_2$; and in~\cite{DHP19} for $3P_1+P_2$ and~$P_2+P_3$. It has been been proven for Case~(vi) as follows: in~\cite{BLM04} for $P_1+P_4$; and in~\cite{BLM05} for $P_5$. It has been proven for Case~(viii) and~(ix) in~\cite{BL02,BM02} and~\cite{BDJLPZ20}, respectively. Cases~(iii),~(v),~(vii) follow from Cases~(ii),~(iv) and~(vi), respectively, after recalling that the clique-width of a class of $(H_1,H_2)$-free graphs is bounded if and only if the  clique-width of the class of $\left(\overline{H_1},\overline{H_2}\right)$-free graphs is bounded~\cite{KLM09}. 
Cases~(x) and~(xi) follow from Theorems~\ref{t-3p1bowtie} and~\ref{t-2p2claw} respectively.
Case (xii) has been proven in~\cite{BHP}.
Cases (xiii)--(xv) follow from Theorems~\ref{thmbounded1}--\ref{thmbounded3}, respectively.
\end{proof}

\noindent
For our second summary theorem, we turn to the unbounded cases. 
We let ${\cal S}$ be the class of graphs every connected component of which is either a subdivided claw or a path.
We let  ${\cal N}$ denote the class of graphs that contain a connected component with 
either a cycle of length at least~$4$ or
at least two (not necessarily vertex-disjoint) triangles; note, for example, that  ${\cal N}$ contains $C_4$, $\diamond$, and $K_4$. 

\begin{theorem}\label{t-sum2}
For graphs $H_1$ and $H_2$, the class of $(H_1,H_2)$-free graphs has \emph{unbounded} mim-width if one of the following holds:
\begin{enumerate}[(i)]
\item $H_1\notin {\cal S}$ and $H_2\notin {\cal S}$,
\item $H_1\si C_3$ and $H_2 \si P_3+P_6$, $P_8$ or $S_{1,1,5}$,
\item $H_1\si K_{1,3}$ and $H_2\in {\cal N}$,
\item $H_1\si \diamond$ and $H_2\si 5P_1$, $P_2+P_4$, $2P_3$ or $P_6$,
\item $H_1\si 3P_1$ and $H_2\si 3P_1$, $C_5$ or $\overline{C_{2s+1}}$ for $s\geq 3$,
\item $H_1\si 4P_1$ and $H_2\si \gem$, $\overline{3P_1+P_2}$ or $\overline{P_1+2P_2}$,
\item $H_1\si 2P_2$ and $H_2\si C_4$, $C_5$, $K_{1,4}$, $2P_2$, $\overline{3P_1+P_2}$ or $\sun_t$ for $t\geq 3$, or
\item $H_1\si K_4$ and $H_2\si P_2+P_4$ or $P_6$.
\end{enumerate}
\end{theorem}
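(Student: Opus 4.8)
The plan is to deduce each case from results established earlier in the paper, using two trivial monotonicity facts: a class of unbounded mim-width has unbounded mim-width in every superclass, and if $H_i'\ssi H_i$ then every $(H_1',H_2')$-free graph is $(H_1,H_2)$-free. So for each case I want to exhibit, inside the class of $(H_1,H_2)$-free graphs, a class from \cref{s-mim,sec:unbounded} already known to have unbounded mim-width; equivalently, I check that the forbidden graphs of that class include an induced subgraph of $H_1$ and an induced subgraph of $H_2$.

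Case (i) is the cleanest. If a graph $H$ is not in ${\cal S}$, then some component of $H$ is neither a path nor a subdivided claw, and hence $H$ contains, as an induced subgraph, a cycle, a $K_{1,4}$, or a tree with at least two vertices of degree $3$. Set $N=\max\{|V(H_1)|,|V(H_2)|\}$ and let $W_n$ be the graph obtained from the elementary $(n\times n)$-wall by subdividing every edge $N$ times. Then $W_n$ is subcubic, so $K_{1,4}$-free; it has girth at least $6(N+1)$, so it has no induced cycle on at most $N$ vertices; and any two of its degree-$3$ vertices lie at distance at least $N+1$, so (since an induced subgraph only increases distances, while a tree on at most $N$ vertices has diameter less than $N$) $W_n$ has no induced tree on at most $N$ vertices with two vertices of degree $3$. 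Hence $W_n$ is $(H_1,H_2)$-free, while $\mimw(W_n)\ge \sqrt{n}/50$ by \cref{walls,subdivision}, which witnesses unboundedness.

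The remaining cases except (iii) are obtained by matching to specific theorems. Case (ii) follows from \cref{cb-p8}, since chordal bipartite graphs are $C_3$-free. Cases (iv) and (viii) follow from \cref{diamond-5p1,diamond-2p3,diamond-p6}, noting that the class in \cref{diamond-p6} lies inside the classes of $(\diamond,P_6)$-, $(\diamond,P_2+P_4)$-, $(K_4,P_6)$- and $(K_4,P_2+P_4)$-free graphs. Case (v) follows from \cref{cobipartite}, since every co-bipartite graph is $3P_1$-free, $C_5$-free, and $\overline{C_{2s+1}}$-free for all $s\ge3$. Case (vi) follows from \cref{k5minus-4p1,gem-4p1}. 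For case (vii) I would use that split graphs are exactly the $(2P_2,C_4,C_5)$-free graphs and that each of the three split subclasses in \cref{split} has unbounded mim-width; one then checks that the subclass in part~(ii) of \cref{split} is $K_{1,4}$-free, the subclass in part~(i) is $\overline{3P_1+P_2}$-free, and strongly chordal split graphs are $\sun_t$-free for every $t\ge3$, while the choices $C_4,C_5,2P_2$ are immediate.

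Case (iii) uses net-walls. Starting from an elementary $(n\times n)$-wall, subdivide every edge $s$ times with $s>|V(H_2)|$, and let $W^{+}$ be the net-wall of this subdivided wall; then $\mimw(W^{+})\ge\sqrt n/50$ by \cref{netwalls} (or \cref{walls,subdivision,implant}). Every degree-$3$ vertex of $W^{+}$ has two adjacent neighbours, so $W^{+}$ is $K_{1,3}$-free, hence $H_1$-free. The only triangles of $W^{+}$ are the implanted ones; they are pairwise vertex-disjoint and, for $s$ large, pairwise at distance more than $|V(H_2)|$, so $W^{+}$ has no induced cycle on $4$ to $|V(H_2)|$ vertices and no induced connected subgraph on at most $|V(H_2)|$ vertices containing two triangles. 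Since $H_2\in{\cal N}$, the graph $H_2$ has such an induced subgraph (a cycle of length at least $4$, or a connected graph with two triangles) on at most $|V(H_2)|$ vertices, so $W^{+}$ is $H_2$-free. I expect the main obstacle to be the two ``obstruction-avoidance'' verifications: in (i), that a sufficiently subdivided wall genuinely excludes all three types of induced subgraph forced by $H\notin{\cal S}$ (a distance/girth argument, where one must confirm the relevant distances grow with the number of subdivisions independently of $n$); and in (iii), the corresponding claims that the implanted triangles are the only triangles of a net-wall and that they remain far apart and far from short induced cycles.
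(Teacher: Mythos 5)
Your proposal is correct and follows essentially the same route as the paper's proof, which simply maps each case to the corresponding earlier result (Theorem~\ref{walls} and Lemma~\ref{netwalls} with sufficiently many subdivisions for cases (i) and (iii), Theorem~\ref{cb-p8} for (ii), Theorems~\ref{diamond-5p1}, \ref{diamond-2p3} and \ref{diamond-p6} for (iv) and (viii), Lemma~\ref{cobipartite} for (v), Theorems~\ref{gem-4p1} and \ref{k5minus-4p1} for (vi), and Lemma~\ref{split} for (vii)). The only difference is that you spell out the girth/distance verifications for the subdivided walls and net-walls and the choice of split subclass per target graph in case (vii), details the paper leaves implicit; these checks are all correct.
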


\begin{proof}
Cases~(i) and (iii) follow from Theorem~\ref{walls} and Lemma~\ref{netwalls}, respectively, possibly after applying Lemma~\ref{subdivision} a sufficient number of times. All three subcases of Case~(ii) follows from Theorem~\ref{cb-p8}. The first subcase of Case~(iv) follows from Theorem~\ref{diamond-5p1}, the second one follows from Theorem~\ref{diamond-p6}, the third one follows from Theorem~\ref{diamond-2p3} and the fourth one follows from Theorem~\ref{diamond-p6}. All three subcases of Case~(v) follow from Lemma~\ref{cobipartite}. Case~(vi) follows from \cref{gem-4p1,k5minus-4p1}. All subcases of Case~(vii) follow from Lemma~\ref{split}. 
Case~(viii) follows from Theorem~\ref{diamond-p6}.
\end{proof}

\noindent
We note that the situation for the unbounded cases is again different from the situation for the unbounded cases of clique-width. For example, $(H_1,H_2)$-free graphs have unbounded clique-width if both $\overline{H_1}\notin {\cal S}$ and $\overline{H_2}\notin {\cal S}$ (see, for example,~\cite{DP16}). Take, 
for instance, 
$H_1=4P_1$ and $H_2=2P_2$. Then $\overline{H_1}=K_4$ and $\overline{H_2}=C_4$, and thus
$\overline{H_1}\notin {\cal S}$ and $\overline{H_2}\notin {\cal S}$, so $(H_1,H_2)$-free graphs have unbounded clique-width. However, by Theorem~\ref{t-sum1}-(xiii), $(H_1,H_2)$-free graphs have bounded mim-width.
As $(\overline{H_1},\overline{H_2})$-free graphs have unbounded mim-width by Theorem~\ref{t-sum2}-(i),
this example also shows that the complementation operation, a standard tool for working with clique-width, does not preserve mim-width. Consequently, 
for mim-width there are many more open cases  than the only five open cases for clique-width~\cite{DJP19}.

\subsection{Three Consequences of the Summary Theorems}

In order to get a handle on the open cases for mim-width, we now present some consequences of Theorems~\ref{t-sum1} and~\ref{t-sum2}.  We first consider the case where $H_1$ and $H_2$ are forests.

\begin{corollary}\label{c-1}
Let $H_1$ and $H_2$ be \emph{forests}.
Either the pair $(H_1,H_2)$ satisfies \cref{t-sum1} or \cref{t-sum2}, or one of the following holds:
\begin{enumerate}
\item $H_1=2P_2$ and $H_2=K_{1,3} + sP_1$ for $s \geq 1$, or
\item $H_1=2P_2$ and $H_2=S_{1,1,2} + sP_1$ for $s\geq 0$.
\end{enumerate}
\end{corollary}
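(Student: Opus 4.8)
The plan is to reduce everything to a finite case analysis on the possible forests $H_1$ and $H_2$, organized by how large they can be before one of the two summary theorems applies. First I would recall the structural facts: a forest $H$ lies in $\mathcal{S}$ (the class of graphs whose components are paths or subdivided claws) unless some component contains a vertex of degree at least $3$ that is not the centre of a subdivided claw with the right leg lengths — in particular, any forest containing $K_{1,4}$, or $S_{1,1,2}$ with an extra edge, fails to be in $\mathcal{S}$. So by \cref{t-sum2}(i), whenever \emph{both} $H_1 \notin \mathcal{S}$ and $H_2 \notin \mathcal{S}$ we are in the unbounded case. Hence for the exceptional (``neither'') cases we may assume at least one of $H_1, H_2$ lies in $\mathcal{S}$; and since both are forests, the one in $\mathcal{S}$ is a disjoint union of paths and subdivided claws.

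Next I would pin down $H_1$. If $H_1 \in \mathcal{S}$ and $H_1$ contains an induced $P_4$, then \cref{t-sum1}(i) applies and we are in the bounded case. If $H_1$ has no induced $P_4$ but is a forest in $\mathcal{S}$, then every component of $H_1$ is an induced subgraph of $P_3$ or of $K_{1,3}$; so $H_1 \ssi sP_1 + tK_{1,3}$ with small parameters — but if $H_1$ has two independent edges (i.e.\ contains $2P_2$), then $H_1 \si 2P_2$ and we should check which Case of \cref{t-sum1} or \cref{t-sum2} applies via the partner $H_2$; if $H_1$ has at most one edge, then $H_1 \ssi P_1 + P_3$ or $H_1 \ssi K_{1,3}$ or $H_1 \ssi sP_1$, and most such $H_1$ make the whole class bounded (Cases (ii), (iii), (viii), (xi)) or force nothing. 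The upshot of this bookkeeping is that the only forests $H_1$ for which the pair $(H_1,H_2)$ can escape both theorems are essentially $H_1 = 2P_2$ (or a subgraph thereof, but subgraphs of $2P_2$ other than $2P_2$ itself fall under $P_4$ or are covered). So I would argue: we may assume $H_1 = 2P_2$.

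With $H_1 = 2P_2$ fixed, the task becomes: classify the forests $H_2$ for which $(2P_2, H_2)$-free graphs have bounded-or-determined mim-width. Here I invoke the known bounded cases with $H_1 \ssi 2P_2$ or with $H_1 = 2P_2$ as the \emph{second} graph: \cref{t-sum1}(xi) ($K_{1,3}$ vs $2P_2$), \cref{t-sum1}(xiii) ($K_r \boxminus rP_1$ vs $2P_2$ — but for forests only $r \le 2$, giving $P_4$), \cref{t-sum1}(xii) ($K_r$ vs $sP_1+P_5$, not relevant as $2P_2 \not\ssi K_r$), and the unbounded cases \cref{t-sum2}(vii) ($2P_2$ vs $K_{1,4}$, $2P_2$, $C_4$, $C_5$, etc.). Since $H_2$ is a forest, $H_2 \in \mathcal{S}$ is forced (else \cref{t-sum2}(i)); so $H_2$ is a disjoint union of paths and subdivided claws, and the constraints are: $H_2$ is $2P_2$-free in the sense that $(2P_2,H_2)$-free not already known — wait, more precisely, if $H_2 \si 2P_2$ then \cref{t-sum2}(vii) applies (unbounded), so $H_2$ is $2P_2$-free, meaning $H_2$ has at most one component containing an edge. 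Also $H_2$ must be $K_{1,4}$-free (else \cref{t-sum2}(vii)), so the unique non-trivial component of $H_2$ is a path or a subdivided claw with max degree $3$. Moreover if that component is a subdivided claw properly larger than $S_{1,1,2}$, or a path $P_t$ with $t \ge 6$ — for paths $P_5$ is already handled by \cref{t-sum1}(ii) via $K_{1,3}+\dots$? no — I would instead note: $(2P_2, P_5)$-free is covered since $P_5 \ssi$ nothing in (ii) directly, but $(2P_2, P_6)$ and longer need separate thought; in fact $2P_2 \ssi P_2 + P_4 \ssi P_6$-ish arguments. The clean conclusion, after eliminating the components $H_2$ can have and subtracting off everything already covered by \cref{t-sum1,t-sum2}, is exactly the two residual families: $H_2 = K_{1,3} + sP_1$ for $s \ge 1$ (the $s=0$ case being \cref{t-sum1}(xi)) and $H_2 = S_{1,1,2} + sP_1$ for $s \ge 0$.

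The main obstacle I anticipate is the exhaustive-but-finite verification that no \emph{other} forest $H_2$ slips through: one must check, for each ``small'' forest $H_2 \in \mathcal{S}$ that is $2P_2$-free and $K_{1,4}$-free, whether $(2P_2, H_2)$ is already subsumed by some Case of \cref{t-sum1} (look for $H_2 \ssi$ one of the listed $H_2$'s with $H_1 \ssi \paw$, since $2P_2 \ssi \paw$? — no, $2P_2 \not\ssi \paw$; but $2P_2 \ssi$ nothing convenient, so actually the only relevant bounded Case is (xi) with $H_1 = K_{1,3}, H_2 = 2P_2$, used in the \emph{reversed} role) or by some Case of \cref{t-sum2}. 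This requires care with the subdivided-claw and path components: e.g.\ ruling out $H_2 = P_3 + P_2$ (contains $2P_2$), $H_2 = S_{1,2,2}$ (contains $2P_2$), $H_2 = S_{1,1,3}$ (contains $2P_2$), leaving genuinely only $S_{1,1,2}$ and $K_{1,3}$ as the non-trivial component, decorated with isolated vertices. I would present this as a short lemma-style enumeration, citing $2P_2 \ssi H_2$ to dispatch most candidates and citing $K_{1,4} \ssi H_2$ or $H_2 \notin \mathcal{S}$ for the rest, so that the proof is a page of routine inclusion checks rather than any new argument.
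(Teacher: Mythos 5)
Your overall strategy (reduce to $H_1=2P_2$, then enumerate the admissible $H_2$) matches the paper's, but two of the steps you lean on are invalid and one key ingredient is missing. First, you claim that if $H_1\in{\cal S}$ \emph{contains} an induced $P_4$ then \cref{t-sum1}~(i) applies and the class is bounded; but (i) requires that $H_1$ or $H_2$ \emph{is an induced subgraph of} $P_4$, which is the opposite containment. A forest such as $P_5$ or $P_2+P_3$ contains $P_4$ without being contained in it, so this step eliminates nothing, and it is precisely the step you use to force $H_1=2P_2$. Second, the correct reduction needs the observation that every forest not contained in $P_4$ contains an induced $3P_1$ or an induced $2P_2$, together with \cref{t-sum2}~(v) (both graphs contain $3P_1$) and \cref{t-sum2}~(vii) (both contain $2P_2$) to kill the two symmetric cases; you never invoke part~(v), so pairs such as $(4P_1,4P_1)$ or $(P_1+P_3,\,K_{1,3}+P_1)$ are not settled by your ``bookkeeping'' (the bounded cases (ii), (iii), (viii) you cite are essentially vacuous for forest pairs, since any forest contained in $\paw$ or $K_3+P_1$ is already contained in $P_4$). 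After the two symmetric cases are removed one may assume $2P_2\ssi H_1$ and $3P_1\not\ssi H_1$, and a $3P_1$-free forest has at most four vertices, which forces $H_1=2P_2$; this is the paper's route.

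In the second half you assert that $H_2\in{\cal S}$ ``is forced (else \cref{t-sum2}(i))'', but part~(i) requires \emph{both} graphs to lie outside ${\cal S}$, and $H_1=2P_2\in{\cal S}$, so it can never fire here. In particular your argument does not exclude the $2P_2$-free, $K_{1,4}$-free tree with two adjacent degree-$3$ vertices each carrying two leaves (the six-vertex double star, plus possible isolated vertices); this tree is not in ${\cal S}$, is not a subdivided claw, and is not dispatched by any of the containments you list, so it needs separate attention rather than being silently absorbed into ``path or subdivided claw''. Finally, you dismiss \cref{t-sum1}~(xiii) on the grounds that $K_r\boxminus rP_1$ is a forest only for $r\le 2$; the relevant question is which forests \emph{embed into} $K_r\boxminus rP_1$, and $sP_1+P_3$ and $sP_1+P_4$ do (take two matched pairs plus $s$ unmatched pendant vertices). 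Without this reading of (xiii), the pairs $(2P_2,\,sP_1+P_3)$ and $(2P_2,\,sP_1+P_4)$ for $s\ge 1$ --- which are covered cases, not exceptions --- are left unresolved by your analysis.
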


\begin{proof}
Throughout the proof we assume that $H_1$ and $H_2$ are not induced subgraphs of $P_4$, as otherwise we can apply  Theorem~\ref{t-sum1}-(i). This means that $H_1$ contains an induced $3P_1$ or an induced $2P_2$ and the same holds for $H_2$. If both contain an induced $3P_1$, then we can apply Theorem~\ref{t-sum2}-(v). If both contain an induced $2P_2$, then we can apply Theorem~\ref{t-sum2}-(vii). Suppose neither of these two cases apply. Then we may assume without loss of generality that $2P_2\ssi H_1$ while $3P_1\not\ssi H_1$, and $3P_1\ssi H_2$ while $2P_2\not\ssi H_2$. The above implies that $H_1=2P_2$ and $H_2$ has at most one connected component with an edge.

First suppose that $H_2$ is a linear forest. Then $H_2=sP_1+P_3$ or $H_2=sP_1+P_4$ for some $s\geq 1$, and we apply Theorem~\ref{t-sum1}-(xiii). Now suppose that $H_2$ is not a linear forest, so $K_{1,3}\ssi H_2$. If $K_{1,4}\ssi H_2$, then we apply Theorem~\ref{t-sum2}-(vii). 
If $H_2=K_{1,3}$, then we apply Theorem~\ref{t-sum1}-(xi).
Hence $H_2=K_{1,3} + sP_1$ for some $s\geq 1$ or $H_2=S_{1,1,2} + tP_1$ for some $t\geq 0$.
\end{proof}

\begin{open}\label{o-1}
Determine the (un)boundedness of mim-width of $(H_1,H_2)$-free graphs when
\begin{enumerate}
\item $H_1=2P_2$ and $H_2=K_{1,3} + sP_1$ for $s \geq 1$, or
\item $H_1=2P_2$ and $H_2=S_{1,1,2} + sP_1$ for $s\geq 0$.
\end{enumerate}
\end{open}

\noindent
Next we consider the case where $H_1$ and $H_2$ are connected.

\begin{corollary}\label{c-2}
Let $H_1$ and $H_2$ be \emph{connected} graphs.
Either the pair $(H_1,H_2)$ satisfies \cref{t-sum1} or \cref{t-sum2}, or one of the following holds:
\begin{enumerate}
\item $H_1=P_5$ and $H_2=\overline{S_{1,1,2}}$ or $\overline{K_{1,r}+sP_1}$ for $r\geq 3$ and $s \in \{1,2\}$,
\item $H_1=P_7$ or $S_{h,i,j}$ for $h\leq i\leq j\leq 4$ with $i+j\leq 6\leq h+i+j$ and $H_2=C_3$ or $\paw$, or
\item $H_1=K_{1,3}$ or $S_{1,1,2}$ and $H_2=\hammer$.
\end{enumerate}
\end{corollary}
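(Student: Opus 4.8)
The plan is to carry out a case analysis on $H_1$ and $H_2$ that is analogous to the proof of Corollary~\ref{c-1}, but accounting for the fact that $H_1,H_2$ are connected rather than forests. First I would dispose of the ``easy'' cases. If $H_1 \ssi P_4$ or $H_2 \ssi P_4$, then \cref{t-sum1}-(i) applies, so we may assume neither holds; in particular each of $H_1,H_2$ has at least five vertices or is one of a handful of small connected graphs on $\le 4$ vertices not contained in $P_4$ (namely $K_3$, $\paw$, $K_{1,3}$, $C_4$, $\diamond$, $K_4$, $C_3$, and $S_{1,1,2}$ is on $5$ vertices). Since $H_1$ and $H_2$ are connected, each of $H_1, H_2$ either belongs to ${\cal S}$ (and, being connected, is a path or a subdivided claw $S_{h,i,j}$) or else is a connected graph not in ${\cal S}$ (so it contains an induced cycle, and in particular contains $C_3$, $C_4$, or a longer induced cycle). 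If both $H_1\notin {\cal S}$ and $H_2\notin {\cal S}$ then \cref{t-sum2}-(i) applies. So I would split into two principal cases: (A) exactly one of $H_1,H_2$, say $H_1$, lies in ${\cal S}$ (hence $H_1$ is a path or subdivided claw) while $H_2 \notin {\cal S}$; and (B) both $H_1$ and $H_2$ lie in ${\cal S}$, i.e.\ both are paths or subdivided claws.

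In Case (A), I would first note that since $H_2 \notin {\cal S}$ is connected, it contains an induced cycle. If $H_2$ contains an induced $C_k$ with $k \ge 4$, then $H_2 \si C_4 \si \diamond$ is false in general; rather, I would argue using \cref{t-sum2}: if $H_2$ contains an induced cycle of length $\ge 4$ then $H_2 \in {\cal N}$, and if moreover $H_1 \si K_{1,3}$ (i.e. $H_1$ is a subdivided claw or a path long enough to contain $K_{1,3}$ — actually only subdivided claws contain $K_{1,3}$, and a path never does) we apply \cref{t-sum2}-(iii). So I would handle the sub-case where $H_1$ is a subdivided claw $S_{h,i,j}$: then $K_{1,3} \ssi H_1$, and if $H_2 \in {\cal N}$ (which holds whenever $H_2\notin{\cal S}$ and $H_2$ is connected with $\ge 4$ vertices on a cycle, or with two triangles) we are done by \cref{t-sum2}-(iii). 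The remaining possibility in this sub-case is that $H_2$ is connected, contains a triangle but no induced $C_4$ and no second triangle, and is on few enough vertices that $H_2 \notin {\cal N}$: the minimal such graphs are $C_3$ and $\paw$. This yields item~2 of the corollary, provided $H_1 = S_{h,i,j}$ satisfies the stated length constraints; those constraints ($i + j \le 6 \le h+i+j$) are precisely what remains after removing, on one side, the cases $H_1 \ssi$ (something appearing in \cref{t-sum1}-(ii), e.g. $S_{1,1,2}, S_{1,1,3}, S_{1,2,2}$ which are handled with $H_2 \ssi \paw$ via \cref{t-sum1}-(ii)) and, on the other side, the cases where $H_1$ is large enough that $(H_1, \paw)$ or $(H_1, C_3)$ is known to be unbounded via \cref{t-sum2}-(ii) ($C_3$-free and $H_2 \si P_8, P_3+P_6,$ or $S_{1,1,5}$) — and likewise $H_1 = P_7$ is the one remaining path. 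When instead $H_1$ is itself a path $P_t$, the same analysis with $K_{1,3} \not\ssi P_t$ means \cref{t-sum2}-(iii) is unavailable, so I would use \cref{t-sum2}-(ii): a path $P_t$ with $t \ge 8$ contains $P_8$, giving unboundedness against any $H_2 \si C_3$; and $P_7$ is exactly the boundary case left open, landing in item~2. Finally, when $H_2$ is connected with a triangle, no induced $C_4$, and contains, say, a pendant path of length $2$ making it $\hammer$ or bigger, the symmetric bookkeeping against $H_1 = K_{1,3}$ or $H_1 = S_{1,1,2}$ produces item~3.

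In Case (B), both $H_1$ and $H_2$ are paths or subdivided claws. Up to swapping, I would take $H_1$ to be the ``smaller'' one. If $H_1$ is a path, then after excluding $H_1 \ssi P_4$ we have $P_5 \ssi H_1$; and if $H_1 \si P_8$ or is itself long, \cref{t-sum2} may not directly apply since $H_2$ need not contain a triangle — but actually in Case (B), $H_2 \in {\cal S}$ means we would instead need \cref{t-sum1}-(i) or the known \cref{t-sum1} cases with path/subdivided-claw $H_2$ (such as $H_2 \ssi P_6, S_{1,1,3}, S_{1,2,2}, P_1+P_5, P_2+P_4, P_1+S_{1,1,2}$ paired with $H_1 \ssi \paw$, which fails since $H_1$ has no triangle)… here I realize the relevant observation is that when both are in ${\cal S}$ with at least one of them a graph on $\ge 5$ vertices that is not $P_4$, essentially no result in either summary theorem applies, and indeed Case (B) should contribute \emph{nothing} to the list in the corollary because such pairs are \emph{all open} — meaning the corollary statement already implicitly asserts they fall under neither summary theorem, contradicting the corollary's ``either $\dots$ or one of the following'' phrasing. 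Thus I expect Case (B) is in fact subsumed because the corollary restricts attention to pairs where one side has an edge-dense component; re-reading, the three exceptional families all have $H_2$ (or $H_1$) containing a triangle, so Case (B) pairs are genuinely left open and the corollary must be understood as listing the open \emph{connected} pairs, of which the path/subdivided-claw-vs-path/subdivided-claw family is the ``one remaining infinite family'' mentioned in the abstract. Hence I would phrase Case (B) as: these pairs satisfy neither summary theorem, but they are exactly the excluded infinite family and so do not appear in items 1--3; I would instead fold $H_1 = P_5$, $H_2 = \overline{S_{1,1,2}}$ or $\overline{K_{1,r}+sP_1}$ (item~1) into Case (A), since $\overline{K_{1,r}+sP_1}$ and $\overline{S_{1,1,2}}$ are \emph{not} in ${\cal S}$ (their complements being disconnected forests with a star), so item~1 arises precisely when $H_1 = P_5$ and $H_2$ is the complement of a small star-forest, a boundary case between \cref{t-sum1}-(iii),(vii),(ix) and \cref{t-sum2}-(v). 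The main obstacle will be the meticulous bookkeeping in Case (A): one must check, for each small connected $H_2$ containing a triangle or an induced $C_{\ge 4}$, exactly which subdivided claws / paths $H_1$ are covered by \cref{t-sum1} (as an induced subgraph of $\paw$, $\gem$, $\diamond$, $\bowtie$, etc., which is impossible for triangle-free $H_1$, so really only the path-like \cref{t-sum1} cases) versus by \cref{t-sum2}-(ii),(iii), and confirm that the gap is exactly $\{P_7\} \cup \{S_{h,i,j} : i+j \le 6 \le h+i+j\}$ on one side and $\{K_{1,3}, S_{1,1,2}\}$ against $\hammer$ on the other, and $P_5$ against the three small co-star-forests — i.e.\ that the lists in items 1--3 are neither too large nor too small.
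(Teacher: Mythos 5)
There is a genuine gap, and it is located in your Case (B). You assert that when both $H_1$ and $H_2$ are connected graphs in ${\cal S}$ (paths or subdivided claws, neither an induced subgraph of $P_4$), ``essentially no result in either summary theorem applies'' and that these pairs are all open, forming the infinite family alluded to in the abstract. This is false, and you even notice that it would contradict the corollary as stated, then try to explain the contradiction away rather than re-examining the claim. The missing observation is \cref{t-sum2}-(v): every connected graph in ${\cal S}$ that is not an induced subgraph of $P_4$ contains an induced $3P_1$ (the three leaves of a subdivided claw, or alternating vertices of a $P_5$), so every pair in your Case (B) has \emph{unbounded} mim-width by \cref{t-sum2}-(v). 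The ``one remaining infinite family'' of the abstract is item 1 of the corollary, namely $(P_5,\overline{K_{1,r}+sP_1})$, not the path-versus-subdivided-claw pairs.

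The same omission weakens Case (A). The paper's proof pivots on the fact that once $3P_1\ssi H_1$, the graph $H_2$ must be co-bipartite (else \cref{t-sum2}-(v) applies via \cref{cobipartite}), and, once $2P_2\ssi H_1$, that $H_2$ must additionally be a $\overline{3P_1+P_2}$-free split graph (else \cref{t-sum2}-(vii) applies). These filters are what collapse $H_2$ to a finite list such as $\{C_3,\paw,\bowtie,\hammer,2C_3+e\}$ in the subdivided-claw subcase. Your filter --- membership in ${\cal N}$ --- leaves infinitely many candidates for $H_2$: for instance a triangle with a pendant path of length three is connected, not in ${\cal S}$, and not in ${\cal N}$, but it contains an induced $3P_1$ and so is eliminated by \cref{t-sum2}-(v); your phrase ``$\hammer$ or bigger'' passes over exactly these graphs without eliminating them. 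The overall architecture (dispose of $P_4$ via \cref{t-sum1}-(i), dispose of $H_1,H_2\notin{\cal S}$ via \cref{t-sum2}-(i), then split on whether the member of ${\cal S}$ is a path or a subdivided claw) does match the paper, but without the $3P_1$/co-bipartite and $2P_2$/split filters the bookkeeping cannot terminate, and Case (B) is resolved incorrectly.
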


\begin{proof}
If $H_1\notin {\cal S}$ and $H_2\notin {\cal S}$, then we apply Theorem~\ref{t-sum2}-(i). Hence, we may assume without loss of generality that $H_1\in {\cal S}$. As $H_1$ is connected, this means that $H_1$ is a subdivided claw or a path. If $H_1$ is $3P_1$-free, then $H_1\ssi P_4$, and we apply Theorem~\ref{t-sum1}-(i). Assume that $3P_1\ssi H_1$. Then $H_2$ must be co-bipartite, as otherwise we can apply Theorem~\ref{t-sum2}-(v). 

First suppose $H_1$ is a path. If $H_1\ssi P_4$, then we apply Theorem~\ref{t-sum1}-(i). Now suppose $P_5\ssi H_1$. Then both $3P_1\ssi H_1$ and $2P_2\ssi H_1$. Then $H_2$ must be a co-bipartite $\overline{3P_1+P_2}$-free split graph, as otherwise we can apply  Theorem~\ref{t-sum2}-(vii). Suppose $H_1=P_5$. If $H_2=\gem$, then we apply Theorem~\ref{t-sum1}-(vi). If $H_2=K_r$ for any $r\geq 1$,
 then we apply Theorem~\ref{t-sum1}-(xii). 
Otherwise we find that  $H_2=\overline{S_{1,1,2}}$ or  $H_2=\overline{K_{1,r}+sP_1}$ for some $r\geq 3$ and $s \in \{1,2\}$, which correspond to Case~1. Now suppose $H_1=P_6$. If $K_4\ssi H_2$, then we apply Theorem~\ref{t-sum2}-(viii). Suppose $H_2$ is $K_4$-free. If $H_2\ssi \paw$, then we apply Theorem~\ref{t-sum1}-(ii). Otherwise $\diamond \ssi H_2$ and we apply Theorem~\ref{t-sum2}-(iv). Now suppose $H_1=P_7$.  If $K_4\ssi H_2$ or $\diamond\ssi H_2$, then we apply Theorem~\ref{t-sum2}-(viii) or Theorem~\ref{t-sum2}-(iv), respectively. Otherwise we find that $H_2=C_3$ or $\paw$; this falls under Case~2. Finally suppose $P_8\ssi H_1$. If $C_3\ssi H_2$, then we apply Theorem~\ref{t-sum2}-(ii). Otherwise we find that $H_2\ssi P_4$ and we apply Theorem~\ref{t-sum1}-(i).

Now suppose $H_1$ is a subdivided claw. If $C_4$, $K_4$, or $\diamond \ssi H_2$, then we apply Theorem~\ref{t-sum2}-(iii). From now on assume that $H_2$ is $(C_4,K_4,\diamond)$-free. Recall that $H_2$ is co-bipartite. If $H_2$ is $C_3$-free, this means that $H_2\ssi P_4$ and we apply Theorem~\ref{t-sum1}-(i). Hence, we may assume that $C_3\ssi H_2$.  This means that $H_2\in \{C_3,\paw,\bowtie,\hammer,2C_3+e\}$, where the graph $2C_3+e$ is obtained from $2C_3$ by inserting an edge between the two triangles. First suppose $H_1\in \{K_{1,3},S_{1,1,2}\}$. If $H_2\ssi \paw$, then we apply Theorem~\ref{t-sum1}-(ii).  Otherwise we find that $H_2\in \{\bowtie,\hammer,2C_3+e\}$. If $H_2\in \{\bowtie,2C_3+e\}$, then we apply Theorem~\ref{t-sum2}-(iii). The two remaining cases correspond to Case~3. Now suppose that $H_1\notin \{K_{1,3},S_{1,1,2}\}$. Then $2P_2\ssi H_1$. If $H_2\in \{\bowtie,\hammer,2C_3+e\}$, then $2P_2\ssi H_2$, which means that we can apply Theorem~\ref{t-sum2}-(vii). Hence, we may assume that $H_2\in \{C_3,\paw\}$. If $H_1\in \{S_{1,2,2},S_{1,1,3}\}$, then we apply Theorem~\ref{t-sum1}-(ii). If $H_1$ is not $(P_3+P_6,P_8,S_{1,1,5})$-free, then we apply Theorem~\ref{t-sum2}-(ii). Otherwise we obtain the remaining cases of Case~2.
\end{proof}

\begin{open}\label{o-2}
Determine the (un)boundedness of mim-width of $(H_1,H_2)$-free graphs when
\begin{enumerate}
\item $H_1=P_5$ and $H_2=\overline{S_{1,1,2}}$ or $\overline{K_{1,r}+sP_1}$ for $r\geq 3$ and $s \in \{1,2\}$,
\item $H_1=P_7$ or $S_{h,i,j}$ for $h\leq i\leq j\leq 4$ with $i+j\leq 6\leq h+i+j$ and $H_2=C_3$ or $\paw$, or
\item $H_1=K_{1,3}$ or $S_{1,1,2}$ and $H_2=\hammer$.
\end{enumerate}
\end{open}

\noindent
Finally, we note that Theorems~\ref{t-sum1} and~\ref{t-sum2} cover all pairs $(H_1,H_2)$ with  $|V(H_1)| + |V(H_2)| \le 8$.

\begin{corollary}\label{c-8}
  Let $H_1$ and $H_2$ be graphs with $|V(H_1)| + |V(H_2)| \le 8$.
  Then the pair $(H_1,H_2)$ satisfies \cref{t-sum1} or \cref{t-sum2}.
\end{corollary}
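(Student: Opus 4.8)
The plan is an exhaustive but essentially routine case analysis based on \cref{t-sum1,t-sum2}, reusing the structure already developed in \cref{c-1,c-2}. First I would trim the search space. Assuming without loss of generality that $|V(H_1)|\le|V(H_2)|$, we have $|V(H_1)|\le 4$; if $|V(H_1)|\le 2$ then $H_1\ssi P_4$ and \cref{t-sum1}(i) applies, so we may assume $|V(H_1)|\in\{3,4\}$ and likewise $|V(H_2)|\ge 3$, whence $|V(H_2)|\le 8-|V(H_1)|\le 5$. Up to isomorphism the graphs on three vertices not contained in $P_4$ are just $3P_1$ and $C_3$; on four vertices there are ten (every $4$-vertex graph except $P_4$); and the relevant $H_2$'s on at most five vertices form a short explicit list. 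So only finitely many pairs remain, and each can be tested against the finitely many cases of \cref{t-sum1,t-sum2}.

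Second, I would dispatch the bulk of these pairs via \cref{c-1,c-2}. If $H_1$ and $H_2$ are both forests, \cref{c-1} already shows the pair satisfies \cref{t-sum1} or \cref{t-sum2} unless it is one of its two exceptional families; but both of those have $H_1=2P_2$ and $H_2\in\{K_{1,3}+sP_1:s\ge 1\}\cup\{S_{1,1,2}+sP_1:s\ge 0\}$, so $|V(H_1)|+|V(H_2)|\ge 4+5=9$, a contradiction. Symmetrically, if $H_1$ and $H_2$ are both connected, \cref{c-2} applies unless the pair is one of its three exceptional families, and each of those again needs at least nine vertices (for instance $P_5$ together with $\overline{S_{1,1,2}}$ or $\overline{K_{1,r}+sP_1}$ needs at least ten; the relevant $S_{h,i,j}$, like $P_7$, already has at least seven vertices; and $\hammer$ together with $K_{1,3}$ or $S_{1,1,2}$ needs at least nine). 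Hence only the ``mixed'' pairs remain: those in which at least one of $H_1,H_2$ contains a cycle and at least one is disconnected.

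Third, for each of these finitely many remaining pairs I would verify directly that it satisfies \cref{t-sum1} or \cref{t-sum2}, trying \emph{both} orderings $(H_1,H_2)$ and $(H_2,H_1)$ against every case. The typical mechanisms are: on the unbounded side, a disconnected graph on at most four vertices not contained in $P_4$ contains $3P_1$, $2P_2$ or $C_3$, which --- when paired with a graph containing another such ``seed'', or a longer cycle, or lying in $\mathcal{N}$, or containing $K_4$ --- triggers one of \cref{t-sum2}(iii)--(viii); on the bounded side, small graphs with a cycle often occur as induced subgraphs of complements of small linear forests ($\overline{P_6},\overline{P_2+P_4},\overline{P_1+P_5},\overline{S_{1,1,3}},\overline{S_{1,2,2}}$), placing the pair in \cref{t-sum1}(iii), or inside $K_r\boxminus rP_1$, $K_r\boxminus P_1$ or $K_r\boxminus K_r$ --- using identities such as $C_4=K_2\boxminus K_2$ and containments such as $4P_1\ssi sP_1+P_2$ and $K_3+P_1\ssi K_4\boxminus K_4$ --- placing it in \cref{t-sum1}(xiii)--(xv); the rest fall under \cref{t-sum1}(i), (viii)--(xii). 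The genuinely tight $8$-vertex pairs, such as $(2P_2,K_{1,3})$, are precisely those for which this paper's new results are needed: \cref{t-2p2claw,t-3p1bowtie} on the bounded side and \cref{thmbounded1,thmbounded2,thmbounded3,diamond-5p1,diamond-2p3,diamond-p6,gem-4p1,k5minus-4p1,cb-p8} on the bounded or unbounded side.

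The main obstacle is not conceptual but organisational: the case list, though finite, is long, so a missed pair or a missed case is easy. One must test every pair, in both vertex orderings, against all fifteen cases of \cref{t-sum1} and all eight of \cref{t-sum2}, and recognise a number of non-obvious induced-subgraph containments (for example that every $4$-vertex graph with a cycle lies inside some complemented small forest or inside $K_r\boxminus K_r$). A second, subtler pitfall is that mim-width --- unlike clique-width --- is not preserved by complementation, so a pair and its complement may land in different summary theorems and must be handled independently; in particular the older clique-width dichotomy does not by itself settle all pairs with at most eight vertices, which is exactly why the new bounded and unbounded results of Sections~\ref{sec:bounded}--\ref{sec:unbounded} are required.
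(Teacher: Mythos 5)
Your proposal takes essentially the same route as the paper: its proof likewise opens by observing that every exceptional pair in \cref{c-1} and \cref{c-2} has $|V(H_1)|+|V(H_2)|\ge 9$, thereby reducing to pairs where $H_2$ contains a cycle and at least one of $H_1,H_2$ is disconnected, and then finishes with a finite case analysis split on whether $H_1$ is disconnected or connected. The only difference is one of completeness rather than method: you leave that terminal case analysis as a sketch, whereas the paper executes it in full, but the mechanisms you name (embedding the small cyclic graphs into $K_r\boxminus rP_1$, $K_r\boxminus K_r$ or complements of small linear forests, and invoking the new results such as \cref{t-2p2claw,t-3p1bowtie}) are exactly the ones the paper uses, so the plan is sound.
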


\begin{proof}
If $H_1\notin {\cal S}$ and $H_2\notin {\cal S}$, then we apply Theorem~\ref{t-sum2}-(i). Hence, we may assume without loss of generality that $H_1\in {\cal S}$.
As each of the pairs $(H_1,H_2)$ in Open Problem~\ref{o-1} (Corollary~\ref{c-1})
has $|V(H_1)| + |V(H_2)| \ge 9$, we deduce that $H_2$ contains a cycle.
As each of the pairs $(H_1,H_2)$ in Open Problem~\ref{o-2} (Corollary~\ref{c-2}) has $|V(H_1)| + |V(H_2)| \ge 9$, we deduce that at least one of $H_1$, $H_2$ is disconnected.

\medskip
\noindent
{\bf Case 1.} $H_1$ is disconnected.\\
First suppose  that $H_1$ is $3P_1$-free. Then either $H_1\ssi P_4$ or $H_1=2P_2$. In the first case we apply Theorem~\ref{t-sum1}-(i),
Assume the latter case. Then $H_2$ is $C_4$-free, as otherwise we apply
Theorem~\ref{t-sum2}-(vii). Hence $H_2$ contains a $C_3$.
If $H_2\in \{C_3,K_3+P_1,K_3\boxminus P_1,K_4\}$, then we apply Theorem~\ref{t-sum1}-(xiii). Otherwise, $H_2=\diamond$ and we apply Theorem~\ref{t-sum1}-(iv).

Now suppose $H_1$ contains an induced $3P_1$.
Then $H_2$ must be $3P_1$-free, as otherwise we can apply Theorem~\ref{t-sum2}-(v).
First consider when $|V(H_1)| \le 4$ and $|V(H_2)| \le 4$.
Then $H_1\in \{3P_1,4P_1,2P_1+P_2,P_1+P_3\}$ and $H_2\in \{C_3,C_4,\diamond,\paw,K_3+P_1,K_4\}$.
If $H_1=P_1+P_3$, then we apply Theorem~\ref{t-sum1}-(iii).
So $H_1\in \{3P_1,4P_1,2P_1+P_2\}$.
If $H_2\in \{C_3,C_4,\paw,K_3+P_1,K_4\}$, then we apply Theorem~\ref{t-sum1}-(xiv) or Theorem~\ref{t-sum1}-(xv); whereas if $H_2=\diamond$, then we apply Theorem~\ref{t-sum1}-(iv).

It remains to consider when $H_1 = 3P_1$ and $|V(H_2)| = 5$, or $H_2 = C_3$ and $|V(H_1)| = 5$.
In the latter case, $H_2 = C_3$ and $H_1$ is a linear forest on 5 vertices, in which case we apply \cref{t-sum1}-(ii).
In the former case, if $H_2 \in \{K_3+P_2, \hammer, \overline{P_5}, K_4+ P_1, K_4 \boxminus  P_1, K_5\}$, then $H_2 \ssi K_5 \boxminus K_5$, so we apply \cref{t-sum1}-(xv); whereas if $H_2$ belongs to $\{\overline{S_{1,1,2}}, \overline{P_2+P_3}, \gem, \overline{P_1+2P_2}, \overline{2P_1+P_3}, \overline{3P_1+P_2}\}$, then we apply \cref{t-sum1}-(iii).
The only possibility that remains is $H_2=\bowtie$, for which we apply \cref{t-sum1}-(x).

\medskip
\noindent
{\bf Case 2.} $H_1$ is connected.\\
Then $H_2$ is disconnected.
As $H_2$ contains a cycle, $|V(H_2)| \ge 4$, so $|V(H_1)| \le 4$.
As $H_1$ is connected and belongs to ${\cal S}$, we find that $H_1\ssi P_4$ or $H_1=K_{1,3}$.
In the first case we apply Theorem~\ref{t-sum1}-(i).
In the second case, $|V(H_1)|=4$, so $|V(H_2)| = 4$.
As $H_2$ is disconnected and contains a cycle, $H_2=K_3+P_1$, so we apply Theorem~\ref{t-sum1}-(viii).
\end{proof}

\subsection{When ${\mathbf{H_1}}$ is Complete or Edgeless}

We first consider the (un)boundedness of mim-width for the class of $(K_r,H_2)$-free graphs for a positive integer $r$ and a graph $H_2$. Such classes are interesting for the following reason. For any $H_2$ such that mim-width is bounded and quickly computable for the class of $(K_r,H_2)$-free graphs, \textsc{$k$-Colouring} is polynomial-time solvable for all $k < r$; for example, see \cite{BHP} for the case where $H_2 \ssi sP_1+P_5$.
More generally, for problems having polynomial-time algorithms when mim-width is bounded and quickly computable, we obtain $n^{f(\omega(G))}$-time algorithms, for some function $f$, 
when restricted to $H_2$-free graphs; 
that is, {\XP} algorithms parameterized by $\omega(G)$ (the size of the largest clique in $G$).
Recently, Chudnovsky et al.~\cite{CKPRS20} showed that for $P_5$-free graphs, there exists an $n^{O(\omega(G))}$-time algorithm for \textsc{Max Partial $H$-Colouring}, a problem generalizing \textsc{Maximum Independent Set} and \textsc{Odd Cycle Transversal}, and which is polynomial-time solvable when mim-width is bounded and quickly computable.

For $r \ge 4$, \cref{t-sum1,t-sum2} imply that the mim-width of the class of $(K_r, H_2)$-free graphs is bounded and quickly computable when $H_2 \ssi sP_1 +P_5$ or $tP_2$, 
and unbounded when $H_2 \si K_{1,3}$, $P_2 + P_4$, or $P_6$, or $H_2
\notin \mathcal{S}$.  In the following theorem we prove that all remaining cases belong to one infinite family:
when $H_2 = tP_2+uP_3$ for $u \ge 1$ and $t+u \ge 2$.
Note that Theorem~\ref{t-kr-h2} just concerns the case that $r \ge 4$.  When $r = 3$, further open cases arise; for example, see Open Problem~\ref{o-2}.

\begin{theorem}\label{t-kr-h2}
  Let $H$ be a graph and let $r \geq 4$ be an integer.
  Then exactly one of the following holds:

  \begin{itemize}
    \item $H \ssi sP_1+P_5$ or $tP_2$, and the mim-width of the class of
      $(K_r, H)$-free graphs is bounded and quickly computable;
    \item $H \notin \mathcal{S}$, or $H \si K_{1,3}$, $P_2+P_4$,
      or $P_6$, and the mim-width of the class of $(K_r, H)$-free graphs
      is unbounded; or
    \item $H = tP_2 + uP_3$ where $u \geq 1$ and $t+u \geq 2$.
  \end{itemize}
\end{theorem}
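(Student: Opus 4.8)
The plan is to prove the trichotomy in three pieces: that the first bullet implies bounded and quickly computable mim-width; that the second implies unbounded mim-width; and that every graph $H$ satisfies exactly one of the three bullets. The first piece is immediate from the summary theorem: if $H \ssi sP_1 + P_5$ we invoke \cref{t-sum1}, case~(xii); and if $H \ssi tP_2$ we use that $K_r \ssi K_r \boxminus P_1$ (the clique on the $r$ ``large'' vertices is an induced $K_r$), so $H_1 = K_r \ssi K_r\boxminus P_1$ and \cref{t-sum1}, case~(xiv), applies. For the second piece I would use that, since $r \ge 4$, the graph $K_r$ is connected and contains a cycle (hence $K_r \notin \mathcal{S}$) and contains $C_4$ and $K_4$ as induced subgraphs (hence $K_r \in \mathcal{N}$ and $K_4 \ssi K_r$); together with the elementary observation that $H_0 \ssi H$ implies that the class of $(K_r, H_0)$-free graphs is contained in the class of $(K_r, H)$-free graphs, this reduces all sub-cases of the second bullet to \cref{t-sum2}: case~(i) when $H \notin \mathcal{S}$; case~(iii), applied to the subclass of $(K_r, K_{1,3})$-free graphs and using $K_r \in \mathcal{N}$, when $H \si K_{1,3}$; and case~(viii), using $K_4 \ssi K_r$, when $H \si P_2 + P_4$ or $H \si P_6$.

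The core of the argument is the trichotomy itself, which I would establish by a structural case analysis on $H$. Suppose the second bullet fails; then $H \in \mathcal{S}$, and $H$ contains no induced $K_{1,3}$, $P_2 + P_4$, or $P_6$. Every subdivided claw contains an induced $K_{1,3}$ (the degree-$3$ vertex together with one neighbour on each leg), so no component of $H$ is a subdivided claw and hence $H$ is a linear forest; being $P_6$-free, every component has at most five vertices; and being $(P_2+P_4)$-free, a short argument shows that either $H$ has no component on at least four vertices, or $H$ has exactly one such component and all its other components are single vertices. In the latter case $H \ssi sP_1 + P_5$ and the first bullet holds. In the former case $H = aP_1 + bP_2 + cP_3$; if $c = 0$ then $H \ssi (a+b)P_2$, and if $c \ge 1$ and $b + c = 1$ then $H = aP_1 + P_3 \ssi aP_1 + P_5$, so again the first bullet holds. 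The only remaining possibility is $c \ge 1$ and $b + c \ge 2$, which is the family of the third bullet.

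Finally I would check mutual exclusivity. A graph $H$ in the third bullet has an induced $P_3$, so it is not an induced subgraph of any $tP_2$; it has at least two components containing an edge, and since none of $P_2+P_3$, $2P_3$, $3P_2$ is an induced subgraph of $P_5$, it is not an induced subgraph of any $sP_1 + P_5$; and it lies in $\mathcal{S}$ with all components on at most three vertices, so it contains no induced $K_{1,3}$, $P_2+P_4$, or $P_6$ -- hence it satisfies neither of the other two bullets, and a similar, easier check shows the first and second bullets cannot hold simultaneously. I expect the main obstacle to be the case analysis of the middle paragraph -- in particular, verifying carefully that the $(P_2+P_4)$-free and $P_6$-free restrictions, combined with book-keeping facts such as $sP_1 + tP_2 \ssi (s+t)P_2$ and $sP_1 + P_k \ssi sP_1 + P_5$ for $k \le 5$, leave exactly the three listed families and correctly account for components that are single vertices. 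The hypothesis $r \ge 4$ enters only through $K_4 \ssi K_r$ and $K_r \in \mathcal{N}$; for $r = 3$ the latter fails ($K_3 \notin \mathcal{N}$), which is precisely why \cref{o-2} contains additional open cases.
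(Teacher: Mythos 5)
Your proposal is correct and follows essentially the same route as the paper's proof: it eliminates $H \notin \mathcal{S}$ via Theorem~\ref{t-sum2}(i) and subdivided-claw components via Theorem~\ref{t-sum2}(iii), invokes Theorem~\ref{t-sum1}(xii) and (xiv) for the bounded cases and Theorem~\ref{t-sum2}(viii) for $P_2+P_4$ and $P_6$, and is left with the $tP_2+uP_3$ family. Your explicit mutual-exclusivity check is a welcome addition that the paper omits; note only that in your residual case $H = aP_1+bP_2+cP_3$ with $a \ge 1$ the graph is an induced subgraph of, rather than equal to, some $tP_2+uP_3$ --- an imprecision the paper's own proof shares, since it too concludes only $H \ssi tP_2+uP_3$.
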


\begin{proof}
  By \cref{t-sum2}-(i), if $H \notin \mathcal{S}$, then the mim-width of the class of $(K_r,H)$-free graphs is unbounded.
  So we may assume that $H$ is a forest of paths and subdivided claws.
  By \cref{t-sum2}-(iii), if $H$ contains a $K_{1,3}$, then the mim-width is again unbounded.
  So we may assume that $H$ is a linear forest.
  If $H \ssi sP_1 + P_5$ or $H \ssi tP_2$, then mim-width is bounded
  and quickly computable by parts~(xii) and~(xiv) of \cref{t-sum1}.
  So we may assume that $H$ is a linear forest containing $P_2 + P_3$.
  By \cref{t-sum2}-(viii), we may also assume $H$ contains neither $P_2 + P_4$ nor $P_6$, otherwise the mim-width is again unbounded.
  It now follows that $H \ssi tP_2 + uP_3$ for some $u,t$ such that
  $u \geq 1$ and $t+u \geq 2$.
\end{proof}

\begin{open}\label{o-3}
  For an integer $r \ge 4$, and for each integer $t \geq 0$ and $u \geq 1$ such that $t+u \geq 2$, determine the (un)boundedness of the class of $(K_r, tP_2 + uP_3)$-free graphs.
\end{open}

\noindent
We note that this is also open when $r=3$, except when $u=t=1$ (so $H_2 = P_2+P_3$) in which case we can apply \cref{t-sum1}-(ii).

\medskip
\noindent
We now consider the class of $(rP_1,H_2)$-free graphs, for an integer $r$ and a graph~$H_2$. If the mim-width of such a class of graphs is bounded and quickly computable, we obtain, for many problems, {\XP} algorithms parameterized by $\alpha(G)$ for the class of $H_2$-free graphs, where $\alpha(G)$ is the size of the largest independent set in $G$.
For $r \ge 5$, \cref{t-sum1,t-sum2} imply that the mim-width of the class of $(rP_1, H_2)$-free graphs is bounded and quickly computable when $H_2 \ssi K_t \boxminus K_t$ for some $t$, and unbounded when $H_2$ is not co-bipartite, or $H_2 \si \diamond$.
Below we show that all unresolved cases belong to the infinite family $H_2 =
\overline{K_{s,t}+P_1}$ for $s,t \ge 2$
(we observe that if $s=t=2$, then $H_2 = \bowtie$).
Note that Theorem~\ref{t-rp1-h2} just concerns the case that $r \ge 5$.  When $r \in \{3,4\}$, further open cases arise, and there are more cases where the class of $(rP_1, H)$-free graphs has bounded mim-width, by cases (iii) and (x) of \cref{t-sum1}.

\begin{theorem}\label{t-rp1-h2}
Let $H$ be a graph and let $r \geq 5$ be an integer.
Then exactly one of the following holds:

\begin{itemize}
  \item $H \ssi K_t \boxminus K_t$ for some integer $t \ge 1$, and the
    mim-width of the class of $(rP_1,H)$-free graphs is bounded and
    quickly computable;
  \item $H$ is not co-bipartite or $H \si \diamond$, and the mim-width
    of the class of $(rP_1,H)$-free graphs is unbounded; or
  \item $H = \overline{K_{s,t} + P_1}$ for some $s,t \geq 2$.
\end{itemize}
\end{theorem}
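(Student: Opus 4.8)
The plan is to reduce the classification to a structural description of co-bipartite $\diamond$-free graphs, and then to quote Theorems~\ref{t-sum1} and~\ref{t-sum2}. Note first that $H$ is $\diamond$-free if and only if $G:=\overline H$ is $(2P_1+P_2)$-free, that $H$ is co-bipartite if and only if $G$ is bipartite, and that $H\ssi K_t\boxminus K_t$ for some $t\ge 1$ if and only if $G$ admits a bipartition $(C_1,C_2)$ into independent sets in which every vertex has at most one non-neighbour on the opposite side (equivalently, $G$ is a complete bipartite graph with a matching deleted): indeed in $K_t\boxminus K_t$ the vertex $a_i$ of the first clique has $b_i$ as its unique neighbour in the second clique, so its induced subgraphs are exactly these graphs, and conversely any such $G$ embeds into $K_t\boxminus K_t$ with $t=\max\{|C_1|,|C_2|\}$ after relabelling the matched pairs with equal indices.

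With this in hand, the unbounded case is easy. If $H$ is not co-bipartite then $\overline H$ is not bipartite, hence contains an induced odd cycle $C_{2\ell+1}$, so $H$ contains an induced $\overline{C_{2\ell+1}}$; for $\ell=1$ this is $3P_1$, for $\ell=2$ it is $C_5$. Since $r\ge5\ge3$ we have $rP_1\si 3P_1$, so the class of $(rP_1,H)$-free graphs contains a class as in Theorem~\ref{t-sum2}(v) and has unbounded mim-width. If instead $H\si\diamond$, then every $\diamond$-free graph is $H$-free and every $5P_1$-free graph is $rP_1$-free, so the class of $(rP_1,H)$-free graphs contains the class of $(\diamond,5P_1)$-free graphs, which has unbounded mim-width by Theorem~\ref{t-sum2}(iv). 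For the bounded case, suppose $H\ssi K_t\boxminus K_t$. Because $rP_1\ssi (r-1)P_1+P_2$, every $rP_1$-free graph is $((r-1)P_1+P_2)$-free, and every $H$-free graph is $(K_t\boxminus K_t)$-free; hence the class of $(rP_1,H)$-free graphs is a subclass of the $(K_t\boxminus K_t,(r-1)P_1+P_2)$-free graphs, which has bounded and quickly computable mim-width by Theorem~\ref{t-sum1}(xv) (via Theorem~\ref{thmbounded3}, any branch decomposition works), and both properties pass to subclasses.

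The crux is to show that these two cases together with the family $H=\overline{K_{s,t}+P_1}$, $s,t\ge2$, exhaust all $H$ and are pairwise disjoint. By the reduction above it suffices to prove: if $G$ is bipartite and $(2P_1+P_2)$-free, then either $G$ is a complete bipartite graph minus a matching for some bipartition, or $G\cong K_{s,t}+P_1$ for some $s,t\ge2$. The key observations are that $G$ has at most one isolated vertex and at most two components containing an edge (an edge of one component plus two vertices of the other component(s) would induce $2P_1+P_2$), and that in a \emph{connected} bipartite $(2P_1+P_2)$-free graph each vertex has at most one non-neighbour on the opposite side, since if $a$ had a neighbour $b_0$ and two non-neighbours $b_1,b_2$ on the other side then $\{a,b_0,b_1,b_2\}$ would induce $2P_1+P_2$. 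A short case analysis then finishes: no edges gives $G=\overline{K_{|V(G)|}}$; one edge-component and no isolated vertex is handled by the previous sentence; two edge-components forces both to be $K_2$, so $G=2P_2=\overline{C_4}$; and one edge-component $D$ together with an isolated vertex $w$ forces (using $w$ in the role of the extra vertices) that $D=K_{s,t}$ is complete bipartite, with $K_{s,t}+P_1$ being a complete bipartite graph minus a matching exactly when $\min\{s,t\}=1$ (take the bipartition putting $w$ with the larger side of a star). Every outcome other than $K_{s,t}+P_1$ with $s,t\ge2$ yields $H\ssi K_t\boxminus K_t$. Disjointness then follows: $\overline{K_{s,t}+P_1}$ with $s,t\ge2$ is co-bipartite and $\diamond$-free but is \emph{not} an induced subgraph of any $K_t\boxminus K_t$ (its only two bipartitions both fail the matching condition, as $w$ has at least two non-neighbours across), whereas every graph $\ssi K_t\boxminus K_t$ is co-bipartite and $\diamond$-free ($K_t\boxminus K_t$ being both), and every $H$ that is not co-bipartite or satisfies $H\si\diamond$ is neither co-bipartite-and-$\diamond$-free nor of the form $\overline{K_{s,t}+P_1}$.

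The main obstacle is the structural lemma on bipartite $(2P_1+P_2)$-free graphs. Once the component-count bounds and the four-vertex argument are in place the connected case is immediate, but the disconnected cases need care — in particular one must correctly track the single isolated vertex and verify that $K_{s,t}+P_1$ with $s,t\ge2$ genuinely fails the ``complete bipartite minus a matching'' test under \emph{both} of its bipartitions, which is precisely what separates the third bullet from the first.
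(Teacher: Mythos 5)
Your proof is correct, and it follows the same overall skeleton as the paper's: dispose of the non-co-bipartite case via \cref{t-sum2}-(v), the $\diamond$-containing case via \cref{t-sum2}-(iv), the $K_t \boxminus K_t$ case via \cref{t-sum1}-(xv), and then classify the remaining co-bipartite $\diamond$-free graphs as either induced subgraphs of some $K_t\boxminus K_t$ or equal to $\overline{K_{s,t}+P_1}$ with $s,t\geq 2$. Where you genuinely diverge is in how that final classification is carried out. The paper works directly in $H$: it first dispatches forests, observes that $H$ has no induced cycle of length at least $5$, splits on whether $H$ contains an induced $C_4$ (if so, co-bipartiteness plus $\diamond$-freeness already give $H\ssi K_t\boxminus K_t$), and otherwise uses chordality and a maximum-clique argument to pin down the two-block structure. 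You instead pass to the complement and prove a clean structural lemma about bipartite $(2P_1+P_2)$-free graphs via a component count and the ``at most one non-neighbour on the opposite side'' observation. Your route has two advantages: the connected case is immediate once the four-vertex argument is in place, and you explicitly verify the pairwise disjointness of the three outcomes (in particular that $\overline{K_{s,t}+P_1}$ with $s,t\ge 2$ fails the matching condition under both of its bipartitions), which the paper's proof leaves implicit even though the theorem asserts ``exactly one''. The paper's route stays closer to the language of the other results in the section (cliques, diamonds, chordality) and avoids the bookkeeping over isolated vertices and multiple edge-components that your disconnected cases require. Both are sound; the content is the same characterization of co-bipartite $\diamond$-free graphs seen from the two sides of complementation.
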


\begin{proof}
  By \cref{t-sum2}-(v), if $H$ is not co-bipartite, then the mim-width of the class of $(rP_1,H)$-free graphs is unbounded.
  So we may assume that $H$ is co-bipartite.
  In particular, $H$ is $3P_1$-free, and hence if $H$ is a forest,
  we have that $H \ssi P_4$ or $H \ssi 2P_2$.
  In either case, $H \subseteq K_4 \boxminus K_4$, so the mim-width is bounded and quickly computable by \cref{t-sum1}-(i).
  So we may assume that $H$ contains a cycle.
  In particular, since $H$ is $(C_5,3P_1)$-free, $H$ contains no
  induced cycle of length at least $5$.
  By \cref{t-sum2}-(iv) we may assume that $H$ contains no
  diamond, otherwise the class has unbounded mim-width.
  
  Suppose that $H$ contains an induced $C_4$.
  It follows from $H$ being co-bipartite and diamond-free that
  $H \ssi K_t \boxminus K_t$ for some $t$, in which case mim-width is bounded and
  quickly computable by \cref{t-sum1}-(xv).
  So we may assume that $H$ does not contain an induced $C_4$, and hence
  $H$ is chordal.

  It remains to show that $H$ is a block graph consisting of two
  blocks each being complete and having at least 3 vertices.
  Let $K$ be a maximum clique of $H$.
  So $K$ has size at least 3.
  By \cref{t-sum1}-(xv) we may assume that $V(H) \setminus K \neq \varnothing$.
  Since $H$ is diamond-free and by the maximality of $K$, any vertex of $H$ not in $K$ has at
  most one neighbour in $K$.
  Then since $H$ is $3P_1$-free, $V(H) \setminus K$ is a clique.
  Now, if at most one vertex of $V(H) \setminus K$ has a neighbour
  in $K$, then $H$ is an induced subgraph of $K_r \boxminus K_r$,
  so we can apply \cref{t-sum1}-(xv).
  So we may assume there are distinct vertices
  $u, v \in V(H) \setminus K$ each with a single neighbour in $K$.
  Suppose that $N(u) \cap K = \{k_u\}$ and $N(v) \cap K = \{k_v\}$
  for distinct $k_u, k_v \in K$.
  Since $H$ is $3P_1$-free, $uv \in E(H)$.
  But then $\{u,v,k_u,k_v\}$ induces a $C_4$ in $H$, a contradiction.
  Without loss of generality, $N(V(H) \setminus K) \cap K \subseteq
  \{ k_u \}$.
  Now, since $H$ is diamond-free and $V(H) \setminus K$ is a clique, $V(H) \setminus K$ is complete to
  $\{k_u\}$.
  It follows that $H = \overline{K_{s,t}+P_1}$ for some $s,t \geq
  2$.
\end{proof}

\begin{open}\label{o-4}
  For each integer $r \ge 4$, and for each integer $s,t \geq 2$, determine the (un)boundedness of the class of $(rP_1, \overline{K_{s,t} + P_1})$-free graphs.
\end{open}

\noindent
We note that \cref{o-4} includes the case $r=4$, in contrast to \cref{t-rp1-h2}, since the (un)boundedness of $(4P_1,\overline{K_{s,t} + P_1})$-free graphs is also open for $s\geq 2$ and $t\geq 2$.
In fact, when $r=3$, the (un)boundedness of $(3P_1,\overline{K_{s,t} + P_1})$-free graphs is also open except when $s=t=2$, in which case we have the class of $(3P_1,\bowtie)$-free graphs, and so we can apply \cref{t-3p1bowtie}.

\section{Conclusion}
\label{sec:conclusion}

We extended the toolkit for proving (un)boundedness of mim-width of hereditary graph classes. 
Using the extended toolkit,
we found new classes of $(H_1,H_2)$-free graphs of bounded width and unbounded mim-width.
We showed that the situation for mim-width of hereditary graph classes is different from the situation for clique-width, even 
when only two induced subgraphs $H_1$ and $H_2$ are forbidden.
 For future work, Open Problems~\ref{o-1}--\ref{o-4} deserve attention. 
 In particular, the class of $(P_5,\overline{K_{1,r}+sP_1})$-free graphs, for $r\geq 3$ and $s \in \{1,2\}$ 
 (Case~1 of Open Problem~\ref{o-2}),  is the only remaining infinite family of pairs $(H_1,H_2)$ where both $H_1$ and $H_2$ are connected.
Moreover, for Open Problem~\ref{o-1}, a similar approach to \cref{t-2p2claw} might be conducive to resolving further open cases where $H_1=2P_2$.


\begin{thebibliography}{10}

\bibitem{BV13}
R\'{e}my Belmonte and Martin Vatshelle.
\newblock Graph classes with structured neighborhoods and algorithmic
  applications.
\newblock {\em Theoretical Computer Science}, 511:54--65, 2013.

\bibitem{BK19}
Benjamin Bergougnoux and Mamadou~Moustapha Kant{\'{e}}.
\newblock More applications of the d-neighbor equivalence: Connectivity and
  acyclicity constraints.
\newblock {\em Proc. ESA 2019, LIPIcs}, 144:17:1--17:14, 2019.

\bibitem{BPT20}
Benjamin Bergougnoux, Charis Papadopoulos, and Jan~Arne Telle.
\newblock Node multiway cut and subset feedback vertex set on graphs of bounded
  mim-width.
\newblock {\em Proc. WG 2020, LNCS}, 12301:388--400, 2020.

\bibitem{BDDFK04}
Therese~C. Biedl, Erik~D. Demaine, Christian~A. Duncan, Rudolf Fleischer, and
  Stephen~G. Kobourov.
\newblock Tight bounds on maximal and maximum matchings.
\newblock {\em Discrete Mathematics}, 285:7--15, 2004.

\bibitem{BDJLPZ20}
Alexandre Blanch{\'{e}}, Konrad~K. Dabrowski, Matthew Johnson, Vadim~V. Lozin,
  Dani{\"{e}}l Paulusma, and Viktor Zamaraev.
\newblock Clique-width for graph classes closed under complementation.
\newblock {\em {SIAM} Journal on Discrete Mathematics}, 34:1107--1147, 2020.

\bibitem{BL02}
Rodica Boliac and Vadim~V. Lozin.
\newblock On the clique-width of graphs in hereditary classes.
\newblock {\em Proc. {ISAAC} 2002, LNCS}, 2518:44--54, 2002.

\bibitem{BKM06}
Andreas Brandst{\"a}dt, Tilo Klembt, and Suhail Mahfud.
\newblock ${P}_6$-and triangle-free graphs revisited: structure and bounded
  clique-width.
\newblock {\em Discrete Mathematics \& Theoretical Computer Science},
  8:173--188, 2006.

\bibitem{BLM04}
Andreas Brandst{\"a}dt, Ho{\`a}ng-Oanh Le, and Raffaele Mosca.
\newblock Gem-and co-gem-free graphs have bounded clique-width.
\newblock {\em International Journal of Foundations of Computer Science},
  15:163--185, 2004.

\bibitem{BLM05}
Andreas Brandst{\"{a}}dt, Ho{\`{a}}ng{-}Oanh Le, and Raffaele Mosca.
\newblock Chordal co-gem-free and $({P}_5, \mbox{gem}$)-free graphs have
  bounded clique-width.
\newblock {\em Discrete Applied Mathematics}, 145:232--241, 2005.

\bibitem{BM02}
Andreas Brandst{\"{a}}dt and Suhail Mahfud.
\newblock Maximum weight stable set on graphs without claw and co-claw (and
  similar graph classes) can be solved in linear time.
\newblock {\em Information Processing Letters}, 84:251--259, 2002.

\bibitem{BCM15}
Johann Brault{-}Baron, Florent Capelli, and Stefan Mengel.
\newblock Understanding model counting for beta-acyclic {CNF}-formulas.
\newblock {\em Proc. {STACS} 2015, LIPIcs}, 30:143--156, 2015.

\bibitem{BHP}
Nick Brettell, Jake Horsfield, and Dani{\"{e}}l Paulusma.
\newblock Colouring $(s{P}_1+{P}_5)$-free graphs: a mim-width perspective.
\newblock {\em CoRR}, abs/2004.05022, 2020.

\bibitem{BMP20}
Nick Brettell, Andrea Munaro, and Dani{\"{e}}l Paulusma.
\newblock List $k$-{C}olouring ${P}_{t}$-free graphs with no induced
  $1$-subdivision of ${K}_{1,s}$: a mim-width perspective.
\newblock {\em CoRR}, abs/2008.01590, 2020.

\bibitem{BMP20a}
Nick Brettell, Andrea Munaro, and Dani{\"{e}}l Paulusma.
\newblock Solving problems on generalized convex graphs via mim-width.
\newblock {\em CoRR}, abs/2008.09004, 2020.

\bibitem{BTV11}
Binh{-}Minh Bui{-}Xuan, Jan~Arne Telle, and Martin Vatshelle.
\newblock Boolean-width of graphs.
\newblock {\em Theoretical Computer Science}, 412:5187--5204, 2011.

\bibitem{BTV13}
Binh{-}Minh Bui{-}Xuan, Jan~Arne Telle, and Martin Vatshelle.
\newblock Fast dynamic programming for locally checkable vertex subset and
  vertex partitioning problems.
\newblock {\em Theoretical Computer Science}, 511:66--76, 2013.

\bibitem{CKPRS20}
Maria Chudnovsky, Jason King, Michal Pilipczuk, Pawel Rzazewski, and Sophie
  Spirkl.
\newblock Finding large {$H$}-colorable subgraphs in hereditary graph classes.
\newblock {\em Proc. {ESA} 2020, {LIPIcs}}, 173:35:1--35:17, 2020.

\bibitem{CSZ20}
Maria Chudnovsky, Sophie Spirkl, and Mingxian Zhong.
\newblock List-$3$-coloring ${P}_t$-free graphs with no induced 1-subdivision
  of ${K}_{1,s}$.
\newblock {\em Discrete Mathematics}, 343:112086, 2020.

\bibitem{CO00}
Bruno Courcelle and Stephan Olariu.
\newblock Upper bounds to the clique width of graphs.
\newblock {\em Discrete Applied Mathematics}, 101:77--114, 2000.

\bibitem{CGKP15}
Jean-Fran\c{c}ois Couturier, Petr~A. Golovach, Dieter Kratsch, and Dani\"el
  Paulusma.
\newblock List coloring in the absence of a linear forest.
\newblock {\em Algorithmica}, 71:21--35, 2015.

\bibitem{DDP17}
Konrad~K. Dabrowski, Fran{\c{c}}ois Dross, and Dani{\"{e}}l Paulusma.
\newblock Colouring diamond-free graphs.
\newblock {\em Journal of Computer and System Sciences}, 89:410--431, 2017.

\bibitem{DHP19}
Konrad~K. Dabrowski, Shenwei Huang, and Dani{\"{e}}l Paulusma.
\newblock Bounding clique-width via perfect graphs.
\newblock {\em Journal of Computer and System Sciences}, 104:202--215, 2019.

\bibitem{DJP19}
Konrad~K. Dabrowski, Matthew Johnson, and Dani\"el Paulusma.
\newblock Clique-width for hereditary graph classes.
\newblock {\em London Mathematical Society Lecture Note Series}, 456:1--56,
  2019.

\bibitem{DLP20}
Konrad~K. Dabrowski, Vadim~V. Lozin, and Dani{\"{e}}l Paulusma.
\newblock Clique-width and well-quasi-ordering of triangle-free graph classes.
\newblock {\em Joournal of Computer and System Sciences}, 108:64--91, 2020.

\bibitem{DLRR12}
Konrad~K. Dabrowski, Vadim~V. Lozin, Rajiv Raman, and Bernard Ries.
\newblock Colouring vertices of triangle-free graphs without forests.
\newblock {\em Discrete Mathematics}, 312:1372--1385, 2012.

\bibitem{DP16}
Konrad~K. Dabrowski and Dani{\"{e}}l Paulusma.
\newblock Clique-width of graph classes defined by two forbidden induced
  subgraphs.
\newblock {\em The Computer Journal}, 59:650--666, 2016.

\bibitem{FGR18}
Fedor~V. Fomin, Petr~A. Golovach, and Jean-Florent Raymond.
\newblock On the tractability of optimization problems on {H}-graphs.
\newblock {\em Proc. {ESA} 2018, LIPIcs}, 30:1--14, 2018.

\bibitem{GM20}
Esther Galby and Andrea Munaro.
\newblock Approximating {I}ndependent {S}et and {D}ominating {S}et on {VPG}
  graphs.
\newblock {\em CoRR}, abs/2004.07566, 2020.

\bibitem{GMR20}
Esther Galby, Andrea Munaro, and Bernard Ries.
\newblock Semitotal domination: New hardness results and a polynomial-time
  algorithm for graphs of bounded mim-width.
\newblock {\em Theoretical Computer Science}, 814:28--48, 2020.

\bibitem{GJMP80}
M.~R. Garey, David~S. Johnson, G.~L. Miller, and Christos~H. Papadimitriou.
\newblock The complexity of coloring circular arcs and chords.
\newblock {\em {SIAM} Journal on Matrix Analysis and Applications}, 1:216--227,
  1980.

\bibitem{GHOS08}
Georg Gottlob, Petr Hlin\v{e}n\'y, Sang{-}il Oum, and Detlef Seese.
\newblock Width parameters beyond tree-width and their applications.
\newblock {\em The Computer Journal}, 51:326--362, 2008.

\bibitem{Gu17}
Frank Gurski.
\newblock The behavior of clique-width under graph operations and graph
  transformations.
\newblock {\em Theory of Computing Systems}, 60:346--376, 2017.

\bibitem{GR00}
Venkatesan Guruswami and C.~Pandu Rangan.
\newblock Algorithmic aspects of clique-transversal and clique-independent
  sets.
\newblock {\em Discrete Applied Mathematics}, 100:183--202, 2000.

\bibitem{JKST19}
Lars Jaffke, O{-}joung Kwon, Torstein J.~F. Str{\o}mme, and Jan~Arne Telle.
\newblock Mim-width {III.} {G}raph powers and generalized distance domination
  problems.
\newblock {\em Theoretical Computer Science}, 796:216--236, 2019.

\bibitem{JKT}
Lars Jaffke, O{-}joung Kwon, and Jan~Arne Telle.
\newblock Mim-width {I}. {I}nduced path problems.
\newblock {\em Discrete Applied Mathematics}, 278:153--168, 2020.

\bibitem{JKT20}
Lars Jaffke, O{-}joung Kwon, and Jan~Arne Telle.
\newblock Mim-width {II. The Feedback Vertex Set} problem.
\newblock {\em Algorithmica}, 82:118--145, 2020.

\bibitem{Jo98}
{\"O}jvind Johansson.
\newblock Clique-decomposition, {NLC}-decomposition, and modular decomposition
  - relationships and results for random graphs.
\newblock {\em Congressus Numerantium}, 132:39--60, 1998.

\bibitem{KLM09}
Marcin Kami\'nski, Vadim~V. Lozin, and Martin Milani\v{c}.
\newblock Recent developments on graphs of bounded clique-width.
\newblock {\em Discrete Applied Mathematics}, 157:2747--2761, 2009.

\bibitem{KKST17}
Dong~Yeap Kang, O{-}joung Kwon, Torstein J.~F. Str{\o}mme, and Jan~Arne Telle.
\newblock A width parameter useful for chordal and co-comparability graphs.
\newblock {\em Theoretical Computer Science}, 704:1--17, 2017.

\bibitem{KPSX11}
Iyad Kanj, Michael~J. Pelsmajer, Marcus Schaefer, and Ge~Xia.
\newblock On the induced matching problem.
\newblock {\em Journal of Computer and System Sciences}, 77(6):1058--1070,
  2011.

\bibitem{KR03}
Daniel Kobler and Udi Rotics.
\newblock Edge dominating set and colorings on graphs with fixed clique-width.
\newblock {\em Discrete Applied Mathematics}, 126:197--221, 2003.

\bibitem{LR04}
Vadim~V. Lozin and Dieter Rautenbach.
\newblock On the band-, tree-, and clique-width of graphs with bounded vertex
  degree.
\newblock {\em SIAM Journal on Discrete Mathematics}, 18:195--206, 2004.

\bibitem{Me18}
Stefan Mengel.
\newblock Lower bounds on the mim-width of some graph classes.
\newblock {\em Discrete Applied Mathematics}, 248:28--32, 2018.

\bibitem{OS06}
Sang{-}il Oum and Paul~D. Seymour.
\newblock Approximating clique-width and branch-width.
\newblock {\em Journal of Combinatorial Theory, Series B}, 96:514--528, 2006.

\bibitem{PT97}
Andrzej Proskurowski and Jan~Arne Telle.
\newblock Algorithms for vertex partitioning problems on partial
  \emph{k}-trees.
\newblock {\em {SIAM} Journal on Discrete Mathematics}, 10:529--550, 1997.

\bibitem{Ra08}
Micha\"el Rao.
\newblock Clique-width of graphs defined by one-vertex extensions.
\newblock {\em Discrete Mathematics}, 308:6157--6165, 2008.

\bibitem{RS91}
Neil Robertson and Paul~D. Seymour.
\newblock Graph minors. {X.} {Obstructions} to tree-decomposition.
\newblock {\em Journal of Combinatorial Theory, Series B}, 52:153--190, 1991.

\bibitem{SV16}
Sigve~Hortemo S{\ae}ther and Martin Vatshelle.
\newblock Hardness of computing width parameters based on branch decompositions
  over the vertex set.
\newblock {\em Theoretical Computer Science}, 615:120--125, 2016.

\bibitem{Va12}
Martin Vatshelle.
\newblock {\em New Width Parameters of Graphs}.
\newblock PhD thesis, University of Bergen, 2012.

\end{thebibliography}
\end{document}